\documentclass[11pt, twocolumn]{article}

\usepackage{dsfont}
\usepackage{fullpage}
\usepackage{graphicx}%
\usepackage{multirow}%
\usepackage{amsmath,amssymb,amsfonts}%
\usepackage{amsthm}%
\usepackage{mathtools} 
\usepackage{physics}
\usepackage{hyperref}
\usepackage{mathrsfs}%
\usepackage[title]{appendix}%
\usepackage{xcolor}%
\usepackage{textcomp}%
\usepackage{manyfoot}%
\usepackage{booktabs}%
\usepackage{algorithm}%
\usepackage{algorithmicx}%
\usepackage{algpseudocode}%
\usepackage{listings}%
\usepackage{natbib}
\usepackage{url}
\usepackage{soul}
\usepackage{changes}
\definechangesauthor[name={Ilya Safro}, color=red]{is}
\definechangesauthor[name={Bao}, color=blue]{bb}

\newcommand{\JS}[1]{{\color{teal}#1}}

\newcommand{\new}[1]{{}}

\theoremstyle{thmstyleone}%
\newtheorem{theorem}{Theorem}
\newtheorem{proposition}[theorem]{Proposition}%
\newtheorem{lemma}[theorem]{Lemma}%

\theoremstyle{thmstyletwo}%
\newtheorem{remark}{Remark}%
\newtheorem{cor}{Corollary}
\newcommand{\DQAOA}[0]{\(\mathcal{D}\)-QAOA}

\theoremstyle{thmstylethree}%
\newtheorem{definition}{Definition}%

\title{Learning Cut Distributions with Quantum Optimization}

\author{
Bao Bach$^{1, 2}$, Cameron Ibrahim$^{2}$, Reuben Tate$^{5}$, Jad Salem$^{4}$,  Stephan Eidenbenz$^{5}$, Ilya Safro$^{2,3}$\\[1ex]
$^{1}$Quantum Science and Engineering, University of Delaware, Newark, DE, USA\\
$^{2}$Department of Computer and Information Sciences, University of Delaware, Newark, DE, USA\\
$^{3}$Department of Physics and Astronomy, University of Delaware, Newark, DE, USA\\
$^{4}$Mathematics Department, United States Naval Academy, Annapolis, MD, USA\\
$^{5}$CAI-3: Information Sciences, Los Alamos National Laboratory, Los Alamos, NM, USA
}

\begin{document}

\maketitle

\begin{abstract}
Many combinatorial optimization problems admit a maximin fairness variant, where the aim is to find a distribution over possible solutions which maximizes an expected worst-case outcome. However, the support for an optimal distribution may be exponential, which can be intractable to represent in the worst case. To this end, we propose a quantum based approach to solving distribution optimization problems. Expanding on work analyzing the Dynamical Lie Algebras of the Quantum Approximate Optimization Algorithm (QAOA), we show that with a finite number of layers a QAOA ansatz can be constructed to capture any distribution over bitstrings. We show that the resulting circuit is able to effectively solve the Fair Cut Cover, a fair interpretation of the classical Fractional Cut Cover Problem. In addition, we show that our algorithm is provably better than classical approximations on certain graph structures and empirically outperforms these classical algorithms on tested instances.

\end{abstract}

\section{Introduction}

Classical optimization methods are typically designed to find a single feasible solution that minimizes or maximizes a scalar objective function\cite{boyd2004convex, williamson2011design}. However, in some applications, this is not the only quantity of interest. When fairness, robustness, or coverage matter, the relevant question is often not which one solution is best, but whether the worst-served part of the application domain can be protected reliably \cite{daskin1983maximum, bertsimas2006robust}. This naturally shifts attention from individual solutions to distributions over solutions, which can express variability, balance, and protection against worst-case outcomes in a way that any single solution cannot.

This distributional viewpoint is especially relevant in quantum computation, where the output of an algorithm is inherently probabilistic. Rather than treating this probabilistic output as an obstacle that must be post-processed into one good solution, we treat it as the optimization target itself. This perspective is particularly appropriate for objectives in which performance is determined by the least-favored component and is therefore naturally defined over distributions rather than individual configurations\cite{garcia2021maxmin}.
It also raises a basic question: can quantum optimization directly learn useful structured distributions, rather than merely sample candidate solutions from which one keeps the best observed outcome?

Prior work in quantum generative modeling has explored whether parameterized quantum circuits can learn or engineer useful probability distributions \cite{liu2018differentiable, gili2023quantum}. Here, we consider this question in a different setting by studying
combinatorial optimization problems in which the goal is to learn a distribution that maximizes a worst-case criterion over the underlying discrete structure. Our setting is specifically inspired by the maximin fairness paradigm  \cite{salem2024expected}. Such problems provide a natural setting for comparing classical methods, which typically return one solution together with an approximation guarantee, with quantum methods, which manipulate and measure distributions natively. We make this concrete in the setting of graph cuts, where distributions over cuts offer a simple but expressive model for studying fairness-oriented optimization objectives. Each cut distribution naturally defines, for each edge, a probability that that edge is cut.
These probabilities encode rich structural information, which we can use to explore how robustly different parts of the graph are separated under a given distribution. Therefore, optimizing over cut distributions opens the door to fairness, coverage, and robustness objectives that are uniquely suited to the probabilistic nature of quantum computing.

\begin{figure*}
    \centering
    \includegraphics[width=0.9\linewidth]{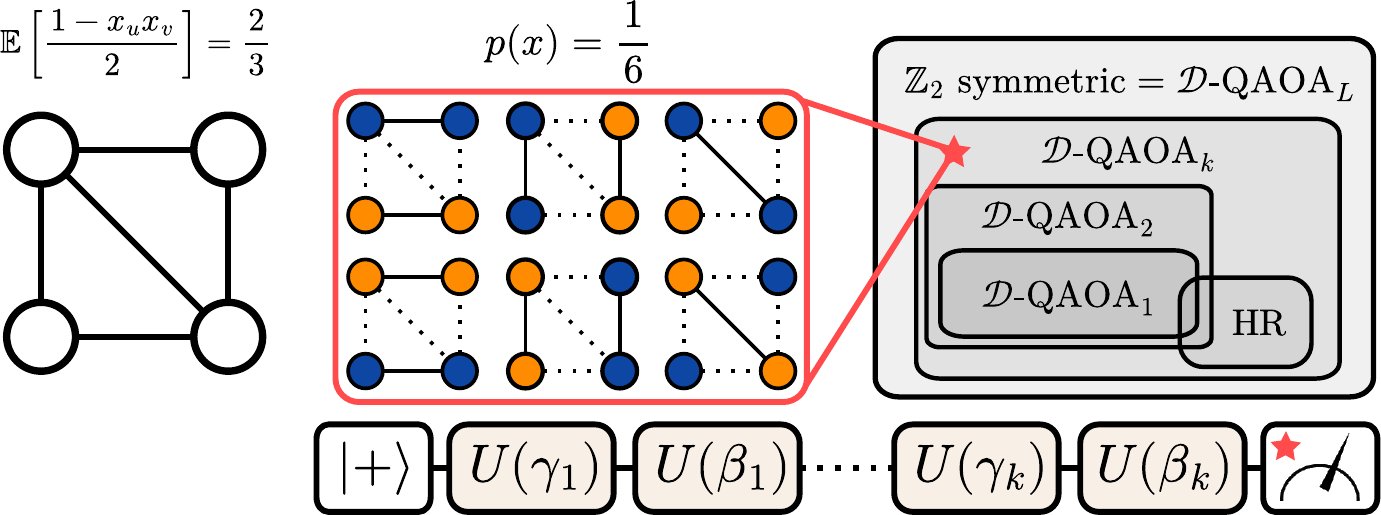}
    \caption{
    Given a graph \(G\) (left), one can construct a \(\mathcal{D}\)-QAOA circuit from which to sample cuts on \(G\) (bottom-right). Each layer added to the circuit expands the space of representable distributions (top-right). The optimal cut distribution for fair cut cover (red) is representable after \(k\) layers, and consists of 6 possible cuts with uniform probability (equal normalized weight among cuts), yielding an edge cut probability of \(2/3\) for each edge. Every \(\mathbb{Z}_2\)-symmetric 
    cut distribution on \(n\) vertices is representable in \(L\) layers, and the space of hyperplane rounding distributions (HR) will also eventually be entirely contained in this space.
    } 
    \label{fig:distributionhierarchy}
\end{figure*}

\paragraph{Graph problems over cut distributions}
We explore this idea through the lens of the Fractional Cut Cover problem \cite{vsamal2015cubical, neto2019fractional}. 
Given all cuts, the  Cut Cover problem asks for the smallest set of cuts (a cut is a subset of edges that partitions graph vertices into two disjoint sets 
) such that every edge $e \in E$ is cut at least once. Its linear relaxation is the  Fractional Cut Cover problem, which assigns non-negative weights to all cuts so that the total weight is minimized and each edge is covered by cuts with a total weight of at least one \cite{neto2019fractional}. An example of this is given in Figure \ref{fig:distributionhierarchy}. Fractional Cut Cover is closely related to the geometry of the cut cone and has recently been shown to be APX-Complete by \cite{benedetto2024generalized}. The authors provide an algorithm based on semidefinite programming (SDP) and hyperplane rounding, which achieves the best possible polynomial-time approximation ratio under the Unique Games Conjecture \cite{benedetto2025primal}.

A solution to Fractional Cut Cover can be interpreted probabilistically: after normalizing the cut weights, it becomes a distribution over cuts. Under this interpretation, the value of the solution is determined by the edge that is least likely to be cut, so the problem becomes one of maximizing the minimum edge-cut probability. This recasts Fractional Cut Cover as a maximin optimization problem over distributions rather than individual cuts, making it a natural model for fairness-oriented objectives in which worst-case performance is the quantity of interest. It also makes the problem an attractive testbed for our setting, because it combines a clear probabilistic interpretation with strong geometric and algorithmic structure as it is both structurally rich (via its connections to Grothendieck symmetric inequalities \cite{friedland2020symmetric} and theta functions \cite{karger1998approximate}) and algorithmically sharp (via SDP with randomized rounding, and Unique Game Conjecture tight hardness results \cite{benedetto2024generalized}).



\paragraph{Summary of the results}
We study the Fractional Cut Cover problem through the lens of cut distributions and establish a sharp contrast between the best known polynomial-time classical approach and a quantum variational algorithm. On the classical side, we analyze the semidefinite programming (SDP) relaxation together with hyperplane rounding, clarifying the structural limitations of this pipeline for optimizing the fair-cut objective. On the quantum side, we formulate the problem as a distribution-learning task with QAOA ansatz, where the resulting quantum state $\ket{\psi(\beta,\gamma)}$ induces a cut distribution chosen to maximize the minimum edge cut probability.


Our main findings show that the quantum approach is not just an alternativeview, but provably results in better performance on important graph families. In particular, for highly symmetric instances such as the complete graph family, we prove that SDP combined with randomized hyperplane rounding cannot reproduce the optimal Fractional Cut Cover distribution, whereas given enough layers, $\mathcal{D}$-QAOA (introduced later) can realize any distribution exactly. This yields a concrete quantum–classical separation at the level of cut distributions, rather than only objective values, and highlights a qualitative advantage of quantum circuits in navigating the cut cone.

More broadly, our results position near-term quantum optimization as a natural framework for learning combinatorial distributions that are difficult to access through classical relaxations and rounding alone. Rather than serving only as approximate solvers for scalar objectives, variational quantum circuits can act as expressive generators of optimal discrete distributions, potentially opening a new route to provable quantum advantage in optimization.


\section{Result}

Let $G=(V, E)$ be an undirected graph 
. A cut over graph $G$ can be represented by a spin assignment on nodes $x \in \{\pm1\}^{\abs{V}}$. We define the cuts indicator matrix $Y \in \{0, 1\}^{\abs{E} \times 2^{\abs{V}}}$ where, for each edge $e=(u,v)$ and cut $x$,
\begin{align}
    Y_{ex} = \mathds{1}[x_{u} \neq x_{v}] =\frac{1 - x_{u}x_{v}}{2}.
\end{align}
Here $\mathds{1}[x_{u} \neq x_{v}]$ denotes the indicator if edge $(u, v)$ is cut by cut $x$. 
The \emph{Cut Cover Problem} is to find the minimum number of cuts in a graph such that each edge spans at least one of the cuts \cite{cutcover, loulou1992minimal}. 
Its linear relaxation, \emph{Fractional Cut Cover} \cite{neto2019fractional}, is the problem on graph $G$ of assigning a non-negative weight 
$w_x \geq 0$ to each cut $x$ 
so that every edge is covered by cuts with a total weight of at least one, while minimizing the total weight of all cuts:
\begin{align}
    \label{eq:fcc}
    \eta(G) = \min_{w \geq 0} \langle w, \mathbf{1} \rangle\quad \text{s.t.}\quad Yw \geq \mathbf{1}_{E}.
\end{align}
Normalizing $w$ produces a probability distribution $p_x = w_x/\langle w, \mathbf{1} \rangle$ on all cuts, 
which we refer to as a \emph{cut distribution}. For any cut distribution $p$ and edge $e$, the \emph{edge cut probability} is $\mathrm{Pr}_{e}(p) = \langle Y_{e},p\rangle$.
This leads to the maximin \emph{fair-cut cover} objective: learn a cut distribution that maximizes the
smallest edge cut probability,
\begin{align}
    \label{eq:fcc_fair}
    \bar{\eta}(G) = \max_{p}\min_{e \in E} \langle Y_{e},p\rangle = \frac{1}{\eta(G)}.
\end{align}

As such, Fractional Cut Cover and Fair Cut Cover are equivalent, as their objective are reciprocal (see Appendix~\ref{appendix:equiv}). 
From Equation (\ref{eq:fcc_fair}), it can be observed that the Fair Cut Cover problem is monotonic, where  $\bar{\eta}(G) \leq \bar{\eta}(H)$ for every subgraph $H \subseteq G$. 

The max-min viewpoint is a special case of the distributional fairness objectives framework studied in~\cite{salem2024expected, hojny2025}, which is introduced in the Method Section.
It has recently been shown in \cite{benedetto2024generalized, benedetto2025primal} that Fractional Cut Cover is APX-Complete. Furthermore, under the Unique Games Conjecture, the SDP plus random-hyperplane rounding used by Goemans-Williamson \cite{goemans1995improved, khot2007optimal} for MaxCut, also yields the best possible polynomial-time approximation algorithm for Fractional Cut Cover. Consequently, it is shown in  \cite{benedetto2025primal} how SDP with randomized rounding gives the Goemans-Williamson coefficient approximation ratio, $\alpha_{\mathrm{GW}} \approx 0.878$, for Fractional Cut Cover:
\begin{align}
    \frac{1}{\alpha_{\mathrm{GW}}} \mathrm{SDP}_{\mathrm{HR}}(G) \leq \bar{\eta}(G) \leq \mathrm{SDP}(G), 
\end{align}
where $\mathrm{SDP}_{\mathrm{HR}}(G)$ denotes the optimal value yielded by SDP + Hyperplane Rounding and $\mathrm{SDP}(G)$ denotes the optimal value of SDP.


Details on the formulation and characteristics of the SDP are mentioned in Appendix \ref{appendix:SDP}. Nevertheless, we show that the space of cut distributions derivable from the randomized hyperplane rounding approach fails to capture all possible cut distributions, for example, the optimal fair-cut distribution of the complete graph $K_{n}$ where $n \geq 4$, as discussed in Appendix~\ref{appendix:sdp_rounding}
. This drawback is stated in Theorem \ref{thm:dist_subset}.

\paragraph{Quantum optimization for Fractional Cut Cover}
The Fractional 
Cut Cover can be represented as the Fair Cut Cover problem on a graph $G(V, E)$, where it seeks a distribution over cuts, $p$, that maximizes the \emph{smallest} edge cut probability as shown in Equation (\ref{eq:fcc_fair}). Any quantum circuits, representing quantum state $\ket{\psi}$, natively induce such distributions. Concretely, measuring each qubit of state $\ket{\psi}$ in the computational ($Z$) basis produces a random spin assignment $x\in\{\pm1\}^{|V|}$ with probability $\mathrm{Pr}_{x}(\ket{\psi}) = \abs{\braket{x}{\psi}}^{2}$.
For any edge $e = (u, v) \in E$, the probability that $e$ is cut under this induced distribution is 
\begin{equation}
    \begin{aligned}
        \mathrm{Pr}_{e}(\ket{\psi}) &= \sum_{x\in\{\pm1\}^{|V|}} \mathrm{Pr}_{x}(\ket{\psi})\frac{1-x_u x_v}{2}\\
        &=\frac{1 - \bra{\psi}Z_{u}Z_{v}\ket{\psi}}{2}
    \end{aligned}
\end{equation}
Therefore, for a variational family of quantum states  $\ket{\psi(\theta)}$ generated from an ansatz, the Fair Cut Cover problem can be framed as a variational quantum optimization problem,
\begin{align}
    \label{eq:quantum_fair_objective}
    \mathrm{Q}(G) = \max_{\theta}\min_{e \in E} \frac{1 - \bra{\psi(\theta)}Z_{u}Z_{v}\ket{\psi(\theta)}}{2}.
\end{align}

This formulation is closely related to the Quantum Approximate Optimization Algorithm (QAOA) \cite{farhi2014quantum} for the MaxCut problem. In standard QAOA for MaxCut, the objective is to find the distribution over cuts that maximizes the \emph{average} edge cut probability, $\max_{\theta} \sum_{(u, v) \in E} [1-\bra{\psi(\theta)}Z_{u}Z_{v}\ket{\psi(\theta)}]/2$ where $\ket{\psi(\theta)}$ is the quantum state produced by QAOA.
Regardless of the similarities, the QAOA scheme is not directly applicable to the Fractional Cut Cover due to the non-linearity of the min function. However, the QAOA ansatz remains a natural variational family for this problem, with the objective changed from an average to a max-min criterion.
\\
\emph{Standard QAOA ansatz:} For edge-transitive graphs, \cite{shaydulin2021exploiting} shows that for every edge $(u,v)$, the expectation value $\langle Z_{u}Z_{v} \rangle$ are equal, regardless of number of layers. 
This implies that MaxCut and Fair Cut Cover naturally share the same objective $\max_{\ket{\psi}} (1-\bra{\psi}Z_{u}Z_{v}\ket{\psi})/2$ for these edge-transitive graphs. 
This relationship effectively reflects the relationship between MaxCut and Fractional Cut Cover on edge-transitive graphs through both the SDP solver and the quantum solver, as discussed in Appendix \ref{appendix:SDP}.\\

\emph{Multi-angle QAOA:} For a general case, it is more effective to consider the multi-angle QAOA (ma-QAOA) ansatz \cite{herrman2022multi} due to its expressibility. To show the connection between reachable states (circuit expressibility) and the obtained distribution, we note that the Trace distance between unitaries is the generalization of the total variation distance between distributions, denoted as \(\norm{p - q}_{\mathrm{TV}}\), through contractivity \cite{nielsen2010quantum}.

The expressability of ma-QAOA is such that with sufficient layers, every state of the \(+1\) eigenspace of \(X^{\otimes \lvert V \rvert}\) is reachable by an ma-QAOA circuit initialized on a majority of graph structures \cite{kazi2025analyzing}. 
This has significant implications for the space of \(\mathbb{Z}_2\)-symmetric cut distributions, that is the set of cut distributions \(p\) such that \(\forall x \in \{\pm 1 \}^n, p(x) = p(-x).\) In  Appendix \ref{appendix:ma_QAOA_prop}, we outline how any \(\mathbb{Z}_2\)-symmetric cut distribution can be embedded in this eigenspace. 

It is desirable for our QAOA ansatz to achieve full expressability for an arbitrary input graph, so we introduce the following minor modification to ma-QAOA, which we call distributional QAOA or \(D\)-QAOA.
\begin{definition}
    \label{def:d_QAOA}
    A \(\mathcal{D}\)-QAOA circuit for a graph \(G = (V, E)\) is defined as an ma-QAOA circuit on G with the following modification:
    \begin{enumerate}
        \item If \(G\) is disconnected or if \(G\) is a path/cycle with \(\lvert V \rvert\geq 4,\) add an ancilla qubit \(A\) and add \(X_{A}\) to the mixer unitary. Choose one vertex \(v\) with maximal degree for each connected subgraph and add \(Z_vZ_{A}\) to the phase separator unitary.
        \item If \(G\) is disconnected and \(\lvert V \rvert \leq 4,\) do (1), then add an additional ancilla \(B\) and add \(Z_{A}Z_{B}\) to the phase separator unitary and add \(X_{B}\) to the mixer unitary.
    \end{enumerate}
    
\end{definition}
Notably, for a majority of graph structures, \(\mathcal{D}\)-QAOA is equivalent to ma-QAOA. Furthermore, the fair cut solution of the original graph can be obtained by tracing out the information of ancillas, added for $\mathcal{D}$-QAOA (this is similar to leaf reduction in graph cut). We show in Appendix \ref{appendix:ma_QAOA_prop} that \(D\)-QAOA produces an ansatz with a sufficient amount of expressability to introduce the following corollary based on the results of \cite{kazi2025analyzing}.



\begin{cor}
\label{thm:sufficient_p_covering_distributions}
    For every \(G=(V,E)\) and every $\epsilon \geq 0$, there exists a depth $k^\prime$ such that for every \(k \geq k^\prime\) and every $\mathbb Z_2$-symmetric cut distribution $p$ on $\{\pm1\}^{\abs{V}}$, there exists a $k$-layer \(\mathcal{D}\)-QAOA state $\tilde{\rho}$ such that for any \(x \in \{\pm1\}^{\abs{V}}\),
\begin{align}
    \tilde p(x) &\coloneq \Tr{\ketbra{x}{x}\tilde{\rho}}, & \norm{\tilde p-p}_{\mathrm{TV}} &\leq \epsilon.
\end{align}
\end{cor}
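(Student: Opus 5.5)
The plan is to reduce the statement to a state-reachability question for the \(\mathcal{D}\)-QAOA circuit and then invoke the controllability results of \cite{kazi2025analyzing}.

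\textbf{Step 1: embed the distribution in a state.} Given a \(\mathbb{Z}_2\)-symmetric \(p\) on \(\{\pm1\}^{\abs{V}}\), define
\[
\ket{\psi_p}=\sum_x \sqrt{p(x)}\ket{x}.
\]
Since \(p(x)=p(-x)\), we have \(X^{\otimes\abs{V}}\ket{\psi_p}=\ket{\psi_p}\), so \(\ket{\psi_p}\) lies in the \(+1\) eigenspace of \(X^{\otimes \abs{V}}\). Measuring it in the computational basis returns exactly \(p\).

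\textbf{Step 1a: handle ancillas.} When Definition~\ref{def:d_QAOA} adds ancillas, extend \(\ket{\psi_p}\) to a state on the enlarged register. For a single ancilla \(A\), take
\[
\tfrac{1}{\sqrt2}\bigl(\ket{\psi_p}\ket{0}_A+X^{\otimes\abs{V}}\ket{\psi_p}\ket{1}_A\bigr).
\]
This state is \(+1\) for the global \(X^{\otimes(\abs{V}+1)}\). Tracing out \(A\) leaves a mixture of \(p\) and its flip, which is again \(p\) by symmetry. The same construction handles ancilla \(B\). The reduced state \(\tilde\rho\) is this partial trace, so it induces exactly \(p\) on the original vertices.

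\textbf{Step 2: controllability.} Next, show that the \(\mathcal{D}\)-QAOA generators \(\{X_v\}\cup\{Z_uZ_v:(u,v)\in E'\}\) on the augmented graph \(G'\) have a dynamical Lie algebra that acts controllably on the \(X^{\otimes}\)-symmetric subspace. For most graphs this is the theorem of \cite{kazi2025analyzing} for ma-QAOA, applied to \(G\) directly. The ancilla modifications in Definition~\ref{def:d_QAOA} are designed precisely to move the exceptional families into the good class:
\begin{itemize}
\item disconnected graphs become connected through \(A\);
\item paths and cycles with \(\abs{V}\ge4\) gain a branch vertex;
\item very small disconnected graphs gain one more leaf via \(B\).
\end{itemize}
The main obstacle is checking case by case that \(G'\) falls under the hypotheses of \cite{kazi2025analyzing}, for example connected and not a path or cycle. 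I would verify this first by listing the exceptional graphs of that result and confirming that each construction leaves them. For the smallest cases, if needed, I would compute the Lie closure directly.

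\textbf{Step 3: from controllability to finite depth.} Given full controllability of the compact connected group \(e^{\mathfrak g}\) on the invariant subspace, standard Lie-theoretic arguments apply. A product of exponentials of the generators with a finite number \(k_0\) of factors is surjective onto the group. This holds without approximation, so \(\epsilon=0\) is allowed. The group acts transitively on unit vectors of the symmetric subspace, and the initial state \(\ket{+}^{\otimes}\) lies in that subspace. Hence some \(k_0\)-layer circuit prepares the target of Step~1 exactly.

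For \(k\ge k_0\), set the extra layers' angles to zero, so the same state is reachable at every larger depth. The bound \(k'=k_0\) is uniform in \(p\) because the surjectivity depth depends only on the generators, not on the target.

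\textbf{Step 4: pass to distributions.} Finally, apply contractivity of trace distance under the measurement channel followed by the partial trace over ancillas. This gives \(\norm{\tilde p-p}_{\mathrm{TV}}\le \frac12\norm{\tilde\rho-\rho_p}_1\), which here is \(0\le\epsilon\). For approximate versions, the same inequality transfers any state-preparation error directly to a TV bound.
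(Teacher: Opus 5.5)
The gap is in Steps 2--3. You assume that once $G'$ is connected and not a path or cycle, the reachable group acts transitively on all unit vectors of the $+1$ eigenspace $\mathcal{H}_+$ of $X^{\otimes n}$. The classification of \cite{kazi2025analyzing} does not give this, and the paper's proof does not use it that way. For bipartite graphs whose colour classes $A,B$ both have even size, the algebra restricted to $\mathcal{H}_+$ is only $\mathfrak{so}(m)$, with $m=2^{n-1}$ (the paper's Case~1). $SO(m)$ is not transitive on the complex unit sphere. Definition~\ref{def:d_QAOA} leaves such graphs unmodified (e.g.\ $K_{2,4}$), so checking that $G'$ is connected and not a path or cycle does nothing for them. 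The paper splits this case off. It argues that $\ket{\psi_p}$ has nonnegative real amplitudes and therefore lies on a real sphere on which $SO(m)$ is transitive.

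That repair fails as well, and in fact the claim itself is false in this case. Let $K$ be complex conjugation in the computational basis. The real structure the circuit preserves is $J=\prod_{v\in A}Z_v\prod_{v\in B}Y_v\,K$, not $K$. One checks that $JX_vJ^{-1}=-X_v$ for all $v$, and $JZ_uZ_vJ^{-1}=-Z_uZ_v$ for $u\in A$, $v\in B$. Hence $J$ commutes with every $e^{-i\beta X_v}$ and every $e^{-i\gamma Z_uZ_v}$. Since $|B|$ is even, $J^2=(-1)^{|B|}=1$.

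It follows that $Q(\psi)=\langle J\psi|\psi\rangle$ is conserved by every ma-QAOA circuit. It vanishes initially, because $J\ket{+}^{\otimes n}\propto\ket{-}^{\otimes n}$. Now let $p$ put mass $1/2$ on the empty cut and $1/2$ on the cut $(A,B)$. Any state in $\mathcal{H}_+$ with this distribution has amplitudes $\tfrac12 e^{i\alpha}$ on $\pm\mathbf{1}$ and $\tfrac12 e^{i\beta}$ on $\pm(\mathbf{1}_A-\mathbf{1}_B)$. A direct computation gives $|Q|=1$. If the distribution is only $\epsilon$-close in TV, one still gets $|Q|\ge 1-5\epsilon$. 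So $p$ is unreachable at every depth, and no transitivity argument can close your Step~3 here.

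The statement needs a modified construction. One option is an extra ancilla making $|V'|$ odd; then $J$ swaps $\mathcal{H}_\pm$ and the invariant disappears. Apart from this, your route matches the paper's: reduce to reachability of $\ket{\psi_p}$ in $\mathcal{H}_+$, invoke the DLA results, and get a uniform finite depth from the order-of-generation argument. Your explicit handling of ancilla marginals, zero-angle padding for $k\ge k'$, and the TV contraction is more careful than the paper's.
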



We denote the Fair Cut Cover value of the \(k\)-layer \DQAOA{} circuit with parameters \(\gamma, \beta\) as \(Q_k(G, \gamma, \beta)\) and \(Q_k(G) = \max_{\gamma, \beta} Q_k(G, \gamma, \beta)\). We then define \(Q_k^{\mathrm{std}}(G)\) analogously, but allow for only a single unique parameter for each layer as with standard QAOA. We can define a conditional monotonicity property for \DQAOA{} as follows.

\begin{cor} 
    \label{cor:sufficient_p_covering_distributions}
    For any graph \(G\), there exists a \(k^\prime\) such that for every \(k \geq k^\prime\) and every subgraph \(H \subseteq G\) we have
    \begin{align}
        \mathrm{Q}_{k}(G) \leq \mathrm{Q}_{k}(H).
    \end{align}
\end{cor}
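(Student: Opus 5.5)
The plan is to sandwich both sides between the classical fair-cut values, using the monotonicity $\bar{\eta}(G)\leq\bar{\eta}(H)$ noted after Equation~(\ref{eq:fcc_fair}). The chain to establish is
\begin{align*}
\mathrm{Q}_k(G)\;\leq\;\bar{\eta}(G)\;\leq\;\bar{\eta}(H)\;\leq\;\mathrm{Q}_k(H),
\end{align*}
with the last inequality holding once $k$ is large enough.

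\emph{First inequality.} This holds for every $k$ and needs no depth assumption. For any parameters $(\gamma,\beta)$, measure the $k$-layer \DQAOA{} state for $G$ in the computational basis and trace out the ancillas of Definition~\ref{def:d_QAOA}. This gives a genuine probability distribution $p$ on $\{\pm1\}^{\abs{V(G)}}$. The edge cut probabilities of the state are exactly $\langle Y_e,p\rangle$, since $\frac{1-\langle Z_uZ_v\rangle}{2}$ depends only on the marginal on the original qubits. Hence $\min_e \langle Y_e,p\rangle\leq \bar{\eta}(G)$, and taking the supremum over parameters gives $\mathrm{Q}_k(G)\leq\bar{\eta}(G)$.

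\emph{Second inequality.} This is the stated monotonicity. If $V(H)\subsetneq V(G)$, I would check it explicitly: marginalize an optimal distribution for $G$ onto $V(H)$. Each edge of $H$ keeps its cut probability, and the minimum is taken over a subset $E(H)\subseteq E(G)$, so $\bar{\eta}(H)\geq\bar{\eta}(G)$.

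\emph{Third inequality.} This is the step that uses depth. First I need an optimal fair-cut distribution for $H$ that is $\mathbb{Z}_2$-symmetric. Take any optimal $p^*$ and symmetrize it, $p^{s}(x)=\tfrac12\big(p^*(x)+p^*(-x)\big)$. Since $Y_{ex}=Y_{e(-x)}$, every edge cut probability is unchanged, so $p^{s}$ is still optimal. Now apply Corollary~\ref{thm:sufficient_p_covering_distributions} to $H$ with $\epsilon=0$ (the statement permits $\epsilon\geq 0$). This gives a depth $k'_H$ such that for every $k\geq k'_H$ some $k$-layer \DQAOA{} state for $H$ induces exactly $p^{s}$. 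Therefore $\mathrm{Q}_k(H)\geq\min_{e\in E(H)}\langle Y_e,p^{s}\rangle=\bar{\eta}(H)$. Note that this also shows the maximum defining $\mathrm{Q}_k(H)$ is attained. Because $G$ has only finitely many subgraphs, setting $k'=\max_{H\subseteq G}k'_H$ gives a single depth that works uniformly, which completes the chain.

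The main obstacle is the exact ($\epsilon=0$) reachability used in the third inequality. With only $\epsilon>0$, the total-variation bound shifts each edge cut probability by at most $\epsilon$. That yields only $\mathrm{Q}_k(H)\geq\bar{\eta}(H)-\epsilon$, which does not give the non-strict inequality unless $\bar{\eta}(G)<\bar{\eta}(H)$. So I would lean on the exact version of Corollary~\ref{thm:sufficient_p_covering_distributions}, which comes from the fact that the \DQAOA{} Lie algebra generates the full unitary group on the $X^{\otimes n}$-symmetric subspace, so that every target state is hit exactly at finite depth. A secondary check is that for $H$, the \DQAOA{} construction may add ancillas when $H$ is disconnected or a path or cycle. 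Tracing these out must leave exactly the claimed marginal on $V(H)$, which is how the corollary's $\tilde{p}$ is defined.
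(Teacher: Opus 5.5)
Your proof is correct, but it takes a different route from the paper's. The paper never passes through the classical optimum. It uses Lemma~\ref{lm:rho_mono}: for any $k$-layer state $\rho$ for $G$, $F_G(\rho)\le F_H(\rho_H)$, because the partial trace leaves $\langle Z_uZ_v\rangle$ unchanged on $E_H$ and the minimum is over fewer edges. Then, for $k\ge\max\{k_G,k_H\}$, it matches each reduced state $\rho_H$ with an $H$-circuit state $\hat\rho$ that has the same cut distribution. This gives $\sup_{\rho}F_G(\rho)\le\sup_{\rho_H}F_H(\rho_H)=\sup_{\hat\rho}F_H(\hat\rho)$.

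Your sandwich $\mathrm{Q}_k(G)\le\bar{\eta}(G)\le\bar{\eta}(H)\le\mathrm{Q}_k(H)$ instead reduces the claim to the classical monotonicity of $\bar{\eta}$. Along the way it proves the stronger fact that $\mathrm{Q}_k(H)=\bar{\eta}(H)$ for every $k\ge k'$: at sufficient depth, \DQAOA{} solves Fair Cut Cover exactly on every subgraph.

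Your version is also tidier in two places:
\begin{itemize}
\item The bound for $G$ needs no depth assumption, so the paper's $k_G$ is superfluous.
\item You make the uniformity of $k'$ over all $H\subseteq G$ explicit via the finiteness of the subgraph family; the paper's proof leaves this implicit.
\end{itemize}

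The trade-off in ingredients is as follows. You need an optimal LP solution to exist and the symmetrization step. The paper needs, implicitly, that the marginal on $V_H$ of a state in the $+1$ eigenspace of $X^{\otimes n}$ is $\mathbb{Z}_2$-symmetric, so that Corollary~\ref{thm:sufficient_p_covering_distributions} applies to it.

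The ``main obstacle'' you flag is shared by both proofs. The paper's step $F_H(\rho_H)=F_H(\hat\rho)$ is exactly the $\epsilon=0$ case of Corollary~\ref{thm:sufficient_p_covering_distributions}. So both arguments stand or fall with that exact finite-depth reachability claim.
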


Moreover, for problems that respect $\mathbb{Z}_{2}$ symmetry, such as Fair Cut Cover or MaxCut, any cut distribution can be represented by a $\mathbb{Z}_{2}$ symmetric cut distribution; this is discussed in Appendix \ref{appendix:equiv}.

\paragraph{Separation gap between quantum and hyperplane rounding}
As the reciprocal of Fractional Cut Cover, we still use SDP plus randomized hyperplane rounding for Fair Cut Cover, where details of the concrete SDP formulation are discussed in Appendix \ref{appendix:SDP}.\\
We first observe that SDP plus randomized hyperplane rounding is actually insufficient for the Fair Cut Cover problem.
\begin{theorem}
    \label{thm:dist_subset}
    Let \(\mathcal{D}\) be the space of $\mathbb{Z}_{2}$-symmetric cut distributions and \(\mathcal{D}_{\mathrm{hr}}\) be the space of cut distributions corresponding to hyperplane rounding. Then \(
        \mathcal{D}_{\mathrm{hr}} \subsetneq \mathcal{D}.\)
\end{theorem}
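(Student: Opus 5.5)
We prove the inclusion first and then the strictness.

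\emph{Inclusion.} A hyperplane rounding distribution is specified by unit vectors \(v_1,\dots,v_n \in \mathbb{R}^d\), for instance an SDP solution. One draws a Gaussian \(r\) and outputs \(x_i = \mathrm{sgn}(\langle r, v_i\rangle)\). Since \(r\) and \(-r\) have the same law, and \(\mathrm{sgn}(\langle -r, v_i\rangle) = -x_i\) almost surely, we get \(p(x) = p(-x)\) for every \(x\). Hence \(\mathcal{D}_{\mathrm{hr}} \subseteq \mathcal{D}\).

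\emph{Strictness: the obstruction.} We exhibit a \(\mathbb{Z}_2\)-symmetric distribution that no choice of vectors realizes. The key fact about hyperplane rounding is that \(\Pr[x_i \neq x_j] = \theta_{ij}/\pi\), where \(\theta_{ij} = \arccos\langle v_i, v_j\rangle\). Pairwise marginals therefore determine the angles, and hence the Gram matrix, which in turn determines the whole distribution up to rotation. The constraint we exploit is that the matrix \(\cos(\pi q_{ij})\), built from the pairwise cut probabilities \(q_{ij}\), must be positive semidefinite.

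\emph{Strictness: the witness.} Take \(n = 4\), i.e.\ \(K_4\), and let \(p\) be uniform over the three balanced cuts \(\{12|34\}, \{13|24\}, \{14|23\}\) together with their negations, which is \(\mathbb{Z}_2\)-symmetric. Each edge is cut by exactly two of the three cuts, so \(q_{ij} = 2/3\) for all pairs. Any hyperplane rounding realizing \(p\) would need \(\langle v_i, v_j\rangle = \cos(2\pi/3) = -1/2\) for all \(i \neq j\). The Gram matrix would then be \(\tfrac{3}{2}I - \tfrac{1}{2}J\), whose eigenvalue on the all-ones vector is \(1 - \tfrac{3}{2} = -1/2 < 0\). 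This is a contradiction, so \(p \notin \mathcal{D}_{\mathrm{hr}}\). The same construction generalizes to \(K_n\) with \(n \ge 4\): the optimal fair-cut distribution uses balanced cuts, with edge probability \(\lceil n/2\rceil\lfloor n/2\rfloor/\binom{n}{2} > \tfrac{1}{2} + \tfrac{1}{2(n-1)}\)-ish, and the required vectors would have \(\sum_{i<j}\langle v_i, v_j\rangle < -n/2\), which violates \(\|\sum_i v_i\|^2 \ge 0\). For the theorem itself the \(K_4\) example suffices.

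\emph{Main obstacle.} The subtle point is making sure the definition of \(\mathcal{D}_{\mathrm{hr}}\) admits no extra freedom. If it allowed mixtures over several vector configurations, or degenerate vectors, the pairwise argument would have to be adapted. Mixtures of hyperplane roundings could in principle realize more distributions, so we fix the definition as a single vector configuration (the SDP output) with Gaussian hyperplane rounding, and state this explicitly.

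In that setting, suppose instead we wanted to rule out realization through the full distribution rather than through the marginals. For three vectors, the probability that all three lie on the same side equals \(1 - (\theta_{12} + \theta_{13} + \theta_{23})/(2\pi)\), and this must be nonnegative. The \(K_4\) witness has angle sum \(3 \cdot 2\pi/3 = 2\pi\), which attains the boundary for every triple. That is consistent for triples, so the four-vector PSD argument above is the decisive step.
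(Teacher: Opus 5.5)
Your proof is correct and follows the paper's route: Gaussian sign-symmetry gives the inclusion, and the equicorrelation obstruction (Gram matrix not PSD once $\rho<-\frac{1}{n-1}$), applied to the optimal balanced-cut distribution on $K_n$, gives strictness; your $n=4$ witness with $\rho=-\tfrac12<-\tfrac13$ makes this fully explicit. One small slip in your non-essential aside: the balanced-cut edge probability equals $\frac12+\frac{1}{2(n-1)}$ for even $n$ and is smaller for odd $n$, so the comparison you need is with $\frac1\pi\arccos\bigl(-\frac{1}{n-1}\bigr)=\frac12+\frac1\pi\arcsin\frac{1}{n-1}$, which holds strictly for all $n\ge 4$, with equality at $n=3$.
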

The proof of Theorem \ref{thm:dist_subset} is mentioned in Appendix \ref{appendix:sdp_rounding}. Regardless of the drawback, SDP embeddings followed by hyperplane rounding still provide a principled baseline for our quantum results. Importantly, it does \emph{not} preclude instance-wise improvements: a method can outperform $\mathrm{SDP}_{\mathrm{HR}}(G)$ on structured graph families without contradicting UGC-tightness, which concerns the best possible approximation ratio over \emph{all} graphs. As a concrete example, we compare our QAOA-based solver against the classical approach, yielding the following theorem. 
\begin{theorem}
    \label{thm:Kn_separation}
    For any complete graph \(K_n,\)
\[
\mathrm{Q}_1^{\mathrm{std}}(K_n) > \mathrm{SDP}_{\mathrm{HR}}(K_n) = \frac{1}{\pi}\arccos\!\Big(\frac{1}{1-n}\Big).
\]
\end{theorem}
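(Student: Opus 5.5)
The plan has two halves: compute the classical value $\mathrm{SDP}_{\mathrm{HR}}(K_n)$ exactly, then exhibit explicit one-layer standard QAOA angles whose common edge cut probability is strictly larger.

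\emph{Classical side.} For any unit vectors $v_1,\dots,v_n$ (the SDP embedding of $K_n$), use
\[
0\le \Big\|\sum_i v_i\Big\|^2 = n + 2\sum_{i<j}\langle v_i,v_j\rangle .
\]
This gives $\sum_{i<j}\langle v_i,v_j\rangle\ge -n/2$, so the average pairwise inner product is at least $-1/(n-1)$. Hence some edge has $\langle v_i,v_j\rangle\ge 1/(1-n)$. Under hyperplane rounding that edge is cut with probability $\arccos(\langle v_i,v_j\rangle)/\pi$. Since $\arccos$ is decreasing, every embedding has minimum edge cut probability at most $\arccos(1/(1-n))/\pi$. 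The regular simplex, with all inner products equal to $-1/(n-1)$, attains this bound, and it is also the SDP optimum (the same averaging argument bounds the SDP objective). Therefore $\mathrm{SDP}_{\mathrm{HR}}(K_n)=\frac{1}{\pi}\arccos\!\big(\frac{1}{1-n}\big)$, and this holds for the best rounded embedding, not merely a particular optimal one.

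\emph{Quantum side.} $K_n$ is edge-transitive, so by \cite{shaydulin2021exploiting} every edge has the same $\langle Z_uZ_v\rangle$, and $\mathrm{Q}^{\mathrm{std}}_1(K_n)$ equals the single-edge cut probability of $p=1$ QAOA. I would use the known closed form for $p=1$ MaxCut QAOA on an edge. In $K_n$ each endpoint has $n-2$ other neighbours, and the edge lies in $n-2$ triangles, so the formula becomes
\[
f_n(\gamma,\beta)=\tfrac12+\tfrac12\sin4\beta\,\sin\gamma\cos^{n-2}\gamma-\tfrac14\sin^2 2\beta\,\big(1-\cos^{n-2}2\gamma\big).
\]
(Sign conventions of $\gamma,\beta$ will be fixed to match the paper's ansatz.) It then suffices to find $(\gamma,\beta)$ with
\[
f_n(\gamma,\beta)>\tfrac1\pi\arccos\!\big(\tfrac{1}{1-n}\big).
\]

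\emph{Comparison.} For large $n$, the right-hand side is $\tfrac12+\tfrac{1}{\pi(n-1)}+O(n^{-3})$, using concavity/convexity bounds on $\arcsin$. Take $\gamma=c/\sqrt{n}$. Then $\sin\gamma\cos^{n-2}\gamma\approx c\,e^{-c^2/2}/\sqrt n$ and $1-\cos^{n-2}2\gamma\approx 1-e^{-2c^2}$. Choosing $\beta$ small, of order $1/\sqrt n$, balances the gain $\sim 2\beta\, c e^{-c^2/2}/\sqrt n$ against the penalty $\sim \beta^2(1-e^{-2c^2})$. This yields an excess over $\tfrac12$ of order $1/n$ with an explicit constant, namely
\[
\frac{c^2e^{-c^2}}{1-e^{-2c^2}},
\]
optimized over $c$. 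The task is to check that this constant exceeds $1/\pi$, and to make the estimates rigorous with explicit error bounds for all $n\ge n_0$. For the finitely many $n<n_0$, I would verify the inequality directly by plugging in explicit angles (e.g.\ the optimized $(\gamma,\beta)$ for each $n$).

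\emph{Main obstacle.} The hard step is making the asymptotic comparison uniform and rigorous, since both sides are $\tfrac12+\Theta(1/n)$ and the argument hinges on the constants. If the optimized constant turns out not to beat $1/\pi$ cleanly, I would instead optimize $\beta$ exactly for fixed $\gamma$, treating $f_n$ as $\tfrac12+A\sin4\beta-B\sin^22\beta$, which gives $\tfrac12+\tfrac{B}{4}\big(\sqrt{1+16A^2/B^2}-1\big)$ in closed form. I would then bound the resulting one-variable function of $\gamma$ from below.
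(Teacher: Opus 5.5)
You follow the same route as the paper: the closed-form $p=1$ edge expectation on $K_n$ (each endpoint has $n-2$ other neighbours and the edge lies in $n-2$ triangles), a choice $\gamma\sim 1/\sqrt{n}$, an optimized $\beta$, and a comparison with $\tfrac12+\tfrac{1}{\pi}\arcsin\tfrac{1}{n-1}$.

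\textbf{The gap.} The strict inequality is false for $n=2$ and $n=3$. Any QAOA state induces a cut distribution, so $\mathrm{Q}_1^{\mathrm{std}}(K_n)\le\bar\eta(K_n)\le\mathrm{MaxCut}(K_n)/|E|$. That upper bound equals $1$ for $n=2$ and $2/3$ for $n=3$, which are exactly $\tfrac{1}{\pi}\arccos(-1)$ and $\tfrac{1}{\pi}\arccos(-\tfrac12)$. Your step ``verify the finitely many $n<n_0$ directly'' therefore cannot succeed at $n=2,3$. For $K_3$, one-layer QAOA attains exactly $2/3$, at $\cos^2\gamma=2/3$. What can be proved is the case $n\ge4$, and that is all the paper's proof covers: it works with $d=n-2\ge2$ and remarks that $K_3$ gives equality. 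Your proposal needs to state this restriction instead of claiming every $n$.

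\textbf{The asymptotic part.} Your route does go through, and you can settle the constant now:
\[
\frac{c^2e^{-c^2}}{1-e^{-2c^2}}=\frac{y}{2\sinh y},\qquad y=c^2 .
\]
This decreases from $1/2$ and stays above $1/\pi$ for $y$ up to about $1.7$; for example, $c=1$ gives about $0.425$. For fixed $c$ the corrections are $O(n^{-2})$. So the fallback is unnecessary, and its closed form is also off: it should read $\tfrac{B}{2}\bigl(\sqrt{1+4A^2/B^2}-1\bigr)$.

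\textbf{Comparison with the paper.} The paper avoids asymptotics. It maximizes over $\beta$ exactly, which reduces the claim to
\[
(1-t)t^{d}>\tfrac{4}{\pi^2}\arcsin^2 x+\tfrac{1}{\pi}\arcsin x\,\bigl(1-(2t-1)^{d}\bigr),\qquad t=\cos^2\gamma,\ x=\tfrac{1}{n-1}.
\]
It then fixes $t=1-x$, which is your $c\approx1$, and tries to close the inequality for all $d\ge2$ at once using $(1-x)^d\ge e^{-1}$ and $\arcsin x\le x+x^3/3$.

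That uniform chain has a slip. It uses $1-(1-2x)^d\le 0.64$, but at $x=1/(d+1)$ this quantity lies in $(1-e^{-2},8/9]$. With the true values, the crude chain fails for $n=4,5$ and closes only for $n\ge6$. Direct evaluation rescues those two cases; for $n=4$ the two sides are $4/27\approx0.148$ and about $0.143$. So a complete proof really needs a uniform estimate for large $n$ plus a finite check for small $n$, which is your structure.

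Your classical half, via $\|\sum_i v_i\|^2\ge0$, is also more self-contained than the paper's appeal to equicorrelation matrices and the Lov\'asz theta function. It also correctly shows that the value is optimal over all embeddings, not only at the SDP optimum.
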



Here the $\mathrm{SDP}(G)$ for Fair Cut Cover with the positive semi-definite matrix $C$, is defined as follow:
\begin{equation}
    \label{eq:fair_cut_cover_SDP}
    \begin{aligned}
        \mathrm{SDP}(G) \coloneq& \max_{C} (t) \\ 
    \text{s.t: }&  \frac{w_{ij}}{2}(1 - C_{ij}) \geq t\mathbf{1}_{E}, \forall ij \in E \\
                & C \succeq 0, \qquad \text{diag}(C) = \mathbf{1}_{V} 
    \end{aligned}
\end{equation}
where $w_{ij}$ denotes weight of edge $(ij)$. The proof of Theorem \ref{thm:Kn_separation} and details regarding hyperplane rounding are described in Appendix \ref{appendix:sdp_rounding}. Here, both separations in Theorem \ref{thm:dist_subset} and \ref{thm:Kn_separation} arises because SDP + hyperplane rounding restricts edge cut probabilities to the elliptope image $\arccos{\langle v_{u}, v_{v}\rangle}/\pi$ as stated in the proof of Theorem \ref{thm:dist_subset} in Appendix \ref{appendix:sdp_rounding}
, whereas QAOA-induced distributions are not confined to this map (but effectively reaching all possible cut distributions as stated in Corollary \ref{thm:sufficient_p_covering_distributions} with enough layers). This result applies to any hyperplane rounding-based approach, as this SDP formulation provides the optimal embedding for hyperplane rounding for this problem, stated in Proposition \ref{prop:sdp_optimal_HR}, Appendix \ref{appendix:sdp_rounding}.\\ 
\color{black}
The second characteristic of the SDP solver is monotonicity, where $\mathrm{SDP}(G) \leq \mathrm{SDP}(H)$ for any 
subgraph $H \subseteq G$, discussed in Appendix \ref{appendix:SDP}. As the optimal value for the fair-cut cover SDP can be found for many classes of edge-transitive graphs \cite{karger1998approximate}, especially the complete graph family, we derive the Corollary \ref{cor:non_trivial_upperbound}, which yields a non-trivial upper bound for any graph.
\begin{cor}
    \label{cor:non_trivial_upperbound}
    For any connected graph \(G(V, E)\)
    \begin{align*}
         \mathrm{SDP}_{\mathrm{HR}}(G) &\leq \frac{1}{\pi}\arccos\left(\frac{1}{1 - \omega(G)}\right)\\
          \mathrm{SDP}_{\mathrm{HR}}(G)& \geq \frac{1}{\pi}\arccos\left(\frac{1}{1 - \abs{V}}\right) 
    \end{align*}
    where \(\omega(G)\) is the size of the maximum clique in \(G.\)
\end{cor}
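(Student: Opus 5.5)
The plan is to show that $\mathrm{SDP}_{\mathrm{HR}}$ is monotone under taking subgraphs, in the same way as $\mathrm{SDP}$ and $\bar{\eta}$. Then we sandwich $G$ between two complete graphs, $K_{\omega(G)} \subseteq G \subseteq K_{\abs{V}}$, and read off the values from Theorem~\ref{thm:Kn_separation}.

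\textbf{Step 1: rewrite $\mathrm{SDP}_{\mathrm{HR}}$ as a max-min over unit-vector embeddings.} Take any feasible $C$ for (\ref{eq:fair_cut_cover_SDP}) and factor it as a Gram matrix $C_{ij} = \langle v_i, v_j\rangle$ with unit vectors $v_i$. Hyperplane rounding then cuts edge $(i,j)$ with probability $\arccos(C_{ij})/\pi$. The SDP objective is $\min_{ij\in E}(1-C_{ij})/2$, and the rounding value is $\min_{ij\in E}\arccos(C_{ij})/\pi$. Both are strictly decreasing functions of $C_{ij}$, so the minimizing edge is the same for both: it is the edge with the largest $C_{ij}$. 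Hence an SDP optimum also maximizes the rounding value over all embeddings (this is the content of Proposition~\ref{prop:sdp_optimal_HR}), and
\[
\mathrm{SDP}_{\mathrm{HR}}(G) = \max_{\{v_i\}\subset S^{n-1}} \min_{ij\in E} \frac{\arccos\langle v_i,v_j\rangle}{\pi}.
\]
I expect this identification to be the only delicate point. Without it, $\mathrm{SDP}_{\mathrm{HR}}$ would depend on which optimal SDP solution is returned. If there are ties among optimal solutions, I would define $\mathrm{SDP}_{\mathrm{HR}}$ as the best rounding value over SDP optima, and the argument above shows this equals the max-min expression.

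\textbf{Step 2: monotonicity.} Let $H \subseteq G$ be a subgraph on the vertex set $V_H \subseteq V$.
\begin{itemize}
\item Restricting any embedding of $G$ to $V_H$ gives an embedding of $H$.
\item Since $E(H)\subseteq E(G)$, the minimum over $E(H)$ is at least the minimum over $E(G)$.
\end{itemize}
Taking the maximum over embeddings gives $\mathrm{SDP}_{\mathrm{HR}}(G) \le \mathrm{SDP}_{\mathrm{HR}}(H)$.

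\textbf{Step 3: upper bound.} Apply Step 2 with $H = K_{\omega(G)}$, a maximum clique of $G$. Connectedness with $\abs{V}\ge 2$ guarantees $\omega(G) \ge 2$, so the formula is well defined. Theorem~\ref{thm:Kn_separation} gives
\[
\mathrm{SDP}_{\mathrm{HR}}(G) \le \mathrm{SDP}_{\mathrm{HR}}(K_{\omega(G)}) = \tfrac{1}{\pi}\arccos\!\big(\tfrac{1}{1-\omega(G)}\big).
\]

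\textbf{Step 4: lower bound.} View $G$ as a spanning subgraph of $K_{\abs{V}}$ and apply Step 2 with the roles reversed, so that $\mathrm{SDP}_{\mathrm{HR}}(K_{\abs{V}}) \le \mathrm{SDP}_{\mathrm{HR}}(G)$. Concretely, embed the vertices as the vertices of a regular simplex, i.e.\ unit vectors with pairwise inner product $1/(1-\abs{V})$. This is the optimal $K_{\abs{V}}$ embedding used in Theorem~\ref{thm:Kn_separation}. It is a feasible embedding for $G$, and every edge of $G$ is cut with probability $\arccos(1/(1-\abs{V}))/\pi$. The max-min formula from Step 1 then yields the stated lower bound.
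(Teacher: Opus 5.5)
Your proof is correct and follows the paper's route: the paper also combines subgraph monotonicity of the SDP (restriction to a principal submatrix, Appendix~\ref{appendix:SDP}) with the monotone $\arccos$ correspondence of Proposition~\ref{prop:sdp_optimal_HR}, and sandwiches $G$ between $K_{\omega(G)}$ and $K_{\abs{V}}$. Your concern about ties among SDP optima is moot: every optimum has $\max_{ij\in E}C_{ij} = 1-2\,\mathrm{SDP}(G)$, so every optimum yields the same rounding value $\arccos(1-2\,\mathrm{SDP}(G))/\pi$.
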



Here, both inequalities are based on the subgraph monotonicity of the SDP solver, which is discussed in detail in Appendix \ref{appendix:SDP}. Based on this monotonicity characteristic, both the SDP with the Fair Cut Cover optimal value are upper bounded by the smallest clique with size $\omega(G)$ (subgraph) and lower bounded by the clique with size $\abs{V}$ (supergraph).
The family of complete graphs $K_{n}$ with $n \geq 4$ gives a good example of gap, but it is not the only case. 
Here we report several numerical results for highly symmetric graphs with maximum clique size at most $3$, beyond the complete-graph case, where we observe a separation among the optimal Fair Cut Cover $\bar{\eta}(G)$, the best value obtained by SDP with hyperplane rounding distribution $\mathrm{SDP}_{\mathrm{HR}}(G)$, and the value achieved from $k-$layers standard QAOA-ansatz,  $\mathrm{Q}^{\mathrm{std}}_{k}(G)$, as summarized in Table \ref{tab:highly_symmetric_graphs_qaoa}. 

\begin{table*}[]

\begin{tabular}{|l|l|l|l|l|l|l|l|}
\hline
\textbf{Graph} $G(V, E)$            & $\bar{\eta}(G)$ & $\mathrm{SDP}_{\mathrm{HR}}(G)$ & $\mathrm{Q}^{\mathrm{std}}_{1}(G)$ & $\mathrm{Q}^{\mathrm{std}}_{2}(G)$ & $\mathrm{Q}^{\mathrm{std}}_{3}(G)$ & $\mathrm{Q}^{\mathrm{std}}_{4}(G)$ & $\mathrm{Q}^{\mathrm{std}}_{5}(G)$ \\ \hline
Petersen Graph    &  4/5   &  0.7323   &   0.6925      &  $\geq 0.7404$   & $\geq 0.7816$ & $\geq 0.7958$ & $\geq 0.7980$ \\ \hline
Clebsh Graph       &   4/5   &  0.7048 & 0.6431  & $\geq 0.6985$  & $\geq 0.7275$  & $\geq 0.7618$ & $\geq 0.7803$  \\ \hline
Paley Graph $q=13$ &    2/3  & 0.6254 & 0.5957  & $\geq 0.6280$ & $\geq 0.6443$ & $\geq 0.6542$ & $\geq 0.6616$ \\ \hline
Paley Graph $q=17$ &   2/3   & 0.6037 & 0.5765 & $\geq 0.6070$ & $\geq 0.6184$ &  $\geq 0.6240$ & $\geq 0.6284$  \\ \hline
Shrinkhande Graph &    2/3   &  0.6081 & 0.5957  & $\geq 0.6206$  & $\geq 0.6302$  & $\geq 0.6338$  & $\geq 0.6382$  \\ \hline
\end{tabular}

\caption{Results of classical and quantum methods on Fair Cut Cover for highly symmetric graph with max cliques size at most $3$. Here $\bar{\eta}(G)$ denotes the optimal Fair Cut Cover value, $\mathrm{SDP}_{\mathrm{HR}}$ denotes best value obtained by SDP with hyperplane rounding, and $Q^{\mathrm{std}}_{k}$ denotes values achieved by $k$-layers standard QAOA ansatz. For every cases, standard QAOA ansatz with $2-3$ layers can beat the SDP with hyperplane rounding results.
}
\label{tab:highly_symmetric_graphs_qaoa}
\end{table*}


\paragraph{Sampling complexity to approximate edge cut probability}
Given an unweighted graph $G=(V, E)$ and $p \in \mathcal{D}$ be any distribution over cuts, based on the celebrated Hoeffding's inequality \cite{hoeffding1963probability}, for any edge $e\in E$, we have the sample complexity to approximate the edge cut probability $\mathrm{Pr}_{e}(p)$.
\begin{theorem}
    \label{prop:sampling_hoeffding}
    Let \(\hat p\) be the empirical distribution constructed from \(T\) i.i.d. samples drawn from \(p.\)
    For any $\epsilon \in (0, 1)$, achieving a maximum \textbf{absolute error} of 
    \begin{align}
        \lVert  Y(\hat p - p )\rVert_\infty \leq \epsilon
    \end{align}
    with a probability of at least \(1 - \delta\) requires \(\mathcal{O}(\epsilon^{-2}\log(2\lvert E\rvert/\delta))\) samples.

    Achieving a maximum \textbf{relative error} of 
    \begin{align*}
        \max_e \left\lvert\frac{\langle Y_e, \hat p - p\rangle}{\langle Y_e, p\rangle}\right\rvert \leq \epsilon
    \end{align*}
    with a probability of at least \(1 - \delta\) requires \(\mathcal{O}((\epsilon^2\min_e\langle Y_e, p\rangle)^{-1}\log(2\lvert E\rvert/\delta))\) samples.


\end{theorem}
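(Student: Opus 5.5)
The plan is to reduce both claims to concentration for a single edge and then take a union bound over the \(\lvert E\rvert\) edges. Fix an edge \(e=(u,v)\) and let \(x^{(1)},\dots,x^{(T)}\) be the i.i.d. samples from \(p\). The indicators \(B_t \coloneq Y_{e x^{(t)}} = \mathds{1}[x^{(t)}_u \neq x^{(t)}_v]\) are i.i.d. Bernoulli random variables with mean \(\mu_e \coloneq \langle Y_e, p\rangle\). By construction of the empirical distribution,
\begin{align*}
\langle Y_e, \hat p\rangle = \frac{1}{T}\sum_{t=1}^{T} B_t ,
\end{align*}
so each coordinate of \(Y(\hat p - p)\) is the deviation of a sample mean of \([0,1]\)-valued variables from its expectation.

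For the absolute error, Hoeffding's inequality gives
\begin{align*}
\Pr\left[\left\lvert \langle Y_e, \hat p - p\rangle\right\rvert > \epsilon\right] \leq 2\exp(-2T\epsilon^2)
\end{align*}
for each edge. A union bound over the \(\lvert E\rvert\) edges bounds the failure probability of \(\lVert Y(\hat p - p)\rVert_\infty \leq \epsilon\) by \(2\lvert E\rvert \exp(-2T\epsilon^2)\). Setting this to be at most \(\delta\) yields
\begin{align*}
T \geq \frac{1}{2\epsilon^{2}}\log\frac{2\lvert E\rvert}{\delta}
= \mathcal{O}\!\left(\epsilon^{-2}\log(2\lvert E\rvert/\delta)\right).
\end{align*}
Since the statement is phrased as ``requires'', I read it in the standard sense that this many samples suffice. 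I will not attempt a matching lower bound.

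For the relative error, plain Hoeffding is too weak. Requiring an absolute error of \(\epsilon\mu_e\) would cost \(\epsilon^{-2}\mu_e^{-2}\) samples rather than the claimed \(\epsilon^{-2}\mu_e^{-1}\). This is the main obstacle, and I will handle it with the multiplicative Chernoff bound for sums of independent Bernoulli variables: for \(\epsilon\in(0,1)\),
\begin{align*}
\Pr\left[\left\lvert \tfrac{1}{T}\textstyle\sum_t B_t - \mu_e\right\rvert > \epsilon\mu_e\right] \leq 2\exp\!\left(-T\mu_e\epsilon^2/3\right).
\end{align*}
Equivalently, one can use Bernstein's inequality, exploiting that the variance \(\mu_e(1-\mu_e)\) is at most \(\mu_e\). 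Let \(\mu_{\min}\coloneq\min_e \langle Y_e, p\rangle\). Each per-edge failure probability is bounded by \(2\exp(-T\mu_{\min}\epsilon^2/3)\), and a union bound over the \(\lvert E\rvert\) edges gives total failure probability at most \(2\lvert E\rvert\exp(-T\mu_{\min}\epsilon^2/3)\). This is at most \(\delta\) once
\begin{align*}
T \geq \frac{3}{\epsilon^{2}\mu_{\min}}\log\frac{2\lvert E\rvert}{\delta},
\end{align*}
which is the claimed \(\mathcal{O}\big((\epsilon^2\min_e\langle Y_e, p\rangle)^{-1}\log(2\lvert E\rvert/\delta)\big)\).

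I will also note two side cases. First, the relative-error statement implicitly assumes \(\mu_{\min}>0\); otherwise the ratio is undefined, and such edges can simply be excluded. Second, the samples may come from measuring a quantum state; this does not affect the argument, because each measurement yields an independent draw from \(p\).
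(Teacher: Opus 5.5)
Your proposal is correct and follows the paper's proof essentially step for step. For the absolute error, you use per-edge Hoeffding plus a union bound, giving $T \geq \frac{1}{2\epsilon^{2}}\log(2\lvert E\rvert/\delta)$; for the relative error, you use the two-sided multiplicative Chernoff bound $2\exp(-\epsilon^{2}T\langle Y_e,p\rangle/3)$, weakened to $\min_e\langle Y_e,p\rangle$ and union-bounded to give $T \geq 3\epsilon^{-2}(\min_e\langle Y_e,p\rangle)^{-1}\log(2\lvert E\rvert/\delta)$, reading ``requires'' as sufficiency just as the paper does.
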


Without any extra assumption on the graph structure, these bounds give us an upper bound on the sample complexity to approximate the edge cut probability for any type of distribution. Therefore, it serves as an upper bound for both the SDP solver and the quantum solver. This gives us the bound for the number of shots needed to run our quantum circuit, given the obtained quantum states from the variational framework. This is discussed in more detail in Appendix \ref{appendix:sampling}.

\paragraph{Training \DQAOA{} for Fair Cut Cover}


The minimum \(f(x) \coloneqq \min_ig_i(x)\) of a collection of unique smooth functions \(g_i(x)\) is potentially non-smooth wherever there exists \(x, i,j\) such that \(f(x) = g_i(x) = g_j(x)\). As a result, the loss surface for a QAOA-ansatz circuit for the Fair Cut Cover problem on a graph with many edges can be non-smooth, meaning that small-angle finding can struggle greatly to initialize parameters for the entire graph. Especially, since these non-smooth points can also be the optimal points; for example, the optimal distribution for complete graphs requires all edges to have the same edge cut probability. 

One approach we have taken to address this issue is to smooth the landscape using a smooth approximation such as LogSumExp (SoftMin), which is commonly employed in both optimization  \cite{boyd2004convex,blanchard2021accurately} and machine learning \cite{pillutla2018smoother, blondel2020learning}. This approach both improves the small-angle initialization technique and enables the calculation of gradients using a quantum device via the parameter-shift rule \cite{wierichs2022general}. 
Considering our current objective for Fair Cut Cover using variational quantum algorithm, with $\mathrm{Pr}_{e}(\ket{\psi(\theta)}) \coloneq (1 - \langle Z_{u}Z_{v}\rangle_{\theta})/2$, we have the smooth LSE objective as
\begin{align}
    \min_{e \in E} p_{e}(\theta) \approx f_{\tau}(\theta) \coloneq \nonumber 
      -\tau\ln{\Bigg(\sum_{e \in E} \exp{\frac{-p_{e}(\theta)}{\tau}}\Bigg)} 
\end{align}
We demonstrate improved solution quality of this approach in Figure \ref{fig:training}. 

\begin{figure}
    \centering
    \includegraphics[width=1.0\linewidth]{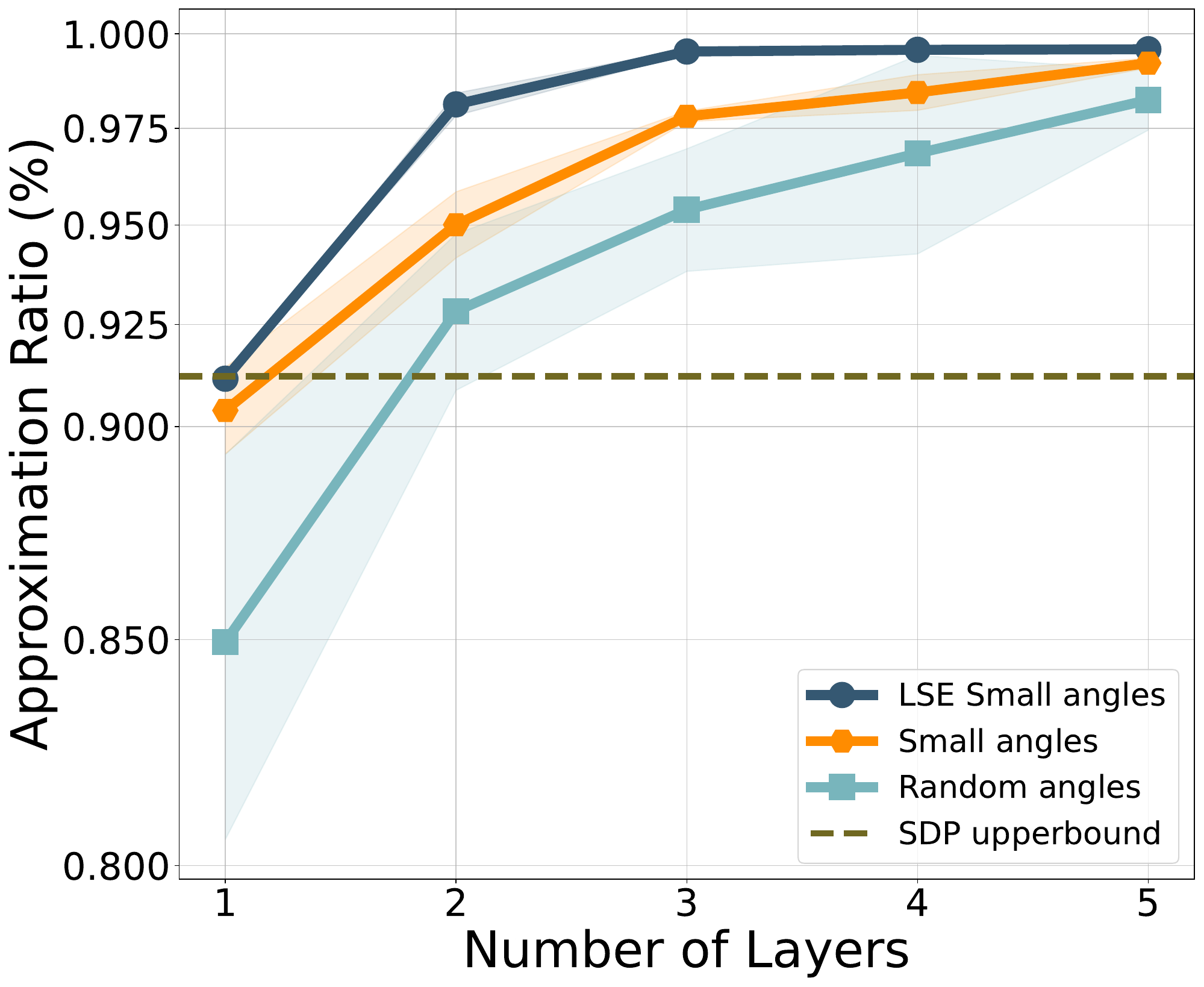}
    \caption{Performance of $k$-layer \DQAOA{} ansatz solver when using different initialization strategy on $16$-node graphs.}
    \label{fig:training}
\end{figure}





\paragraph{Variance of gradient for objective}
Here, we look at the variance of the gradient for our min and LogSumExp objective. Following the analysis of \cite{kazi2025analyzing}, by assuming that we have enough layers for our \DQAOA{} ansatz to form unitary 2-design \cite{larocca2022diagnosing}, we give the upper bound for our objectives as follows
\begin{align}
    \mathrm{Var}[\partial_{\vartheta}f_{\tau}(\theta)] \leq \frac{d^{2}\abs{E}}{(d^{2} - 4)(d + 2)},
\end{align}
where $d = 2^{\abs{V}}$. Furthermore, for the LogSumExp objective, the expectation value of the gradient $\mathbb{E}[\partial_{\vartheta}f_{\tau}(\theta)] \neq 0$ is not equal to $0$, but has some bias over the landscape.  Furthermore, for our min objective $f(\theta) \coloneq \min_{e\in E}p_{e}(\theta)$, as $f(\theta)$ is locally Lipschitz and based on the Clarke subgradient \cite{clarke1990optimization}, we have the Variance of gradient for $f(\theta)$ where $\theta$ are not non-smooth points as
\begin{align}
    \mathrm{Var}[\partial_{\vartheta}f(\theta)] = \frac{d^{2}}{(d^{2} - 4)(d + 2)},
\end{align}

More details on these calculations are shown in Appendix \ref{appendix:var_grad_obj}. In Figure \ref{fig:variance_grad}, we show the numerics results for Variance of the gradient of both the LogSumExp objective, $\mathrm{Var}[\partial_{\vartheta}f_{\tau}(\theta)]$ and the min objective, $\mathrm{Var}[\partial_{\vartheta}f(\theta)]$, for graphs with different nodes. Here, for each graph size, we use $20$ random instances up to $100$ ansatz layers, where for each instance, $100$ parameter points with a range from $(0, 2\pi)$ are considered. As a result, we can see that the analytical upper bound gives a tight bound. 
\begin{figure*}
    \centering
    \includegraphics[width=1.0\linewidth]{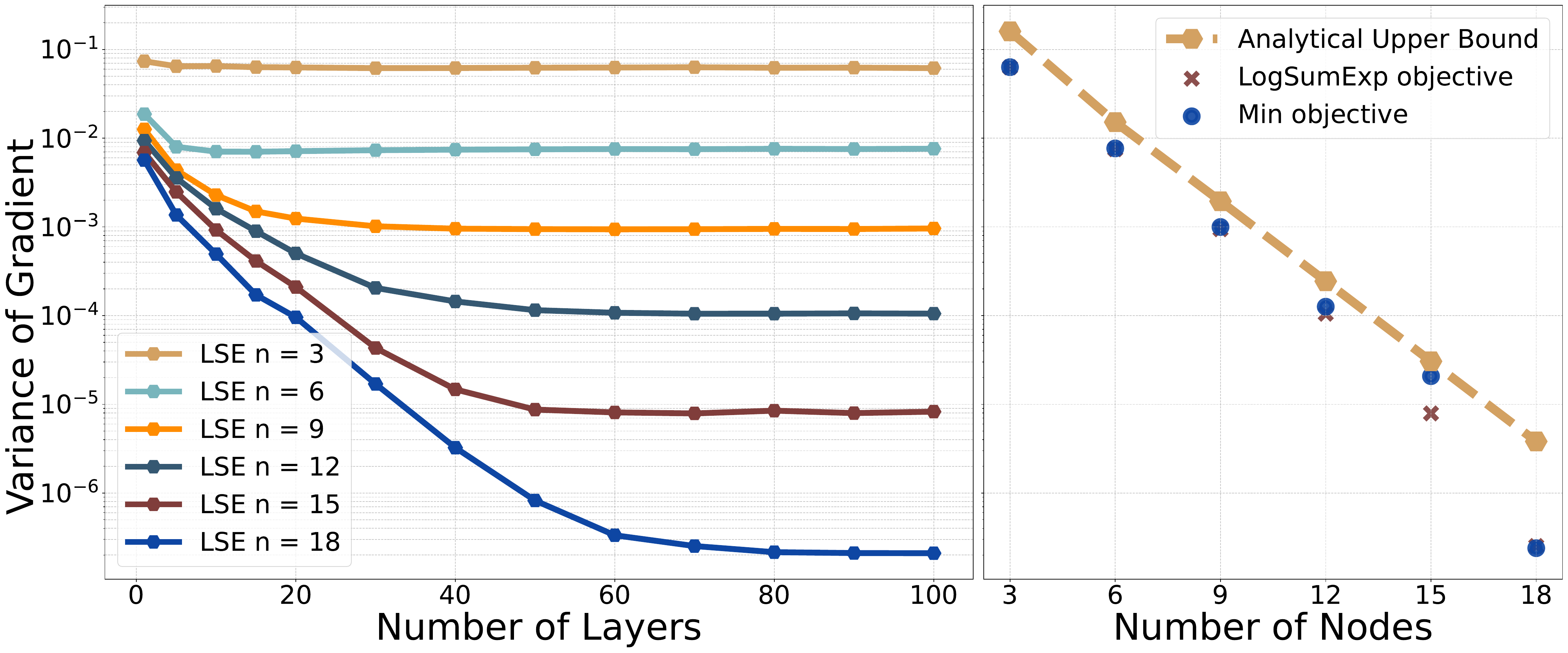}
    \caption{Variance of Gradient for the LogSumExp objective ($\mathrm{Var}[\partial_{\vartheta}f_{\tau}(\theta)]$) and min objective for smooth points ($\mathrm{Var}[\partial_{\vartheta}f(\theta)]$) is calculated with respect to different graph sizes as in the left figure. For the right figure, the analytical upper bound for the LSE smooth objective is plotted, which bounds both the LogSumExp objective and the Min objective.}
    \label{fig:variance_grad}
\end{figure*}

\paragraph{Numerical Results}
The details of the numerical results for both simulation and experiments can be found in Section \ref{sec:numerical_details} and Section \ref{sec:exp_details}, respectively. For both cases, we perform training using the Adam optimizer with small-angle initialization on the LogSumExp objective.

For numerical simulation, we present our results when running between two different classes of graphs, with a max clique of size $4$ and with max clique of size $5$. As shown in Figure \ref{fig:maQAOA_performance}, using $2$-layer \DQAOA{} ansatz on $200$ different random generated graphs, we beat the SDP with the hyperplane rounding upper bound as described in Corollary \ref{cor:non_trivial_upperbound}. Here, we simulate our circuits with both a noiseless simulator and a real device emulator. We can observe from the numerical simulation that, given a small number of layers, \DQAOA{} ansatz can give a very good approximation ratio. Furthermore, it beats the SDP upper bound as stated in Corollary \ref{cor:non_trivial_upperbound} for all $200$ instances by a fair amount. 

For experiments, we perform $60$ experiments on the H2-2 device with graph size ranging from $10$ to $15$ nodes with a max clique size of $5$ and one experiment on a $60$-node complete graph on Helios-1 device. All small instances were optimized for up to 1000 iterations, while the analytical angles of the single-layer standard QAOA ansatz were obtained for $60$-node complete graph. The results are shown in Figure \ref{fig:hardware_exp}. We see from these results that there is a degradation of up to 4.1\% in the average approximation ratio achieved for this problem when running on real hardware compared to the state vector simulation. However, this degradation was not caused by the device's noise but rather due to the insufficient sampling.


\begin{figure*}
    \centering
\includegraphics[width=1\linewidth]{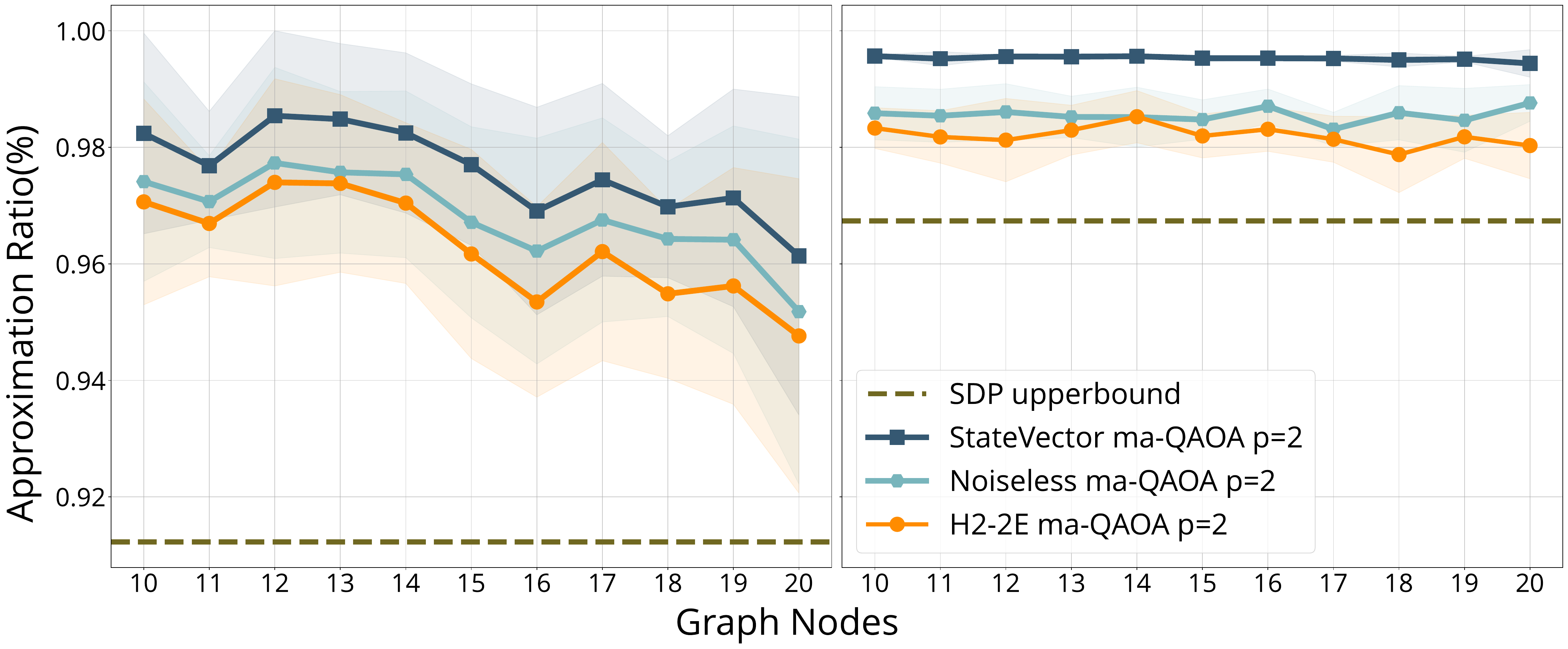}
    \caption{Numerical simulation results on $200$ graphs ranging from $10$ to $20$ nodes with either max clique of size $4$ (on the left) or $5$ (on the right). Here, the SDP upper bound is calculated based on the Corollary \ref{cor:non_trivial_upperbound}, whereas the $2$-layer \DQAOA{} ansatz solver is plotted with exact expectation value (state vector simulation); noiseless simulator, and Quantinuum's H2-2 emulator over $10000$ shots.}
    \label{fig:maQAOA_performance}
\end{figure*}


\section{Discussion}
\paragraph{Insufficient number of layers to capture $\mathbb{Z}_{2}$-symmetric cut distribution}
Besides the fact that QAOA with enough layers can capture all $\mathbb{Z}_{2}$-symmetric cut distribution, while SDP can not. An interesting part that highly separates the QAOA solver from the SDP solver is its characteristic when the number of layers is not sufficient. As discussed in more details in Appendix \ref{appendix:ma_QAOA_prop}, when the number of layers is too small, the subgraph monotonicity of the QAOA solver is not guaranteed due to the difference in the circuit expressibility between the original graph and its subgraph. A concrete example is with the cycle and path graphs, where, from the analysis of \cite{kazi2025analyzing}, regardless of the number of layers, the \DQAOA{} ansatz based on cycle and path can not capture all $\mathbb{Z}_{2}$ cut distribution.  As mentioned in Appendx \ref{appendix:ma_QAOA_prop}, it is also hard to characterize the behaviour of the standard QAOA ansatz on the subgraph when comparing to the original graph. The only exception is when the subgraph montonicity is held for $k=1$ standard QAOA on free-triangle regular graphs. This motivates an interesting aspect of circuit design where, instead of adding more layers with the same graph structure to improve the expressibility, clever edge additions without affecting the objective can give good-enough results with fewer gate counts.  


\paragraph{Universality Result}
We argue that Corollary \ref{thm:sufficient_p_covering_distributions} constitutes a universality result analogous to what has previously been proven for Feedforward Neural Networks (FNNs) in machine learning. In theory, a feedforward neural network can be constructed to approximate any continuous function to a desired level of accuracy given a sufficient number of resources \cite{HORNIK1989359}. We have shown that \DQAOA{} is capable of something similar, as a \DQAOA{} circuit with a sufficient number of layers is able to approximate any \(\mathbb{Z}_2\)-symmetric distribution over bitstrings to a desired level of accuracy. What's more, they share similar limitations. 

One of the most well-studied obstacles in QAOA research is the concept of a barren plateau, which is a region of the loss landscape for QAOA where the gradient of the loss nearly vanishes \cite{Larocca_2025}. When present, barren plateaus make it exceptionally difficult to train a QAOA circuit effectively. Likewise, FNNs suffer from similar problems, where one sees significant diminishing returns as the size of the network increases and vanishing gradients become more prevalent\cite{Basodi_gradient}.

What is needed for \DQAOA{} is something analogous to what Convolutional Neural Networks (CNNs) did for FNNs. By the clever introduction of inductive biases, CNNs like ALEXNET were able to revolutionize image processing overnight \cite{Krizhevsky_imagenet}. We hope that this work serves as both motivation and as an early step towards encoding better inductive biases into QAOA circuits in order to better capture the space of distributions relevant to the problem at hand.


\paragraph{Trainability and scaling potential}
A central question is whether the proposed method, when implemented with a variational quantum algorithm, can overcome trainability and scaling challenges well enough to yield a practical quantum advantage. 
In particular, our analysis using the \DQAOA{} ansatz with both LogSumExp and min objective leads to barren plateaus when the number of layers is enough to form a unitary 2-design (the variance of the gradient decays exponentially with the problem size). 
We support our theoretical results with a series of numerical experiments, where we show that one only needs a fixed number of layers to reach a good approximation ratio (beating SDP) for 100 graphs ranging from 10-20 nodes. Second, just like in the case of MaxCut, many pre-training, parameter transfer, or warm-start strategies \cite{montanez2025transfer, falla2024graph, tate2023bridging}  can be used to improve either the training results or training time of the \DQAOA{} ansatz. Third, we point the reader to \cite{sciorilli2025towards, bach2024mlqaoa} as examples of how to scale up the current approach using an efficient qubit encoding scheme using either quantum information or classical information. 
The authors of \cite{sciorilli2025towards} use Pauli-correlation encoding to encode the binary variables of the MaxCut problem into the Pauli basis, while  \cite{bach2024mlqaoa} uses classical information (rank-2 embedding) to coarsen the classical problem and solve the coarsest level using a quantum solver. Extending a distribution on such a coarse embedding to a distribution on the original input is an interesting avenue for further research in this direction.


\section{Method}
\subsection{Fair Cut Cover through distributional fairness framework}

Given a distribution over solutions $x \in S$ which minimizes the value of a function $f_x$ on any partition $\Gamma \coloneq \{U_i\}^{\gamma}_{i}$, let $Q \in \mathbb{R}^{\gamma \times \abs{S}}$ such that $Y_{i, x} = \frac{f_x(U_i)}{\abs{U_i}}$. Then, we have the fairness objective to maximize the minimum partition
\begin{align}
    \max_{p} \min_{i \in [\gamma]} \langle Y_i, p \rangle 
\end{align}
Taking $\gamma = E$ as the set of all edges, the partition $\Gamma \coloneq \{e \in E\}$, the solution $x$ as a cut, and $f_x(e)$ denoting if the cut $x$ covers edge $e$, naturally yields $p$ as the cut distribution. The function $\langle Y_e, p \rangle$ now can be interpreted as the cut probability $\mathrm{Pr}_{e=(u,v)}(p) \coloneq \mathrm{Pr}_{X \sim p}[X_u \neq X_v] = \mathbb{E}_{X \sim p}\left[\mathds{1}[x_u \neq x_v]\right]$. Therefore, the fairness objective now becomes finding the cut distribution that maximizes the smallest edge cut probability $\min_{e \in E} \mathrm{Pr}_{e}(p)$, yielding
\begin{align}
    \bar{\eta}(G) = \max_{p} \min_{e \in E} \mathrm{Pr}_{e}(\mu) = \max_{p} \min_{e \in E} \langle Y_e, p \rangle
\end{align}
Based on this fairness objective, we can natively derive the linear programming that solves this problem as finding the maximum constant $t \in \mathbb{R}$ such that $\mathrm{Pr}_{e}(p) \geq t, \forall e \in E$,
\begin{equation}
    \begin{aligned}
    \label{eq:primal_LP}
        \max t & \\ 
        \text{s.t: } &\sum_{x \in \{-1, 1\}^{\abs{V}}} Y_{ex}p_{x} \geq t, \quad \forall e \in E \\
        & \sum_{x \in \{-1, 1\}^{\abs{V}}} p_{x} = 1
    \end{aligned}
\end{equation}
Noted that, instead of considering the cut distribution, we can show that for any problem that respects the $\mathbb{Z}_{2}$ symmetry, $\mathbb{Z}_{2}$-symmetric cut distribution $p$ (\(\forall x \in \{\pm1\}^n,\) \(p(x) = p(- x)\)),  can replace all cut distribution.

\subsection{Numerical Details}
\label{sec:numerical_details}
\paragraph{Choice of instances}
The numerical results in Figure \ref{fig:maQAOA_performance} were carried out on $200$ random graphs generated from the Erdos-Renyi model \cite{erdds1959random} with edge probability of $0.15$ with filter. The graph size ranges from $10$ to $20$ nodes, each with $10$ instances. 
Here, the filter ensures random instances with the same node having the max clique of size $4$ and $5$ (this can be done through adding edges on the generated graphs to ensure the max clique property), while no isomorphic and disconnected graph is selected. 
We divide the random instances into two groups with max clique of size $4$ and size $5$, which are used in both Figure \ref{fig:maQAOA_performance} and Figure \ref{fig:variance_grad}. Especially, the set of graphs with a max clique size of $5$ are use for experiments in Trapped Ion Quantinuum's H2-2 devices in Figure \ref{fig:hardware_exp}. Especially, we choose a $60$-node complete graph for running on Trapped Ion Quantinuum's Helios devices in the inset Figure \ref{fig:hardware_exp}.

\paragraph{Best solutions}
For finding the best Fair Cut Cover solution for a given instance, we solve the linear program as stated in Equation (\ref{eq:primal_LP}). Although the LP contains an exponential number of variables, we manage to find the best solution corresponding to the considered graph size with $n=20$. For the final quantum state $\ket{\psi}$ obtained from the quantum solver, we denote its best Fair Cut Cover value yielded from the empirical distribution derived from measuring $\ket{\psi}$  multiple times (for emulator and simulator it is 10000 shots), as $\mathcal{V}(\ket{\psi})$. Then the approximation ratio is defined as $\mathcal{V}(\ket{\psi}) / \mathcal{V}_{\mathrm{best}}$ where $\mathcal{V}_{\mathrm{best}}$ is the objective value of the optimal Fair Cut Cover obtained from the LP. 

\paragraph{Variational Ansatz}
For the circuit ansatz, we choose to use the standard QAOA structure for MaxCut on edge-transitive graphs 
, and use the multi-angle QAOA structure for MaxCut on ``archetypal'' graphs. 
We observe that this ansatz structure can capture the $\mathbb{Z}_{2}$-symmetric cut distribution well using only a small number of layers.

\paragraph{Quantum-circuit simulations}
The classical simulation of quantum circuits has been done using Pennylane \cite{bergholm2018pennylane}, where for $20-$qubit or smaller circuits are simulated using ``lighting.gpu'' statevector simulator. 

\paragraph{Optimization of circuit parameters}
In all of our experiments, we use the Adam optimizer with the following parameters (step size = $0.01$, $\beta_{1}=0.9$, $\beta_{2} = 0.99$, and $\epsilon = 1e-8$).
, where the gradient of the circuit is calculated using a time- and memory-efficient method, adjoint differentiation \cite{jones2020efficient}. The maximum number of iterations for the optimizer is $1000$ steps, where the stopping criterion is after $30$ steps with accumulated improvement less than $10^{-4}$. Here, for each instance, $10$ random seeds (using default integer random generator) corresponding to $10$ random small initial angles are chosen. We chose the range for small angles initialization ranging from $0 \rightarrow 0.05$.

\subsection{Experimental details}
\label{sec:exp_details}
The experiments were deployed on Quantinuum's  Race Track H2-2 $56-$ion devices, Helios-1 $98$-ion devices, and emulators through the Quantinuum Nexus system. The device is made up by linear chain of ytterbium-171 ($^{171}\text{Yb}^{+}$) ions. For these systems, the native parameterized single-qubit gates are $R_{z}(\lambda)$ and $R_{xy}(\theta, \phi)$, multi-qubit gates are $R_{ZZ}(\theta)$ and $R_{xxyyzz}(\alpha, \beta, \gamma)$. For the compilation, we use the default built-in compiler from Quantinuum Nexus based on the Tket compiler. Note that, as QAOA is mostly made up of $R_{x}$ and $R_{zz}$ gates, the main purpose of the compiler is not to transpile the circuit to the device's native gate set, but to optimize the gate counts.

Here we present Table \ref{tab:experiment_resources}, where for each graph size (GSize) (consisting of $10$ instances), we list out the average number of single-qubit gates (1-q), two-qubit gates (2-q) after compilation. 

\begin{figure*}
    \centering
    \includegraphics[width=1.0\linewidth]{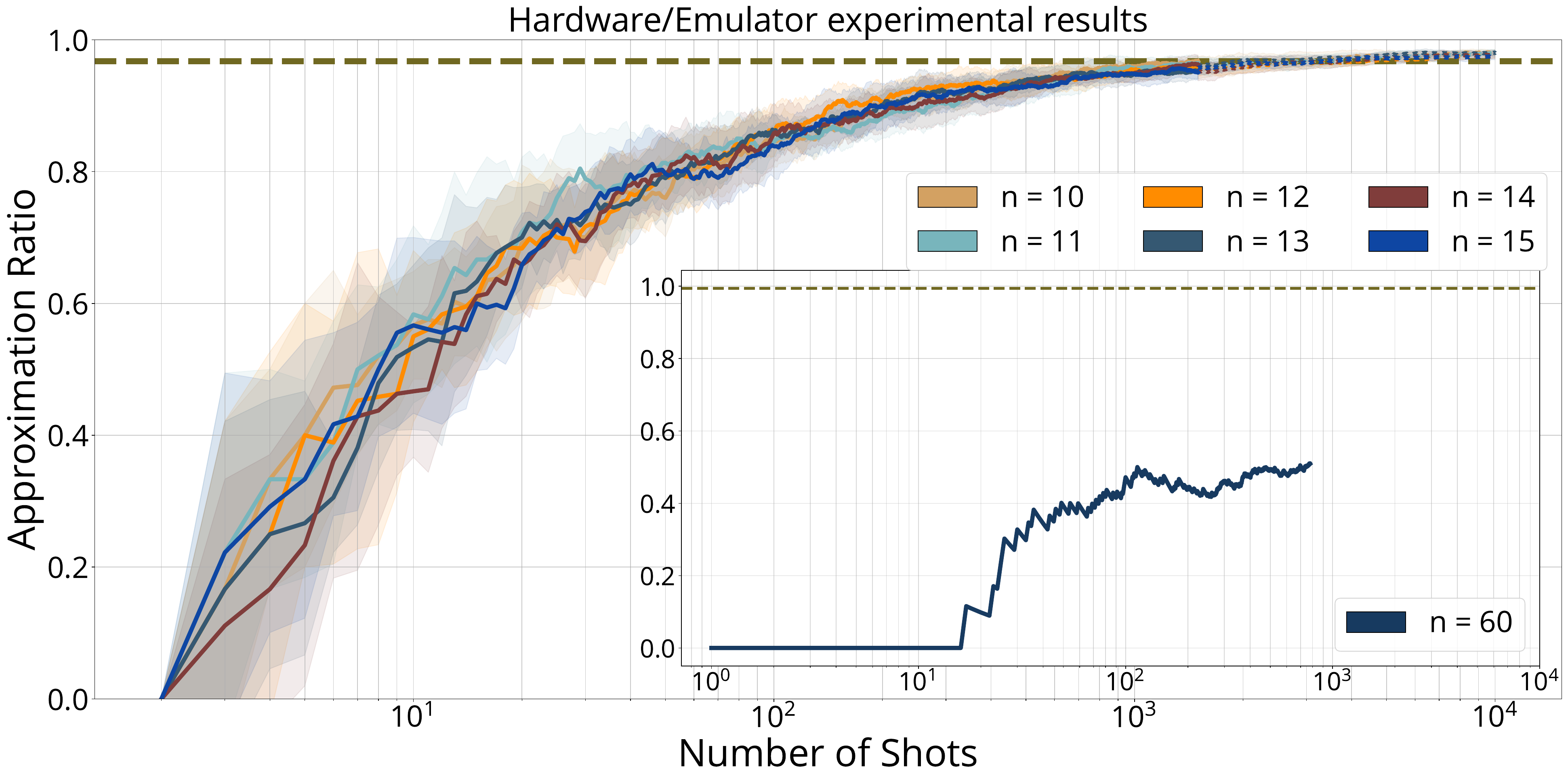}
    \caption{
    Hardware Experiments performed on Quantinuum's H2-2 devices over $60$ graph instances with a max clique size of $5$, where node sizes range from $10$ to $15$. For each instance, $1500$ shots were performed, and the approximation ratio is obtained from the Fair Cut Cover value through the empirical distribution. We extrapolate the results from $1501$ shots to $10000$ shots through emulator results (dotted line). In the inset plot, a big experiment is carried out on Quantinuum's Helios device for a $60$-node complete graph. Here, the approximation ratio yielded from the empirical distribution formed by incrementing the number of shots is plotted. Note that the number cut needed to achieve the optimal cut distribution for this problem is $10^{16.772}$ cut.}
    \label{fig:hardware_exp}
\end{figure*}

\begin{table}[]
\centering
\begin{tabular}{|l|l|l|}
\hline
\centering
\textbf{Graph Size} & \textbf{1-q} & \textbf{2-q}  \\ \hline 
10                  &  30              &  34.9 $\pm$ 2.7           \\ \hline 
11                  & 33               &  38.6 $\pm$ 4.2            \\ \hline 
12                  & 36                &  42.4 $\pm$ 3.2           \\ \hline 
13                  & 39                &  45.4 $\pm$ 3.5            \\ \hline 
14                  & 42                &   47.4 $\pm$ 2.4     \\ \hline 
15                  &  45               &   51.2 $\pm$ 5.1          \\ \hline 
60                  & 120               & 1770\\ \hline
\end{tabular}
\caption{Details of the experiments over all graph instances where number single-qubit and two-qubit gates are listed.}
\label{tab:experiment_resources}
\end{table}

\section*{Acknowledgement}

Authors want to thank Boris Tsvelikhovskiy for the insightful discussions about DLA.
Research presented in this article was supported by the NNSA’s Advanced
Simulation and Computing Beyond Moore’s Law Program at Los Alamos National Laboratory. We would also like to thank the New Mexico Consortium, under subcontract C2778, the Quantum Cloud Access Project (QCAP), for providing quantum computing resources and technical collaboration. LANL report LA-UR-26-21897. This work was supported in part by NSF award \#2444042.

\section{Disclaimers}

The views expressed in this article are those of the authors and do not reflect the official policy or position of the U.S. Naval Academy, Department of the Navy, the Department of War, the U.S. Government, or Los Alamos National Laboratory.

\bibliographystyle{unsrt}
\bibliography{sn-bibliography}

\appendix
\section{More details on Fair Cut Cover}\label{appendix:equiv}
Here, we show concretely how to form Fair Cut Cover from Fractional Cut Cover, and we restate the Fractional Cut Cover problem represented by Equation (\ref{eq:fcc}).
\begin{align*}
    \eta(G) = \min_{w \geq 0} \langle w, \mathbf{1} \rangle \quad \text{s.t.}\quad Yw \geq \mathbf{1}_{E}.
\end{align*}
Normalizing $w$ produces a probability distribution $p_x = w_x/\langle w, \mathbf{1} \rangle$ on $\{\pm1\}^{\abs{V}}$. For any cut distribution $p$ and edge $e$, the \emph{edge cut probability} is $\mathrm{Pr}_{e}(p) = \langle Y_{e},p\rangle$. Let $t = 1/\langle w, \mathbf{1} \rangle$, as we then can define the Fair Cut Cover problem by taking the reciprocal of the Fractional Cut Cover as follows:
\begin{align*}
    \bar{\eta}(G) = \max_{p} t \quad \text{s.t.}\quad Yp \geq t\mathbf{1}_{E}, \quad \sum_{x}p_{x} = 1.
\end{align*}
This leads to the maximin \emph{Fair Cut Cover} objective: learn a cut distribution that maximizes the
smallest edge cut probability,
\begin{align}
    \bar{\eta}(G) = \max_{p}\min_{e \in E} \langle Y_{e},p\rangle = \frac{1}{\eta(G)}.
\end{align}
This transformation has been mentioned by \cite{neto2019fractional} as (P3).
\begin{lemma}
    For any even function \(f\) and cut distribution \(q\) there exists a \(\mathbb{Z}_2\)-symmetric cut distribution \(p\) such that
    \begin{align}
        \mathbb{E}_{x\sim p}[f(x)] = \mathbb{E}_{x\sim q}[f(x)]
    \end{align}
\end{lemma}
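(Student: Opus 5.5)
The natural candidate is the symmetrization of \(q\) under the global spin flip \(x \mapsto -x\). Define
\begin{align*}
    p(x) \coloneqq \frac{q(x) + q(-x)}{2}, \qquad x \in \{\pm 1\}^{\abs{V}}.
\end{align*}
Here "even" is read in the sense relevant to cuts: \(f(-x) = f(x)\) for all \(x\). The cut indicators \(Y_{ex} = (1 - x_u x_v)/2\) have this property, as does any function of the cut induced by \(x\).

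The proof has three steps, carried out in order.

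First, \(p\) is a cut distribution. It is nonnegative, being an average of nonnegative values. It is normalized because \(x \mapsto -x\) is a bijection of \(\{\pm1\}^{\abs{V}}\), so \(\sum_x q(-x) = \sum_x q(x) = 1\).

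Second, \(p\) is \(\mathbb{Z}_2\)-symmetric. This is immediate from the definition: \(p(-x) = \tfrac{1}{2}\big(q(-x) + q(x)\big) = p(x)\).

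Third, the expectations agree. Write
\begin{align*}
    \mathbb{E}_{x\sim p}[f(x)] = \tfrac12 \sum_x q(x) f(x) + \tfrac12 \sum_x q(-x) f(x).
\end{align*}
In the second sum, reindex with \(y = -x\). This gives \(\sum_y q(y) f(-y)\), and evenness turns it into \(\sum_y q(y) f(y) = \mathbb{E}_{x\sim q}[f(x)]\). The two halves then add to \(\mathbb{E}_{x\sim q}[f(x)]\).

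There is no real obstacle. The only point needing care is the meaning of "even". If instead one means evenness in each coordinate separately, the flip-symmetrization above still works, since flipping every coordinate is a special case. As a corollary, applying the lemma to each \(Y_e\) shows that \(p\) and \(q\) have identical edge cut probabilities. Hence the optimum in Equation~(\ref{eq:fcc_fair}) may be restricted to \(\mathcal{D}\), the space of \(\mathbb{Z}_2\)-symmetric cut distributions.
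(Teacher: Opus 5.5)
Your proof is correct and takes the same approach as the paper: it symmetrizes \(q\) as \(p(x) = \tfrac12\big(q(x)+q(-x)\big)\) and uses \(f(-x)=f(x)\) to match the expectations. Your version also writes out the reindexing step the paper leaves implicit, and it correctly invokes evenness where the paper's one-line justification says ``odd.''
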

\begin{proof}
    Given a cut distribution $q$, we can construct a $\mathbb{Z}_{2}$-symmetric cut distribution $p$ where
    \begin{align}
        p(x) = p(- x) = \frac{q(x) + q(- x)}{2}, \quad \forall x \in \{\pm1\}^{n}
    \end{align}
    As $f$ is an odd function, this gives $\mathbb{E}_{x\sim p}[f(x)] = \mathbb{E}_{x\sim q}[f(x)]$
\end{proof}
This primal LP formulation has an exponential number of variables (scales with the number of possible unique cuts) and a linear number of constraints (scales with the number of edges). This can be transformed through the dual LP formulation with edge weight $q \in \Delta(\abs{E})$ and variable $z = 1/t$.
\begin{equation}
    \begin{aligned}
        \min z & \\ 
        \text{s.t: } & \sum_{e \in E} Y_{ex}q_{e} \leq z, \quad \forall x \in \{\pm 1\}^{\abs{V}}\\
        & \sum_{e \in E} q_{e} = 1
    \end{aligned}  
\end{equation}
Notice that $q \in \Delta(E)$ is a probability distribution over edges. For any fixed $q$, the quantity $\sum_{e \in E} Y_{ex}q_{e}$ is exactly the weight of cut $x$ in the graph with edge weights $q$. Therefore, checking whether a candidate point $(q,z)$ satisfies all cut inequalities amounts to computing the maximum cut value of the weighted graph $(G,q)$. In this sense, the separation problem for the dual feasible region is exactly the weighted MaxCut problem. Given that the feasible space of the dual program is $K$, we have the following Remark
\begin{remark}
    \label{rm:sep_oracle_MC}
     Given a point $s=(q,z)$, constructing the separation oracle \cite{grotschel2012geometric} for $K$ is polynomially equivalent to solving weighted MaxCut.
\end{remark}
\begin{proof}
    Given point $s = (q, z)$ the oracle should return either $s \in K$ or return $a$ such that $a^Tx \leq b$ for all $x$ where $a^Ts > b$. 

    First, we start with the simplex constraints. For any point $s = (q, z)$, if $\sum_e q_e \neq 1$, the constraint is violated, and the oracle returns $a$ correspondingly. For example, if $\sum_e q_e > 1$, let $a = (\mathbf{1}_e, 0)$ then, for $x = (q', z') \in K$ , $\langle \mathbf{1}_e, q'\rangle = \sum_e q'_e = 1$ but $\langle \mathbf{1}_e, q\rangle > 1$ (similar with $a = (-\mathbf{1}_e, 0)$ for $\sum_e q_e < 1$). Next, if there exists some $q_e < 0$, the constraint is violated and the oracle returns $a = (v, 0)$ where $v_e <0, \forall v_e \in v$) such that $a^Ts >= 0$ (if $a$ is a vector containing all negative values, $\forall x \in K, a^Tx = \sum_e q_e < 0$ )    
    
    Second, let us consider the cut inequalities. Define $M(q) \coloneq \max_{x} \sum_e Y_{ex}q_e$, it is easy to see that if $z \geq M(q)$, then $z \geq \sum_e Y_{ex}q_e, \forall x$. On the other hand, if $z < M(q)$, then there exits a cut $x^* = \arg \max_x\sum_e Y_{ex}q_e$ that violates the constraint. Let $a = (Y_{\cdot,x^{*}}, -1)$; for $ x \in K$, $(Y_{\cdot,x^{*}})^Tq_x - z_x \leq 0$ or $M(q) - z \leq 0$ and for $s$, $a^Ts=(Y_{\cdot,x^{*}})^Tq_s - z_s > 0$ which create the separation $a^Tx \leq 0 < a^Ts$. Therefore, to create the separation oracle for $K$, we need the ability to find $M(q)$, which is solving weighted MaxCut for a given $q$. Consequently, constructing a separation oracle for $K$ is equivalent to solving weighted MaxCut.
\end{proof}
Based on this Remark \ref{rm:sep_oracle_MC},  the strong separation oracle from the Gr\"otschel--Lov\'asz-Schrijver theorem \cite{grotschel2012geometric} on the ellipsoid algorithm can be used to show the equivalence between the hardness of weighted MaxCut and the hardness of the dual form of Fair Cut Cover, which is NP-hard. Moreover, because the graph is finite, both the primal and dual are finite-dimensional linear programs. The primal is feasible and bounded, hence strong duality follows from standard LP duality.
\begin{remark}
    The primal and dual form of Fair Cut Cover has strong duality.
\end{remark}
\begin{proof}
For graph \(G=(V,E)\), the number of cuts is finite, so the primal Fair Cut Cover formulation is a finite-dimensional linear program. It is feasible, since any cut distribution \(p\) together with \(t=0\) satisfies the constraints, and it is bounded above because each edge-cut probability lies in \([0,1]\), hence \(t \le 1\). Therefore, standard LP strong duality implies that the dual attains the same optimal value.
\end{proof}
By complementary slackness, if \((p^\star,t^\star)\) and \((q^\star,z^\star)\) are optimal primal and dual solutions, then any cut \(x\) in the support of \(p^\star\) must satisfy the corresponding tight dual constraint
\[
\sum_{e \in E} Y_{e x} q^\star_e = z^\star.
\]
Equivalently, the support of an optimal Fair Cut Cover distribution can be chosen from cuts that are optimal solutions of the weighted MaxCut instance defined by \(q^\star\). Furthermore, recent work places Fair Cut Cover in the approximation landscape of Grothendieck-cover problems. 
\begin{remark}[Problem complexity \cite{benedetto2024generalized}]
    Fair-cut cover is $\mathsf{APX-Complete}$ and under Unique Game Conjectures, SDP + hyperplane rounding is the best possible polynomial algorithm for Fair-Cut Cover
\end{remark}
Second, using the framework of gauge duality between MaxCut and Fractional Cut Cover, \cite{vsamal2015cubical, neto2019fractional} can show that
\begin{align}
     \mathrm{MAXCUT}(G)\eta(G) \geq \abs{E}
\end{align}
Consequently, this gives Fair Cut Cover a non-trivial upper bound
\begin{align}
    \bar{\eta}(G) \leq \frac{\text{MAXCUT}(G)}{\abs{E}}  
\end{align}
Moreover, the above equality holds for edge-transitive graphs through the following proposition, which tightly connects MaxCut and Fair Cut Cover.
\begin{proposition}[Lemma 2.1 \cite{vsamal2015cubical}]
    \label{coro:edge_transitive}
    If the considering graph $G$ is edge-transitive, $\bar{\eta}(G) = \frac{\mathrm{MaxCut}(G)}{\abs{E}}$ and the probability of all edge being cut are equal
\end{proposition}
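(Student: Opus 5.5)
The plan is to prove matching upper and lower bounds on \(\bar{\eta}(G)\). The lower bound comes from symmetrizing a maximum cut over the automorphism group \(\mathrm{Aut}(G)\), and edge-transitivity is exactly what makes the symmetrized cut probabilities uniform.

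\emph{Upper bound.} For any cut distribution \(p\), the minimum edge cut probability is at most the average one. So
\[
\min_{e\in E}\langle Y_e,p\rangle \le \frac{1}{\abs{E}}\sum_{e\in E}\langle Y_e,p\rangle = \frac{1}{\abs{E}}\,\mathbb{E}_{x\sim p}\big[\mathrm{cut}(x)\big] \le \frac{\mathrm{MaxCut}(G)}{\abs{E}},
\]
which is the gauge-duality bound \(\bar{\eta}(G)\le \mathrm{MaxCut}(G)/\abs{E}\) already recorded above. This step needs no symmetry.

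\emph{Lower bound by symmetrization.} Fix a maximum cut \(x^\star\). For \(\sigma\in\mathrm{Aut}(G)\), let \((\sigma x^\star)_v = x^\star_{\sigma^{-1}(v)}\). Define \(p\) as the uniform mixture of the cuts \(\sigma x^\star\) over \(\sigma\in\mathrm{Aut}(G)\), counted with multiplicity. The cut \(\sigma x^\star\) cuts the edge \(e\) iff \(x^\star\) cuts \(\sigma^{-1}(e)\), hence
\[
\Pr_e(p) = \frac{1}{\abs{\mathrm{Aut}(G)}}\sum_{\sigma}\mathds{1}\big[x^\star \text{ cuts } \sigma^{-1}(e)\big].
\]
By edge-transitivity the orbit of \(e\) is all of \(E\). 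By orbit--stabilizer, each edge \(f\in E\) equals \(\sigma^{-1}(e)\) for exactly \(\abs{\mathrm{Aut}(G)}/\abs{E}\) automorphisms \(\sigma\). Therefore \(\Pr_e(p)=\mathrm{cut}(x^\star)/\abs{E}=\mathrm{MaxCut}(G)/\abs{E}\) for every edge \(e\). This gives the matching lower bound and a distribution with all edge cut probabilities equal. Applying the \(\mathbb{Z}_2\)-symmetrization lemma of this appendix makes \(p\) \(\mathbb{Z}_2\)-symmetric without changing any \(\Pr_e\), since \(\mathds{1}[x_u\neq x_v]\) is even in \(x\).

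\emph{Main obstacle: the equal-probability clause.} It has to be read as ``there is an optimal distribution with all edge cut probabilities equal,'' not ``every optimal distribution has this property.'' To make this robust, I would take an arbitrary optimal \(p^\star\) and average it over \(\mathrm{Aut}(G)\) in the same way as above. By the same orbit-counting argument, each averaged edge probability becomes the mean of \(\Pr_f(p^\star)\) over \(f\in E\). That mean is at least \(\min_f\Pr_f(p^\star)=\bar{\eta}(G)\), so the averaged distribution is still optimal and now has uniform edge probabilities. If one wants the stronger statement for every optimal \(p^\star\), the upper-bound chain gives it: equality \(\min_e\Pr_e(p^\star)=\mathrm{MaxCut}(G)/\abs{E}\) forces the minimum to equal the average, so all \(\Pr_e(p^\star)\) are equal. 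That closes the proof.
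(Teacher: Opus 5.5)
Your proof is correct. The paper gives no argument of its own here: it cites \v{S}\'amal's Lemma 2.1 together with the gauge-duality bound $\mathrm{MAXCUT}(G)\,\eta(G)\ge\abs{E}$. Your route is the standard one, and it is the general form of the paper's complete-graph lemma. There the upper bound is the same chain $\min\le$ average $\le\mathrm{MAXCUT}/\abs{E}$, and the $k$-subset sampling distribution is exactly the orbit average of a maximum cut under $\mathrm{Aut}(K_n)=S_n$. Your extra observation, that equality forces every optimal distribution to have uniform edge cut probabilities (and to be supported on maximum cuts), is a slightly stronger reading of the equal-probability clause than the paper states.
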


\section{Semidefinite Programming for Fair Cut Cover}

\label{appendix:SDP}
Based on the fact that Fractional Cut Cover is the anti-blocker and gauge dual of MaxCut, \cite{vsamal2015cubical, neto2019fractional, benedetto2025primal} defines an SDP formulation for Fractional Cut Cover on a graph $G(V,E)$ as
\begin{equation}
    \label{eq:fractional_cut_cover_SDP}
    \begin{aligned}
        \mathrm{SDP}^{\circ}(G) &= \min_{N} (s) \\ 
    \text{s.t: }&  \frac{w_{ij}}{4}(N_{ii} + N_{jj} - 2N_{ij}) \geq \mathbf{1}_{E}, \forall ij \in E \\
                & N \succeq 0, \quad \text{diag}(N) = s\mathbf{1}_{V} 
    \end{aligned}
\end{equation}

Similarly, we can define a Fair Cut Cover SDP based on Fractional Cut Cover SDP with a positive-semidefinite matrix $C$ where $\text{diag}(C) = \mathbf{1}_{V}$ and let $s = \frac{1}{t}$, it can be easily seen that $s \geq 1$ gives $0\leq t \leq 1$.

\begin{equation}
    \begin{aligned}
        \mathrm{SDP}(G) \coloneq& \max_{C} (t) \\ 
    \text{s.t: }&  \frac{w_{ij}}{2}(1 - C_{ij}) \geq t\mathbf{1}_{E}, \forall ij \in E \\
                & C \succeq 0, \qquad \text{diag}(C) = \mathbf{1}_{V} 
    \end{aligned}
\end{equation}

These SDPs can be solved using the established Goemans-Williamson scheme \cite{goemans1995improved}. It is proved in \cite{benedetto2024generalized} that under UGC, this GW approach yields the best polynomial-time approximation algorithm. By embeds the PSD matrix $C$ onto the $k-$dimensional sphere where $C_{ij} = \langle v_{i}, v_{j}\rangle$ and $v_{i} \in \mathbb{R}^{k}$, we can either solve this SDP directly using methods such as interior point methods with $k = \abs{V}$ or approximate the SDP by performing rank-k relaxation on the sphere such as Burer-Monterio \cite{burer2002rank}. 
The SDP embedding itself is sufficient to construct a cut distribution via a process called hyperplane rounding \cite{goemans1995improved}. More details about hyperplane rounding are provided in Appendix \ref{appendix:sdp_rounding}.

As suggested from \cite{goemans1995improved}, the random hyperplane rounding procedure is used to sample from the implicit representation of cuts when solving the SDP, see Appendix \ref{appendix:sampling}. Therefore, the Fractional Cut Cover/ Fair Cut Cover value is defined based on the cut distribution $p$ yielded by GW hyperplane rounding on the optimal SDP result. Next, we look at the monotonicity of the SDP solver and prove it as follows
\begin{lemma}[Monotonicity of SDP solver (Corollary \ref{cor:non_trivial_upperbound})]
    Let $G=(V,E,w)$ be a weighted graph and $H=(V_H,E_H,w)$ a subgraph with $V_H\subseteq V$ and $E_H\subseteq E$. Let $\mathrm{SDP}(G)$ and $\mathrm{SDP}(H)$ denote the optimal values yielded by SDP when applied to Fair Cut Cover. Then $\mathrm{SDP}(G) \le \mathrm{SDP}(H)$. In particular, if $H$ attains the minimum SDP value among all subgraphs of $G$, i.e.,
    \begin{align*}
        \mathrm{SDP}(H) = \min\{\mathrm{SDP}(H') : H' \subseteq G\} =: \Gamma,
    \end{align*}
    then $\mathrm{SDP}(G) = \Gamma$, and there exists an optimal solution $(t^*,C^*)$ for $G$ with $t^*=\Gamma$ whose vertex restriction $C^*[V_H, V_H]$ is also optimal for $H$.
\end{lemma}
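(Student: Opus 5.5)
The plan is to prove monotonicity by restriction. Take any feasible pair \((t,C)\) for \(\mathrm{SDP}(G)\) and produce a feasible pair for \(\mathrm{SDP}(H)\) with the same objective value \(t\).

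\textbf{Restriction step.} Let \(C' \coloneq C[V_H,V_H]\) be the principal submatrix indexed by \(V_H\). The three constraints of \(\mathrm{SDP}(H)\) then follow directly:
\begin{itemize}
\item A principal submatrix of a PSD matrix is PSD. Equivalently, if \(C_{ij}=\langle v_i,v_j\rangle\), then \(C'\) is the Gram matrix of \(\{v_i\}_{i\in V_H}\). So \(C'\succeq 0\).
\item Clearly \(\operatorname{diag}(C')=\mathbf 1_{V_H}\).
\item Every edge \(ij\in E_H\) lies in \(E\), has the same weight \(w_{ij}\), and has \(C'_{ij}=C_{ij}\). Hence \(\frac{w_{ij}}{2}(1-C'_{ij})\ge t\).
\end{itemize}
Thus \((t,C')\) is feasible for \(H\), and \(\mathrm{SDP}(H)\ge t\). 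Taking the supremum over feasible \((t,C)\) for \(G\) gives \(\mathrm{SDP}(G)\le \mathrm{SDP}(H)\).

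The maximum is attained in both programs. The feasible set in \(C\) is the compact elliptope, and \(t\) is bounded by \(\min_{ij}\frac{w_{ij}}{2}(1-C_{ij})\), which is continuous in \(C\). So we may work with optimal solutions throughout. (If \(E_H=\emptyset\), \(t\) is unbounded and the inequality holds trivially.)

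\textbf{The ``in particular'' part.} Since \(G\subseteq G\), we have \(\Gamma\le \mathrm{SDP}(G)\). The first part gives \(\mathrm{SDP}(G)\le \mathrm{SDP}(H')\) for every \(H'\subseteq G\), so \(\mathrm{SDP}(G)\le\Gamma\). Hence \(\mathrm{SDP}(G)=\Gamma\).

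Now take an optimal \((t^*,C^*)\) for \(G\), so \(t^*=\Gamma\). By the restriction step, \((\Gamma, C^*[V_H,V_H])\) is feasible for \(H\). Its value is \(\Gamma=\mathrm{SDP}(H)\), which is the optimum, so the restriction is optimal for \(H\).

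\textbf{Main obstacle.} The only step with content is the PSD-ness of the principal submatrix, and it is immediate from the Gram or vector representation. The rest is careful bookkeeping: edges of \(H\) keep their weights, and the objective value transfers unchanged. I expect no real obstacle.
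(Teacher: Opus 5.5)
Your proof is correct and matches the paper's argument essentially step for step. You restrict to the principal submatrix $C[V_H,V_H]$ to get a feasible pair for $H$ with the same $t$, deduce $\Gamma \le \mathrm{SDP}(G) \le \mathrm{SDP}(H) = \Gamma$, and restrict an attained optimum of $G$ to obtain an optimum of $H$; your attainment argument (maximizing the continuous $\min_{ij}\frac{w_{ij}}{2}(1-C_{ij})$ over the compact elliptope, with the $E_H=\emptyset$ caveat) is just a slightly tidier version of the paper's Weierstrass step.
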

\begin{proof}
    Given any feasible solution $(t, C)$ for $G$, restricting $C$ to the vertex set of $H$, we obtain the principal sub-matrix $C^{H} \coloneqq C[V_H, V_H]$ which is positive semi-definite (principal submatrix property), $\text{diag}(C^{H}) = \mathbf{1}$ and the constraints $\frac{w_{ij}}{2}(1-C^{H}_{ij}) \geq t, \forall (ij) \in E_H$ are satisfied. Therefore, any feasible solution $(t, C)$ for the SDP of original $G$ constitutes a feasible solution $(t, C^H)$ for the SDP of the subgraph $H$ and its optimum $\mathrm{SDP}(H) \geq t$. By taking over the supremum over all such $t$ we have the inequality
    \begin{align*}
        \mathrm{SDP}(G) \leq \mathrm{SDP}(H) 
    \end{align*}
    Next, assume that $\mathrm{SDP}(H) = \min\{\mathrm{SDP}(H') : H' \subseteq G\} =: \Gamma$, since $G \subseteq G$ is a subgraph of itself, we have that $\Gamma \leq  \mathrm{SDP}(G)$, combining with the inequality above, we have that $\Gamma = \mathrm{SDP}(G)$. Finally, as the feasible region of the SDP on $G$ is nonempty, closed, and bounded (in particular, $C\succeq 0$ with $\mathrm{diag}(C)=\mathbf{1}$ implies $|C_{ij}|\le 1$, and $t$ is bounded above by $\max_{(i,j)\in E} w_{ij}$), by the Weierstrass extreme value theorem, the optimal value $\mathrm{SDP}(G)=\Gamma$ is attained by some feasible pair $(t^*,C^*)$ on $G$ and the $(t^*, C^*[V_H, V_H])$ is the optimal solution on $H$. When the subgraph $H$ contains only edge deletions, without vertex deletions, this result follows trivially, since $C$ remains a feasible solution but with stricter constraints.
\end{proof}
We now mention some of the known characteristics of the given SDPs, which were mentioned in \cite{benedetto2025primal} for $\mathrm{SDP}^{\circ}(G)$ due to its anti-blocker and gauge duality to \JS{MaxCut}
, which gives the corresponding characteristic for $\mathrm{SDP}(G)$
\begin{align}
    \mathrm{SDP}^{\circ}(G) &\leq \eta(G) \leq \frac{1}{\alpha_{\mathrm{GW}}}\mathrm{SDP}^{\circ}(G) \\
    \alpha_{\mathrm{GW}}\mathrm{SDP}(G)  &\leq \bar{\eta}(G) \leq \mathrm{SDP}(G)
\end{align}
Furthermore, based on the gauge duality with MaxCut, denote $\mathrm{SDP}_{\mathrm{MC}}$ as the MaxCut SDP, and we can solve it optimally, \cite{vsamal2015cubical, neto2019fractional} gives that
\begin{align}
    \mathrm{SDP}_{\mathrm{MC}}(G)\mathrm{SDP}^{\circ}(G) \geq \abs{E}
\end{align}
Consequently, we also have the upper bound for Fair Cut Cover SDP as follows:
\begin{align}
    \mathrm{SDP}(G) \leq \frac{\mathrm{SDP}_{\mathrm{MC}}}{\abs{E}}
\end{align}

\section{SDP + hyperplane rounding}
\label{appendix:sdp_rounding}
As shown in ~\cite{benedetto2024generalized, benedetto2025primal}, we can use ``Grothendieck cover'' to prove the Fractional Cut Cover problem is $\mathsf{APX-Complete}$ and show that SDP with hyperplane rounding are the best polynomial-time approximation algorithms under UGC. We direct the reader to the cited papers for more details on the complexity proof. Here, we look into the algorithms and derives some of their characteristics.
\begin{definition}[Gaussian hyperplane rounding \cite{goemans1995improved}]
     Let $G = (V, E)$ be an unweighted graph and let $p \in \{\pm1\}^{\abs{V}}$ be any distribution over cuts. Assume we already have $\{v_i\}_{i\in V}\subset \mathbb{R}^d$ be an SDP embedding with $\norm{v_i}_2=1$ for all $i$. Draw $g \sim \mathcal{N}(0,\mathbf{1}_d)$ and define the random cut $X(g)\in\{\pm 1\}^\abs{V}$ by $X_i(g)=\mathrm{sign}(\langle g,v_i\rangle)$. 
     Let $\mathcal{D}_{\mathrm{hr}}$ be the space of cut distributions corresponding to hyperplane rounding and denote the cut distribution from rounding $p_{\mathrm{hr}} \in \mathcal{D}_{\mathrm{hr}}$. The edge cut probability of each edge $(i, j)$ can be expressed as
    \begin{align*}
        \mathrm{Pr}_e(p_{\mathrm{hr}}) &= \mathbb{E}_{X\sim p_{\mathrm{hr}}}\left[\mathds{1}[X_i\neq X_j]\right] 
        \\ &= \mathrm{Pr}_{g\sim\mathcal{N}(0,\mathbf{1}_d)}\left[\langle g, v_{i}\rangle \langle g, v_{j}\rangle < 0\right] 
        \\ &= \frac{\arccos(\langle v_{i}, v_{j}\rangle)}{\pi}
    \end{align*}
\end{definition}

\begin{proposition}
    \label{prop:sdp_optimal_HR}
    The SDP formulation returns the optimal embedding for hyperplane rounding for Fair Cut Cover.
\end{proposition}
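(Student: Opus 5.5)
The key observation is that the hyperplane-rounding value of any embedding is a monotone function of the inner products on the edges. For unit vectors $\{v_i\}_{i\in V}$, the rounding value is
\[
\min_{(i,j)\in E}\frac{\arccos\langle v_i,v_j\rangle}{\pi}.
\]
The map $c\mapsto \arccos(c)/\pi$ is strictly decreasing on $[-1,1]$. So the best embedding for rounding should be the one that makes the \emph{largest} edge inner product as small as possible. The plan is to show that this is exactly what the Fair Cut Cover SDP does, up to the monotone reparametrization $t=(1-c)/2$. Throughout I take the unweighted case $w_{ij}=1$, as in the definition of Gaussian hyperplane rounding.

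\textbf{Step 1: identify the set of embeddings.} Embeddings $\{v_i\}$ with $\norm{v_i}_2=1$ correspond exactly to Gram matrices $C\succeq 0$ with $\mathrm{diag}(C)=\mathbf{1}_V$. One direction is Cholesky factorization of $C$, which gives $C_{ij}=\langle v_i,v_j\rangle$ in dimension $|V|$. The other direction is that any Gram matrix of unit vectors is PSD with unit diagonal. Dimensions above $|V|$ add nothing, because the rounding distribution depends only on the Gram matrix: $g\mapsto(\langle g,v_i\rangle)_i$ is a centered Gaussian vector with covariance $C$. This is the set over which we optimize $\mathrm{HR}(C):=\min_{e}\arccos(C_e)/\pi$.

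\textbf{Step 2: rewrite the SDP objective.} For fixed $C$, the optimal $t$ in Equation~(\ref{eq:fair_cut_cover_SDP}) is $t(C)=\min_{(i,j)\in E}(1-C_{ij})/2$. Set $m(C)=\max_{(i,j)\in E}C_{ij}$. Then $t(C)=(1-m(C))/2$. By monotonicity of $\arccos$, $\mathrm{HR}(C)=\arccos(m(C))/\pi$. Both objectives are therefore strictly decreasing functions of the single quantity $m(C)$, taken over the same feasible set.

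\textbf{Step 3: conclude.} A feasible $C^\star$ minimizes $m$ if and only if it maximizes $t$, and if and only if it maximizes $\mathrm{HR}$. Hence any SDP optimum is an optimal embedding for hyperplane rounding, and
\[
\mathrm{SDP}_{\mathrm{HR}}(G)=\frac{1}{\pi}\arccos\bigl(1-2\,\mathrm{SDP}(G)\bigr).
\]
Attainment of the optimum follows from compactness of the elliptope, as in the monotonicity lemma of Appendix~\ref{appendix:SDP}.

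\textbf{Main obstacle: what "optimal embedding" means.} The argument in Steps 1--3 covers rounding with a single Gaussian hyperplane applied to one fixed embedding, which I would state as the precise claim. If mixtures of several embeddings were allowed, the rounding class would be convex hulls of such distributions, and the equivalence could fail. I will restrict $\mathcal{D}_{\mathrm{hr}}$ to single-embedding rounding, consistent with the definition of Gaussian hyperplane rounding given above.
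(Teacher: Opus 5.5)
Your proof is correct and takes essentially the same route as the paper: both rest on $c\mapsto\arccos(c)/\pi$ and $c\mapsto(1-c)/2$ being decreasing on $[-1,1]$, so maximizing either edge-minimum amounts to minimizing $\max_{(i,j)\in E}C_{ij}$ over the same feasible set, and the two problems share their optimizers. Your additions make the paper's terse argument precise: identifying embeddings with Gram matrices, attainment by compactness, the identity $\mathrm{SDP}_{\mathrm{HR}}(G)=\frac{1}{\pi}\arccos(1-2\,\mathrm{SDP}(G))$, and the caveat that the claim concerns rounding of a single embedding, which Theorem~\ref{thm:dist_subset} also implicitly requires.
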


\begin{proof} 
    For an embedding \(X\), hyperplane rounding cuts an edge \(e\) with probability \(\arccos(\langle X_u, X_v\rangle) / \pi.\) Because \(\arccos\) is monotonic decreasing on \([-1, 1],\) the solution to Fair Cut Cover for hyperplane rounding is given by
    \begin{align*}
        \mathrm{argmax}_X\min_e  \frac{\arccos(\langle X_u, X_v\rangle)}{\pi}
    \end{align*}
    is also a solution for 
    \begin{align*}
        \mathrm{argmax}_X\min_e \frac{1-\langle X_u, X_v\rangle}{2} = \mathrm{SDP}(G)
    \end{align*}
    Taking $t \coloneq \min_v $ yields the SDP formulation of Fair Cut Cover as defined in Equation (\ref{eq:fair_cut_cover_SDP}).
\end{proof}

\begin{lemma}[Theorem \ref{thm:dist_subset}]
    Let \(\mathcal{D}\) be the space of $\mathbb{Z}_{2}$-symmetric cut distributions and \(\mathcal{D}_{\mathrm{hr}}\) be the space of cut distributions corresponding to hyperplane rounding. Then \(
    \mathcal{D}_{\mathrm{hr}} \subsetneq \mathcal{D}.\)
\end{lemma}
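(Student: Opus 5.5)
Two things need to be shown: the inclusion $\mathcal{D}_{\mathrm{hr}} \subseteq \mathcal{D}$, and that it is strict.

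\emph{Inclusion.} Fix any unit embedding $\{v_i\}_{i\in V}$. The hyperplane-rounding distribution is
\[
p_{\mathrm{hr}}(x) = \Pr_{g}\bigl[\mathrm{sign}\langle g, v_i\rangle = x_i \ \forall i\bigr].
\]
The standard Gaussian is symmetric, so $g$ and $-g$ have the same law. Replacing $g$ by $-g$ flips every sign $X_i(g)$, which gives $p_{\mathrm{hr}}(x) = p_{\mathrm{hr}}(-x)$. Ties $\langle g, v_i\rangle = 0$ occur with probability zero, so the map $g \mapsto X(g)$ defines a genuine distribution on $\{\pm1\}^{|V|}$. Hence $\mathcal{D}_{\mathrm{hr}} \subseteq \mathcal{D}$.

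\emph{Strictness.} I would exhibit an explicit $\mathbb{Z}_2$-symmetric distribution whose edge-cut probabilities cannot be written as $\arccos(\langle v_i,v_j\rangle)/\pi$ for any Gram matrix. Take $K_n$ with $n\ge 4$ and let $p^\star$ be its optimal fair cut distribution. Concretely, $p^\star$ is uniform over all balanced cuts, or over all cuts whose sides differ in size by one when $n$ is odd, symmetrized under $x\mapsto -x$. By edge-transitivity and Proposition~\ref{coro:edge_transitive}, every edge is cut with probability
\[
\bar\eta(K_n) = \frac{\mathrm{MaxCut}(K_n)}{\binom n2} = \frac{\lfloor n^2/4\rfloor}{\binom n2}.
\]
Now suppose $p^\star = p_{\mathrm{hr}}$ for some embedding. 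Then $\arccos(\langle v_i,v_j\rangle)/\pi = \bar\eta(K_n)$ for all $i\neq j$, so every off-diagonal Gram entry equals
\[
c = \cos\bigl(\pi\,\bar\eta(K_n)\bigr).
\]
The matrix $(1-c)I + cJ$ is PSD exactly when $c \ge -1/(n-1)$. So it suffices to check that $\pi\bar\eta(K_n) > \arccos\bigl(1/(1-n)\bigr)$. This is the same inequality as $\bar\eta(K_n) > \mathrm{SDP}(K_n)$ in the arccos scale, and it would contradict optimality of hyperplane rounding via Proposition~\ref{prop:sdp_optimal_HR}.

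\emph{Main obstacle.} The hard step is verifying this inequality for all $n\ge4$. For $n=4$: $\bar\eta = 4/6 = 2/3$, so $\cos(2\pi/3) = -1/2 < -1/3$, and the inequality holds. For general $n$, I would set
\[
\bar\eta \ge \tfrac12 + \tfrac{1}{2(n-1)} \quad (n\text{ even; the odd case is similar, with } \tfrac{n+1}{2n}),
\]
and compare $\cos\bigl(\tfrac{\pi}{2} + \tfrac{\pi}{2(n-1)}\bigr) = -\sin\bigl(\tfrac{\pi}{2(n-1)}\bigr)$ with $-1/(n-1)$. The inequality $\sin(\pi y/2) > y$ on $(0,1)$ holds by concavity of $\sin$ on $[0,\pi/2]$, since the chord from $0$ to $1$ lies below the curve. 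This gives strict infeasibility for every $n\ge 4$.

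The odd case must be checked in the same way. At $n=3$ equality holds, since $2/3$ gives $c=-1/2=-1/(n-1)$. This is why the statement needs $n\ge4$: $K_3$ does not yield a separating example, but a single counterexample such as $K_4$ already suffices for strictness. Hence $p^\star \in \mathcal{D}\setminus\mathcal{D}_{\mathrm{hr}}$.
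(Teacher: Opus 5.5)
Your proof is correct and follows essentially the same route as the paper. Gaussian sign-symmetry gives $\mathcal{D}_{\mathrm{hr}}\subseteq\mathcal{D}$. For strictness, the optimal $K_n$ distribution would force an equicorrelation Gram matrix with off-diagonal entry below $-1/(n-1)$, which is not PSD; you actually verify the needed inequality $\pi\bar\eta(K_n) > \arccos(1/(1-n))$ (via the chord bound for even $n$ and the explicit $K_4$ check), which the paper leaves implicit. One small caveat: for odd $n$ the required bound is $\sin(\pi/(2n)) > 1/(n-1)$, which the chord argument alone does not give (it yields only $>1/n$). This does not affect the result, since the single $K_4$ counterexample already establishes strictness.
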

\begin{proof}

    First, we show that every distribution \(p \in \mathcal{D}_{\mathrm{hr}}\) is \(\mathbb{Z}_2\)-symmetric. Let \(X \in \mathbb{R}^{n\times k}\) be an embedding that can produce \(p,\) that is for all \(x\) we have \(p(x) = \mathrm{Pr}[\mathrm{sign}(X\varepsilon) = x]\) where \(\varepsilon \sim \mathrm{Normal}(0, I).\) Because of the symmetry of the normal distribution, \(-\varepsilon\) is equivalent as a random variable to \(\varepsilon,\) so
    \begin{align*}
        p(x) &= \mathrm{Pr}[\mathrm{sign}(X\varepsilon) = x]\\ 
        &= \mathrm{Pr}[\mathrm{sign}(X(-\varepsilon)) = x]\\ 
        &= \mathrm{Pr}[\mathrm{sign}(X\varepsilon) = - x] = p(- x).
\end{align*}

    To show that \(\mathcal{D}_{\mathrm{hr}}\) is a strict subset, we look at an extreme case where $\tilde{C}$ is an equicorrelation matrix with constant $\rho$ for off-diagonal entries. Note that this is equivalent to considering complete graphs $K_n$ with $n > 2$ nodes. It is known that $\tilde{C}$ is positive semi-definite if and only if $-\frac{1}{n-1}\leq \rho \leq 1$. Let's check if $\rho < -\frac{1}{n-1}$ yields a valid cut distribution. We have
    \begin{align}
        \cos(\pi P_{ij}) = \rho& < -\frac{1}{n-1} \\\
            P_{ij} &> \frac{1}{\pi}\arccos(-\frac{1}{n-1})
    \end{align}
    As from the analysis of the complete graph in Proposition \ref{prop:best_Kn_value}, we can also see that if we put an upper bound on $P_{ij} \leq \frac{n}{2(n-1)}$ for $n$ even and $P_{ij} \leq \frac{n+1}{2n}$ for $n$ odd, which also satisfies the odd-cycle inequality and yields that the grand sum of $P < MAXCUT(K_n)$ constituting a valid cut distribution. This proves the lemma and also explains the gap for the complete graph. As a result, we can also conclude that SDP + randomized hyperplane rounding fails to capture the optimal Fair Cut Cover distribution of $K_n$ for $n \geq 4$
\end{proof}
In graph theory, the Lov{\'a}sz theta function or Lov{\'a}sz number, $\vartheta(G)$ was introduced in \cite{lovasz1979shannon} as an upper bound on the Shannon capacity of the graph. Accurate numerical approximations of the Lov{\'a}sz theta function can be solved in polynomial time as the solution of the vector $k$-coloring SDP defined in \cite{karger1998approximate}. This vector $k$-coloring SDP is the same as our SDP for fair cut coverage when considering vertex-transitive graphs. Therefore, when considering structured families of graphs such as the complete graph $K(n)$, the cycle graph $C(n)$, and the Kneser graph $KG(n, k)$, we can use the established Lov{\'a}sz number on the complement graph to get the optimal SDP solution.
\begin{remark}
     The solution to the SDP for the Fair Cut Cover for structured families of vertex-transitive graphs $G$ is $t^* = -1/(\vartheta(\bar{G})-1)$, where the Lov{\'a}sz number on the complement graph $\bar{G}$ are as follows
     \begin{enumerate}
         \item Complete graph: $\vartheta(\bar{K_n}) = n$
         \item Cycle graph:\\ $\vartheta(\bar{C_n}) =  2 \text{ if }n \text{ is even}, \frac{1+\cos(\pi/n)}{\cos(\pi/n)} \text{ if }n \text{ is odd}$
         \item Kneser graph: $\vartheta(\bar{KG}(n, k)) = \frac{n}{k}$
     \end{enumerate}
\end{remark}
Based on this optimal solution to the SDP with given $t^{*}$, we can get the best expected value when performing GW hyperplane rounding through the Gaussian distribution. More specifically, the expected Fair Cut Cover value from the empirical sampled  distribution is
\begin{align}
    \mathrm{SDP}(G) = \frac{1}{\pi}\arccos{(t^{*})}
\end{align}



\begin{lemma}[Theorem \ref{thm:Kn_separation}]
        For any complete graph \(K_n,\)
\[
\mathrm{Q}_1^{\mathrm{std}}(K_n) > \mathrm{SDP}_{\mathrm{HR}}(K_n) = \frac{1}{\pi}\arccos\!\Big(\frac{1}{1-n}\Big).
\]
\end{lemma}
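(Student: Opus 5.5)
The plan is to compute the two sides separately and then compare them, using edge-transitivity of \(K_n\) throughout. I will assume \(n\ge 3\): for \(n=2\) both sides equal \(1\), since \(\arccos(-1)/\pi=1\) and one edge can be cut with certainty, so the inequality is not strict there.

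\emph{Step 1: the hyperplane-rounding value.} By Proposition~\ref{prop:sdp_optimal_HR}, the best rounding embedding is an optimal solution of the SDP in Equation~(\ref{eq:fair_cut_cover_SDP}). For any unit vectors \(v_1,\dots,v_n\) we have \(0\le\|\sum_i v_i\|^2=n+\sum_{i\neq j}\langle v_i,v_j\rangle\). So the average off-diagonal inner product is at least \(-1/(n-1)\), and hence \(\max_{i\neq j}\langle v_i,v_j\rangle\ge -1/(n-1)\). Because \(\arccos\) is decreasing, the minimum edge-cut probability \(\min_{ij}\arccos(\langle v_i,v_j\rangle)/\pi\) is at most \(\frac1\pi\arccos(-\frac{1}{n-1})\). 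The regular simplex embedding, with all inner products equal to \(-1/(n-1)\), attains this bound. This is also consistent with the Lov\'asz-theta remark, since \(\vartheta(\bar K_n)=n\). Together these give \(\mathrm{SDP}_{\mathrm{HR}}(K_n)=\frac1\pi\arccos(\frac{1}{1-n})\).

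\emph{Step 2: the one-layer QAOA value.} Since \(K_n\) is edge-transitive, \(\langle Z_uZ_v\rangle\) is the same for every edge, so \(\mathrm{Q}_1^{\mathrm{std}}(K_n)\) equals the maximal one-layer single-edge MaxCut expectation. I will use the known closed form for \(p=1\) QAOA on an edge \((u,v)\) whose endpoints have \(n-2\) other neighbours each, all \(n-2\) of them shared (i.e.\ \(n-2\) triangles):
\begin{align*}
F(\beta,\gamma)=\tfrac12+\tfrac12\sin4\beta\,\sin\gamma\cos^{n-3}\gamma-\tfrac14\sin^2 2\beta\,\bigl(1-\cos^{n-2}2\gamma\bigr).
\end{align*}
It then suffices to exhibit one pair \((\beta,\gamma)\) with \(F(\beta,\gamma)>\frac1\pi\arccos(-\frac{1}{n-1})\).

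\emph{Step 3: the comparison, which I expect to be the main obstacle.} Asymptotically the target is \(\frac12+\frac{1}{\pi(n-1)}+O(n^{-3})\). For the QAOA side I will take \(\gamma=c/\sqrt n\) and \(\beta=\delta\) small. The gain term is then \(\approx 2\delta\, c\, e^{-c^2/2}/\sqrt n\), and the loss term is \(\approx\delta^2(1-e^{-2c^2})\). Optimizing over \(\delta\) gives an excess of \(\approx \frac{c^2e^{-c^2}}{(1-e^{-2c^2})\,n}\). At \(c=1\) this is about \(0.4255/n\), which beats \(1/(\pi n)\approx0.318/n\).

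To make this rigorous for all large \(n\), I will fix explicit parameters, say \(\gamma=1/\sqrt n\) and \(\sin4\beta\) chosen proportional to \(1/\sqrt n\). I will then lower-bound the positive term and upper-bound the negative term using elementary inequalities such as \(\cos^m x\ge 1-mx^2/2\), \(\cos^m x\le e^{-mx^2/2}\), \(\sin x\ge x-x^3/6\), and \(\sin^22\beta\le \tfrac12\sin 4\beta\tan 2\beta\). The upper side is bounded via \(\arccos(-y)/\pi\le \frac12+\frac{y}{\pi}+\frac{y^3}{5}\) for small \(y\). This yields an explicit threshold \(n_0\) beyond which the inequality holds.

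For \(3\le n<n_0\) I will verify the strict inequality directly by evaluating \(F\) at an optimized pair \((\beta,\gamma)\) for each \(n\). For example, at \(n=3\) the choice \(\beta=\pi/8\), \(\gamma\) near \(\pi/6\) can be checked by hand against \(\arccos(-1/2)/\pi=2/3\). The delicate part is keeping the error terms controlled well enough that \(n_0\) is small, so that only finitely many cases remain for direct evaluation.
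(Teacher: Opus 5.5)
\textbf{Overall.} Your route is the paper's route. You bound the rounding value by the regular-simplex argument (your Step 1 is correct, and more self-contained than the paper's treatment), use the edge-transitive one-layer closed form, and pick $\gamma$ with $\sin^2\gamma\approx\frac1{n-1}$; the paper takes exactly $\cos^2\gamma=1-\frac1{n-1}$, which is your $c=1$.

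\textbf{The exponent in Step 2 is off by one.} Each endpoint has $n-2$ neighbours besides the other endpoint, and that count is the exponent in the linear term. So the term must be $\tfrac12\sin4\beta\,\sin\gamma\cos^{n-2}\gamma$, as in the paper; compare $\cos^{\Delta-1}\gamma$ in the paper's triangle-free regular case. Your expression cannot be right. All edges of $K_n$ have the same expectation under standard QAOA, so at $n=3$, $\gamma=\pi/2$ it would have every edge of the triangle cut with probability $(1+\sqrt5)/4\approx0.81$. That exceeds $\mathrm{MaxCut}(K_3)/\lvert E\rvert=2/3$. At $n=2$ it even contains $\tan\gamma$ and is unbounded.

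\textbf{The case $n=3$ is false.} The formula error hides the fact that the strict inequality fails at $n=3$, so your plan to verify it there by hand cannot succeed. Every cut of the triangle cuts at most two of its three edges. Hence $\bar{\eta}(K_3)=2/3=\frac1\pi\arccos(-\frac12)$, and no distribution strictly beats hyperplane rounding on $K_3$. With the correct formula, $\frac12+\frac14\sin4\beta\sin2\gamma-\frac12\sin^2 2\beta\sin^2\gamma$, one gets $\mathrm{Q}_1^{\mathrm{std}}(K_3)=2/3$ exactly, attained at $2\beta=\gamma$, $\sin^2\gamma=\frac13$. Your test point $\beta=\pi/8$, $\gamma=\pi/6$ gives about $0.654$, not $0.6875$. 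The result therefore holds only for $n\ge4$, which is what the paper's proof covers: it works with $d=n-2\ge2$ and notes equality for $K_3$.

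\textbf{Your error bounds lose the margin.} Once the exponent is fixed, your asymptotics are unchanged, but the margin is thin. It is a factor of about $0.4255\,\pi\approx1.34$, and at $n=4$ the best value is about $0.616$ against $0.608$. Your listed inequalities lose this margin at $\gamma=1/\sqrt n$:
\begin{itemize}
\item Bernoulli only gives $\cos^{n-2}\gamma\ge\frac12+\frac1n$ instead of roughly $e^{-1/2}$. That alone drops your certified lower bound on $F-\frac12$ to about $0.29/n$, below the required $\frac1\pi\arcsin\frac1{n-1}\approx\frac1{\pi n}$.
\item $\cos^m x\le e^{-mx^2/2}$ bounds the loss factor $1-\cos^{n-2}2\gamma$ in the wrong direction. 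You need a lower bound on $\cos^{n-2}2\gamma$.
\end{itemize}

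\textbf{Two ways to fix the bounds.} One option is two-sided estimates such as $\ln\cos y\ge-\frac{y^2}{2}-\frac{y^4}{4}$ for $\lvert y\rvert\le1$. The other is to follow the paper and maximize over $\beta$ exactly. This yields $\frac12+\frac18\bigl(\sqrt{A^2+16P^2}-A\bigr)$ with $P=\sin\gamma\cos^{n-2}\gamma$ and $A=1-\cos^{n-2}2\gamma$. It reduces the claim to $(1-t)t^{n-2}>4c^2+c\bigl(1-(2t-1)^{n-2}\bigr)$, where $t=\cos^2\gamma$ and $c=\frac1\pi\arcsin\frac1{n-1}$. Either way, check the smallest cases ($n=4,5$) directly rather than relying on crude uniform bounds.
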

\begin{proof}
    For edge transitive graphs, a \(k=1\) standard QAOA circuit with parameters \(\gamma,\beta\) produces a distrubtion that cuts an edge \(uv\) with probability
    \begin{align}        
    \label{eq:p=1_K_n}
    \langle{\sigma_z^{u}\sigma_z^{v}}\rangle_{\gamma, \beta} &= -\sin(4\beta)\sin(\gamma)\cos^{n-2}(\gamma)+ \nonumber\\
    &\qquad \frac{1}{2}\sin^{2}(2\beta)(1-\cos^{n-2}(2\gamma))
    \end{align}
    \cite{shaydulin2021exploiting, wang2018quantum}.
    Therefore, we have the expected Fair Cut Cover value for $k=1$ QAOA as $\mathrm{Q}^{\mathrm{std}}_{1} = (1-\min_{\gamma, \beta}
    \langle{\sigma_z^{i}\sigma_z^{j}}\rangle)/2$. Here, we want to prove the fact that there exist $\gamma, \beta$ such that 
    \begin{equation}
        \label{eq:Q1_geq_sdp}
        \begin{aligned}
            \frac{1 - \langle{\sigma_z^{i}\sigma_z^{j}}\rangle_{\gamma, \beta}}{2} &> \frac{1}{\pi}\arccos\left(-\frac{1}{n-1}\right)\\
            & = \frac{1}{2} + \frac{1}{\pi}\arcsin\left(\frac{1}{n-1}\right)
        \end{aligned}
    \end{equation}
    which is equivalent to proving that there exists
    \begin{align*}
        -\frac{\langle{\sigma_z^{i}\sigma_z^{j}}\rangle_{\gamma, \beta}}{2} > \frac{1}{\pi}\arcsin\left(\frac{1}{n-1}\right)
    \end{align*}
    We first denote $f(\gamma, \beta)_{\gamma, \beta} \coloneq \frac{-\langle{\sigma_z^{i}\sigma_z^{j}}\rangle}{2}$. Following the proof strategy of \cite{bae2024recursive} with the optimal $\beta_{\gamma}^{*}$, we have $f(\gamma, \beta_{\gamma}^{*}) \coloneq (1/8)\left(\sqrt{A^{2}(\gamma) + B^{2}(\gamma)} - A\right)$, and $d \coloneq n - 2$
    \begin{align}
        A(\gamma) &\coloneq 1 - (2\cos^{2}(\gamma) - 1)^{d} \\
        B(\gamma) &\coloneq 16\sin^{2}(\gamma)\cos^{2d}(\gamma)
    \end{align}
    Next, we try to prove there exist $(\gamma, \beta)$ such that $f(\gamma, \beta_{\gamma}^{*})> \frac{1}{\pi}\arcsin\left(\frac{1}{n-1}\right)$, where, with $t \coloneq \cos^{2}(\gamma) \in [0,1]$ , this inequality equal
    \begin{equation}
     \begin{aligned}
        (1-t)t^{d} &> \frac{4\arcsin^{2}\left(\frac{1}{d+1}\right)}{\pi^2} \\
        &+ \frac{1}{\pi} \arcsin\left(\frac{1}{d+1}\right)\left(1-(2t-1)^{d}\right)
    \end{aligned}   
    \end{equation}
    Here, we prove that by picking $t = 1-\frac{1}{d+1}$, the following equality holds for all $d \geq 2$, when denoting $x \coloneq \frac{1}{d+1}$ 
    \begin{equation}
     \begin{aligned}
        x(1- x)^{d} &> \frac{4\arcsin^{2}(x)}{\pi^2} \\
        &+ \frac{1}{\pi} \arcsin(x)\left(1-(1-2x)^{d}\right)
    \end{aligned}   
    \end{equation}
    We first try to give the lower bound for the LHS. Note that for $u \in [0, 1/2]$, we have the following inequality $\ln(1-u) \geq -u - u^2$
    \begin{equation}
       \begin{aligned}
            x(1-x)^{d} &= x\exp(d\ln(1-x)) \\ 
            &\geq x\exp(-d(u+u^2)) \\
            &\geq x\exp(-1) \\
            &= \frac{x}{e}
        \end{aligned} 
    \end{equation}
    For the RHS, we first look at the upperbound of $1-(1-2x)^{d}$, which is a decreasing function. Therefore, as $d = n - 2\geq 2$, we can have $1-(1-2x)^{d} \leq 0.64$. Second, we bound the $\arcsin(x) \leq x + x^{3}/3$ which gives the total upperbound of
    \begin{align}
        \mathrm{RHS} \leq \frac{4}{\pi^2}\left(x + \frac{x^3}{3}\right)^2  + \frac{0.64}{\pi}\left(x + \frac{x^3}{3}\right)
    \end{align}
    Furthermore, it can be proved that 
    \begin{align}
        g(x) \coloneq \frac{4}{\pi^2}x\left(1 + \frac{x^2}{3}\right)^2  + \frac{0.64}{\pi}\left(1 + \frac{x^2}{3}\right) \leq \frac{1}{e}
    \end{align}
    As $g(x)$ is increasing function when $x \geq 0$ and the fact that $g(1/3) < 1/e$ (as $d\geq 2$ which leads to $x \leq 1/3$). Therefore, the following inequality holds
    \begin{equation}
     \begin{aligned}
        \mathrm{LHS} \geq \frac{x}{e} > \frac{4}{\pi^2}\left(x + \frac{x^3}{3}\right)^2  + \frac{1}{\pi}\left(x + \frac{x^3}{3}\right) \geq \mathrm{RHS}
    \end{aligned}   
    \end{equation}
    This proves that there exists $(\gamma, \beta)$ such that Equation (\ref{eq:Q1_geq_sdp}). As we already show the inequality is true for $K_n$ where $n \geq 4$, we then have the strict inequality
    \begin{align}
        \mathrm{Q}^{\mathrm{std}}_{1}(K_n) > \mathrm{SDP}_{\mathrm{HR}}(K_n)
    \end{align}
    For $K_3$ (triangle case), $\mathrm{Q}^{\mathrm{std}}_{1}(K_3) = \mathrm{SDP}_{\mathrm{HR}}(K_3) = 2/3$. In figure \ref{fig:K_n_numeric}, we show the gap for $n$ up to 100.
\end{proof}
\begin{figure}
    \centering
\includegraphics[width=1.0\linewidth]{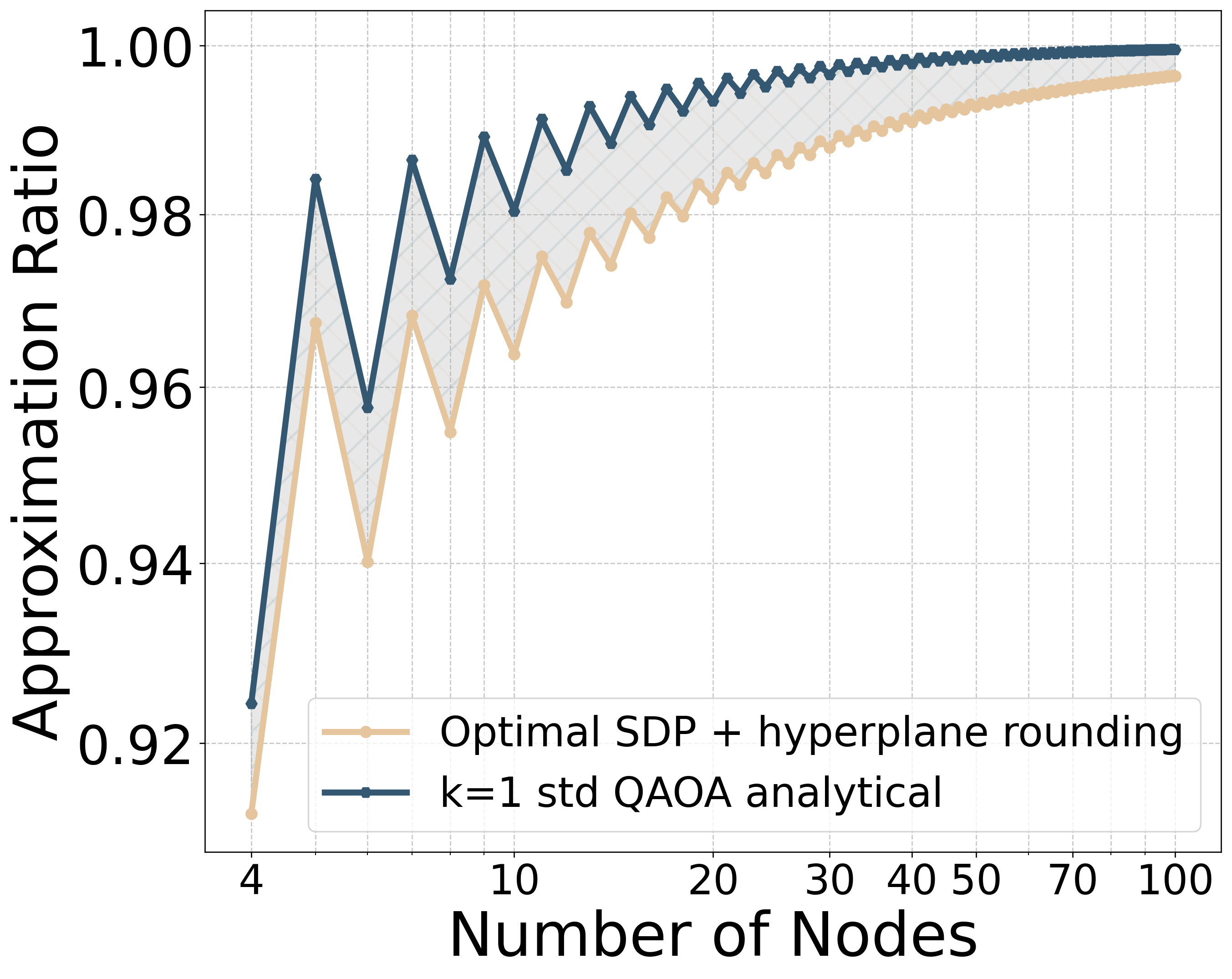}
    \caption{Performance of SDP+ hyperplane rounding and $k=1$ std QAOA for the complete graph families $K_{n}$ for $n \geq 4$}
    \label{fig:K_n_numeric}
\end{figure}
The reason for this gap is that SDP + randomized hyperplane rounding is not sufficient to capture all possible cut distributions. To prove this, we show an example where the optimal cut distribution for Fair Cut Cover on a complete graph can not be reproduced from SDP solutions and randomized hyperplane rounding. First, we show a simple way to construct the optimal cut distribution for Fair Cut Cover on a complete graph
\begin{lemma}
    \label{prop:best_Kn_value}
    The best Fair Cut Cover value on a complete graph $G(V, E)$ comes from a distribution created by using $k$-subset sampling where $k = \frac{\abs{V}}{2}$ if $\abs{V}$ is even and at $k = \frac{\abs{V}\pm1}{2}$ if $\abs{V}$ is odd.
\end{lemma}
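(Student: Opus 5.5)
The plan is to combine an upper bound that holds for every cut distribution on $K_n$ with a direct computation showing that balanced subset sampling attains it. For the upper bound, take any cut distribution $p$ on $K_n$ with $n=\abs{V}$. The minimum edge cut probability is at most the average edge cut probability:
\begin{align*}
    \min_{e\in E}\langle Y_e,p\rangle \le \frac{1}{\abs{E}}\sum_{e\in E}\langle Y_e,p\rangle = \frac{\mathbb{E}_{x\sim p}[\mathrm{cut}(x)]}{\binom{n}{2}} \le \frac{\mathrm{MAXCUT}(K_n)}{\binom{n}{2}}.
\end{align*}
This is exactly the bound $\bar{\eta}(G)\le \mathrm{MAXCUT}(G)/\abs{E}$ recalled in Appendix~\ref{appendix:equiv}. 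By Proposition~\ref{coro:edge_transitive}, equality holds because $K_n$ is edge-transitive. A cut with one side of size $k$ cuts $k(n-k)$ edges, and this is maximized at $k=\lfloor n/2\rfloor$ or $k=\lceil n/2\rceil$. Hence $\mathrm{MAXCUT}(K_n)=\lfloor n/2\rfloor\lceil n/2\rceil$, and
\begin{align*}
    \bar{\eta}(K_n)=\frac{\lfloor n/2\rfloor\lceil n/2\rceil}{\binom{n}{2}},
\end{align*}
which equals $\frac{n}{2(n-1)}$ for even $n$ and $\frac{n+1}{2n}$ for odd $n$.

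For achievability, I would define $k$-subset sampling as choosing a uniformly random subset $S\subseteq V$ with $\abs{S}=k$ and setting $x_v=+1$ exactly when $v\in S$. Fix an edge $(u,v)$. It is cut precisely when exactly one of $u,v$ lies in $S$. Counting subsets gives
\begin{align*}
    \Pr[(u,v)\text{ cut}] = \frac{2\binom{n-2}{k-1}}{\binom{n}{k}} = \frac{2k(n-k)}{n(n-1)} = \frac{k(n-k)}{\binom{n}{2}}.
\end{align*}
This is the same for every edge. With $k=n/2$, or $k=(n\pm1)/2$ when $n$ is odd, the value matches the upper bound above. So this distribution is optimal, and every edge is cut with equal probability. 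I would also note two consistency points. First, for even $n$ the distribution is $\mathbb{Z}_2$-symmetric, since the complement of a uniform $n/2$-subset is again a uniform $n/2$-subset. Second, for odd $n$ one can mix $k=(n-1)/2$ and $k=(n+1)/2$ with weight $1/2$ each to get a $\mathbb{Z}_2$-symmetric version with the same value.

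I expect the main subtlety to be making the upper bound rigorous rather than just citing Proposition~\ref{coro:edge_transitive}. The self-contained route is the averaging chain in the first display: the minimum over edges is at most the mean over edges, and the mean over edges equals the expected cut size divided by $\abs{E}$. The expected cut size of any distribution is at most the maximum cut. Everything else is the routine binomial identity in the second display together with maximizing $k(n-k)$ over integers.
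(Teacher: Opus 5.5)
Your proposal is correct and follows essentially the same route as the paper: you bound $\min_e \mathrm{Pr}_e(p)$ by the average, which is at most $\mathrm{MAXCUT}(K_n)/\abs{E}$, then show that uniform $k$-subset sampling cuts every edge with probability $2\binom{n-2}{k-1}/\binom{n}{k} = 2k(n-k)/(n(n-1))$, maximized at $k=n/2$ or $k=(n\pm1)/2$. Citing Proposition~\ref{coro:edge_transitive} is unnecessary because achievability already gives equality, while your $\mathbb{Z}_2$-symmetrization of the odd case is a small addition the paper omits.
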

\begin{proof}
    First, we start with a simple statement where for any cut distribution $p$, we have the following inequality between the minimum, average, and maximum.
\begin{gather}
    \begin{aligned}
    \min_{e\in E}\mathbb{E}_{X \sim p}[\chi_{e}(X)] &\leq \frac{\sum_{e \in E} \mathbb{E}_{X \sim p}[\chi_{e}(X)]}{\abs{E}} \\ &= \frac{\mathbb{E}_{X \sim p}[\sum_{e \in E} \chi_{e}(X)]}{\abs{E}} \\ &\leq \frac{MAXCUT}{\abs{E}}
    \end{aligned}    
\end{gather}

As we are considering $K_n$, the upper bound becomes 
\begin{equation}
    \begin{aligned}
        \min_{e\in E}\mathbb{E}_{X \sim p}[\chi_{e}(X)]  &= \min_{e\in E} \mathrm{Pr}_{e}(p) \\
        &\leq 
        \begin{cases}
            \frac{n}{2(n-1)}, & \text{if } n \text{ is even} \\
            \frac{n+1}{2n},  & \text{if } n \text{ is odd}
        \end{cases}
\end{aligned}
\end{equation}

where we use the already known MAXCUT for complete graphs. Now, we try to reach this upper bound using just a simple random solution by sampling $k$-subsets $S$ where $S \subset [n]$, which corresponds to a set of multiple cuts. We have the probability of edge $uv$ inside the cut given by number of $k$-subsets with only $u$ or $v$  over $\binom nk$:
\begin{align}
    \mathrm{Pr}_{e}(p) = \frac{2\binom{n-2}{k-1}}{\binom nk} = \frac{2k(n-k)}{n(n-1)}
\end{align}
We can see that the probability of all edges is equivalent, and $\min_{e \in E}\mathrm{Pr}_{e}(p) = \frac{2k(n-k)}{n(n-1)}$. Now we try to find $k \in \mathbb{N}^*$ such that $k$-subset sampling yields a cut distribution $p_{k}$ that maximizes $\min_{e \in E}\mathrm{Pr}_{e}(p_{k})$. This can be obtained by maximizing $2k(n-k)$ which gives $k = \frac{n}{2}$ if $n$ is even and at $k = \frac{n\pm1}{2}$ if $n$ is odd. These value for $k$ gives the best fairness of the complete graph that can be achieved using $k$-subset sampling 
\begin{align}
    \min_{e \in E}\mathrm{Pr}_{e}(p)= \begin{cases}
        \frac{n}{2(n-1)}, & \text{if } n \text{ is even} \\
        \frac{n+1}{2n},  & \text{if } n \text{ is odd}
    \end{cases}
\end{align}
which matches the upper bound and yields the best fairness value.

\end{proof}

\section{Sampling cut distribution to approximate edge cut probability}
\label{appendix:sampling}
Given an unweighted graph $G(V, E)$ and $p \in \Delta$ be any distribution over cuts. For each edge $e = (u, v) \in E$, we have cut probability $\mathrm{Pr}_{e}(p) = \mathbb{E}_{X\sim p}[\chi_{e}(X)]$ where $\chi_{e}(X) = \mathds{1}[X_{u} \neq X_{v}]$ is the edge cut indicator.
\begin{lemma}[Small-support approximation of edge cut probabilities through Hoeffding' inequality]
    \label{prop:sampling_hoeffding}
    Draw i.i.d samples $X^{(1)}, \dots, X^{(T)} \sim p$ and define the empirical distribution $\hat{p}(x) = \frac{1}{T} \sum^{T}_{t = 1} \mathds{1}[{X^{(t)}} = x]$, which has support size $\abs{\mathrm{supp}(\hat{p})} \leq T$. Consequently, the cut probability of an edge with respect to the empirical distribution is defined as 
    \begin{align}
        \mathrm{Pr}_e(\hat{p}) = \mathbb{E}_{X\sim \hat{p}}[\chi_e(X)] = \frac{1}{T}\sum^{T}_{t = 1} \chi_e(X^{(t)})
    \end{align}
    For any $\epsilon \in (0, 1)$ and $\delta \in (0, 1)$, with probability at least $1 - \delta$ that 
    \begin{align}
        \max_{e \in E} \abs{\mathrm{Pr}_e(\hat{p}) - \mathrm{Pr}_e(p)} \leq \epsilon
    \end{align}
    provided that $T = \mathcal{O}(\epsilon^{-2}(\log(2\abs{E}/\delta)))$. Noted that this can be generalized to the fair-cut cover objective, where given $e' = \arg\min_{e \in E} \mathrm{Pr}_{e}(p)$, then $\abs{\mathrm{Pr}_{e'}(\hat{p}) - \mathrm{Pr}_{e'}(p)} \leq \epsilon$
\end{lemma}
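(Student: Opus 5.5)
The plan is to apply Hoeffding's inequality to each edge separately and then take a union bound over the $\abs{E}$ edges.

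Fix an edge $e=(u,v)\in E$. The random variables $\chi_e(X^{(1)}),\dots,\chi_e(X^{(T)})$ are i.i.d. and take values in $\{0,1\}\subset[0,1]$, since the samples $X^{(t)}$ are i.i.d. from $p$. Their common mean is $\mathbb{E}_{X\sim p}[\chi_e(X)]=\mathrm{Pr}_e(p)$. Their empirical average is exactly $\mathrm{Pr}_e(\hat p)$, by the displayed definition of the empirical distribution. Hoeffding's inequality for bounded variables then gives
\begin{align*}
    \mathrm{Pr}\left[\abs{\mathrm{Pr}_e(\hat p)-\mathrm{Pr}_e(p)}>\epsilon\right]\leq 2\exp(-2T\epsilon^2).
\end{align*}

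Next I take a union bound over the $\abs{E}$ edges. The probability that some edge deviates by more than $\epsilon$ is then at most $2\abs{E}\exp(-2T\epsilon^2)$. Setting this at most $\delta$ and solving for $T$ gives the sufficient condition $T\geq \frac{1}{2\epsilon^{2}}\ln(2\abs{E}/\delta)$, which is $\mathcal{O}(\epsilon^{-2}\log(2\abs{E}/\delta))$. On the complementary event, which has probability at least $1-\delta$, we have $\max_{e}\abs{\mathrm{Pr}_e(\hat p)-\mathrm{Pr}_e(p)}\leq\epsilon$. In the matrix notation of Theorem~\ref{prop:sampling_hoeffding}, this is $\lVert Y(\hat p-p)\rVert_\infty\leq\epsilon$. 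The support bound $\abs{\mathrm{supp}(\hat p)}\leq T$ is immediate, because $\hat p$ places mass only on the observed samples.

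The final remark follows as a special case. Since $e'=\arg\min_e \mathrm{Pr}_e(p)$ is a deterministic edge fixed by $p$, the uniform bound applies to it in particular. I also plan to note a stronger consequence of the same uniform guarantee. Because $\min$ is $1$-Lipschitz in the sup norm, we get $\abs{\min_e \mathrm{Pr}_e(\hat p)-\min_e\mathrm{Pr}_e(p)}\leq\epsilon$, so the fair-cut objective itself is estimated within $\epsilon$. This stronger form is the one actually relevant when $\hat p$ is used to report the Fair Cut Cover value.

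None of these steps is a real obstacle. The only point needing care is the relative-error version from the main-text theorem, if it is to be proved here as well. For that, I would replace Hoeffding with the multiplicative Chernoff bound $\mathrm{Pr}[\abs{\hat\mu-\mu}>\epsilon\mu]\leq 2\exp(-T\epsilon^2\mu/3)$, taking $\mu=\mathrm{Pr}_e(p)\geq\min_e\langle Y_e,p\rangle$ for each edge. A union bound over edges then yields $T=\mathcal{O}((\epsilon^2\min_e\langle Y_e,p\rangle)^{-1}\log(2\abs{E}/\delta))$.
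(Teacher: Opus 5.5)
Your proposal is correct and follows essentially the same route as the paper: apply Hoeffding's inequality to the i.i.d.\ indicators $\chi_e(X^{(t)})\in[0,1]$ for each fixed edge, take a union bound over the $\abs{E}$ edges, and solve $2\abs{E}\exp(-2T\epsilon^2)\le\delta$ to get $T\ge\frac{1}{2\epsilon^2}\log(2\abs{E}/\delta)$. Your extra remark, that $\min$ is $1$-Lipschitz in the sup norm so the fair-cut value itself is estimated within $\epsilon$, is valid and slightly stronger than the paper's stated consequence for $e'$. Your Chernoff-based relative-error sketch matches the paper's separate lemma.
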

\begin{proof}
    Fix an edge $e \in E$ and define i.i.d bounded random variables $Z_{t} \coloneq \chi_e(X^{(t)}) \in [0, 1]$ where $\mathbb{E}[Z_{t}] = \mathrm{Pr}_e(p)$. Consequently, we have $\mathrm{Pr}_e(\hat{p}) = \frac{1}{T}\sum^{T}_{t = 1}Z_{t}$. By Hoeffding’s inequality for independent Bernoulli random variables $Z_{t}$, we have
    \begin{align*}
    &\mathrm{Pr}(\abs{\mathrm{Pr}_{e}(\hat{p}) - \mathrm{Pr}_{e}(p)}\geq \epsilon) \\
     &\quad =\mathrm{Pr}\left(\abs{\frac{1}{T}\sum^{T}_{t = 1} Z_{t} - \mathbb{E}[Z_{t}]}\geq \epsilon \right)\\ 
     &\quad= \mathrm{Pr}\left(\abs{\sum^{T}_{t = 1} Z_{t} - \mathbb{E}\Big[\sum^{T}_{t = 1} Z_{t}\Big]}\geq T\epsilon \right)\\
     &\quad\leq 2\exp(-2T\epsilon^2)
    \end{align*}
    Using union bound over all $\abs{E}$ edges, we have:
    \begin{align}
        \mathrm{Pr}(\exists e \in E: \abs{\mathrm{Pr}_{e}(\hat{p}) - \mathrm{Pr}_{e}(p)} \geq \epsilon) \nonumber \\
        \leq 2\abs{E}\exp(-2T\epsilon^2)
    \end{align}
    Following the above inequality, for $T \geq (1/2\epsilon^2)\log(2\abs{E}/\delta)$, we have  $\mathrm{Pr}(\exists e \in E: \abs{\mathrm{Pr}_{e}(\hat{p}) - \mathrm{Pr}_{e}(p)} \leq \epsilon)$ with probability $1 - \delta$. This proves the small support claim and the above theorem. 
\end{proof}

\begin{lemma}[Small-support approximation of edge cut probabilities through Multiplicative \JS{Chernoff} bound]

Draw i.i.d samples $X^{(1)}, \dots, X^{(T)} \sim p$ and define the empirical distribution $\hat{p}(x) = \frac{1}{T} \sum^{T}_{t = 1} \mathds{1}[{X^{(t)}} = x]$ where $x$ is a cut, which has support size $\abs{\mathrm{supp}(\hat{p})} \leq T$. Consequently, the cut probability of an edge with respect to the empirical distribution is defined as 
\begin{align}
    \mathrm{Pr}_e(\hat{p}) = \mathbb{E}_{X\sim \hat{p}}[\chi_e(X)] = \frac{1}{T}\sum^{T}_{t = 1} \chi_e(X^{(t)})
\end{align}
For any $\varphi \in (0, 1)$ and $\delta \in (0, 1)$, with probability at least $1 - \delta$ that 
\begin{align}
    \max_{e \in E} \frac{\abs{\mathrm{Pr}_e(\hat{p}) - \mathrm{Pr}_e(p)}}{\mathrm{Pr}_e(p)} \geq \varphi
\end{align}
    this relative error bound is obtained, provided that $T = \mathcal{O}((\varphi^2\min_e \mathrm{Pr}_e(p))^{-1}\log(2\abs{E}/\delta))$.
\end{lemma}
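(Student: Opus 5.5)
We read the displayed inequality as the intended guarantee $\max_{e\in E}\abs{\mathrm{Pr}_e(\hat p)-\mathrm{Pr}_e(p)}/\mathrm{Pr}_e(p)\le\varphi$, since the ``$\geq$'' looks like a typo. This is the relative-error half of Theorem~\ref{prop:sampling_hoeffding}. We also assume $\mu_{\min}\coloneqq\min_e \mathrm{Pr}_e(p)>0$; otherwise the ratio is undefined and the sample bound is infinite. The plan follows the Hoeffding lemma above, with the additive Hoeffding bound replaced by the multiplicative Chernoff bound, followed by the same union bound over edges.

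\emph{Step 1 (per-edge tail).} Fix $e\in E$ and write $\mu_e\coloneqq\mathrm{Pr}_e(p)$. Exactly as in the Hoeffding lemma, set $Z_t\coloneqq\chi_e(X^{(t)})$. These are i.i.d.\ Bernoulli$(\mu_e)$ variables, and $S_e\coloneqq\sum_{t=1}^T Z_t$ has mean $T\mu_e$ and satisfies $\mathrm{Pr}_e(\hat p)=S_e/T$. For $\varphi\in(0,1)$ we apply the standard two-sided multiplicative Chernoff bounds:
\begin{align*}
\Pr\bigl(S_e\ge(1+\varphi)T\mu_e\bigr)&\le\exp\!\bigl(-\varphi^2T\mu_e/3\bigr),\\
\Pr\bigl(S_e\le(1-\varphi)T\mu_e\bigr)&\le\exp\!\bigl(-\varphi^2T\mu_e/2\bigr).
\end{align*}
Combining them gives
\[
\Pr\!\left(\frac{\abs{\mathrm{Pr}_e(\hat p)-\mu_e}}{\mu_e}\ge\varphi\right)\le 2\exp\!\bigl(-\varphi^2T\mu_e/3\bigr).
\]

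\emph{Step 2 (uniformize over edges).} Every edge has $\mu_e\ge\mu_{\min}$, and the right-hand side above is decreasing in $\mu_e$. So each per-edge failure probability is at most $2\exp(-\varphi^2T\mu_{\min}/3)$. This replacement is what puts the factor $\min_e\langle Y_e,p\rangle$ into the sample complexity.

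\emph{Step 3 (union bound and choice of $T$).} A union bound over the $\abs{E}$ edges bounds the probability that some edge has relative error at least $\varphi$ by $2\abs{E}\exp(-\varphi^2T\mu_{\min}/3)$. Taking
\[
T\ge\frac{3\log(2\abs{E}/\delta)}{\varphi^2\mu_{\min}}
\]
makes this at most $\delta$, which is the claimed $\mathcal{O}((\varphi^2\min_e\mathrm{Pr}_e(p))^{-1}\log(2\abs{E}/\delta))$. The support-size remark $\abs{\mathrm{supp}(\hat p)}\le T$ is immediate, as in the previous lemma.

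There is no real technical obstacle here. The only care needed is to state Chernoff in the regime $\varphi\in(0,1)$, where the exponent constant $1/3$ covers both tails. The conceptual point worth noting is that, unlike in the additive Hoeffding bound, the dependence on $1/\mu_{\min}$ is unavoidable: to get relative accuracy on the least-cut edge, we need $\Theta(1/\mu_{\min})$ samples before that edge is cut even once.
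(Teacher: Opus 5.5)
Your proof is correct and follows the paper's argument essentially step for step. Like the paper, you use a per-edge two-sided multiplicative Chernoff bound giving $2\exp(-\varphi^2 T\,\mathrm{Pr}_e(p)/3)$, then a union bound over the $\abs{E}$ edges with $\mathrm{Pr}_e(p)$ replaced by its minimum, and then the choice $T \ge 3\varphi^{-2}(\min_e \mathrm{Pr}_e(p))^{-1}\log(2\abs{E}/\delta)$. Your reading of the displayed ``$\geq \varphi$'' as the intended ``$\leq \varphi$'' guarantee, and your explicit assumption $\min_e \mathrm{Pr}_e(p) > 0$, are both appropriate clarifications of the statement rather than departures from it.
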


\begin{proof}
    Fixing edge $e \in E$, using the multiplicative Chernoff bound for relative error on i.i.d bounded random variable $Z_{t} \coloneq \chi_e(X^{(t)}) \in [0, 1]$ where $\mathbb{E}[Z_{t}] = \mathrm{Pr}_e(p)$, we have 
    \begin{align*}
        &\mathrm{Pr}(\abs{\mathrm{Pr}_e(\hat{p}) - \mathrm{Pr}_e(p)} \geq \varphi \mathrm{Pr}_e(p)) \nonumber\\ 
        &\quad= \mathrm{Pr}\left(\abs{\sum^{T}_{t = 1} Z_{t} - \mathbb{E}\Big[\sum^{T}_{t = 1} Z_{t}\Big]}\geq T\varphi \mathrm{Pr}_e(p)\right) \\
        &\quad= \mathrm{Pr}\left(\abs{\sum^{T}_{t = 1} Z_{t} - \mathbb{E}\Big[\sum^{T}_{t = 1} Z_{t}\Big]}\geq \varphi\mathbb{E}\left[\sum^{T}_{t = 1} Z_{t}\right]\right)
        \\
        &\quad\leq 2\exp(-\varphi^2T\mathrm{Pr}_e(p)/3)
    \end{align*}
    Using union bound over all $\abs{E}$ edges, we have the following bound:
    \begin{align}
        \mathrm{Pr}(\exists e \in E: \abs{\mathrm{Pr}_e(\hat{p}) - \mathrm{Pr}_e(p)} \geq \varphi \mathrm{Pr}_e(p)) 
        \nonumber \\
        \quad\leq 2\abs{E}\exp(-\varphi^2T\min_{e}\mathrm{Pr}_e(p)/3)
    \end{align}
    Using $T \geq 3\varphi^{-2}\min_{e}(\mathrm{Pr}_e(p))^{-1}\log(2\abs{E}/\delta)$ or $T = \mathcal{O}((\varphi^2 \min_{e}\mathrm{Pr}_e(p))^{-1} \log(\abs{E}/\delta))$, the probability $\mathrm{Pr}(\exists 
    e \in E: \abs{\mathrm{Pr}_e(\hat{p}) - \mathrm{Pr}_e(p)} \leq \varphi \mathrm{Pr}_e(p))$ is $1 - \delta$.
\end{proof}

\begin{remark}
    \label{rm:chernoff_sampling}
    Hoeffding's inequality is tighter when the smallest edge cut probability is small ($\min_{e}P_{e}(\mu) \approx 1/2$), while Multiplicative Chernoff's bound yields improved constants when edge cut probabilities are highly skewed.
\end{remark}

\section{Proof of Corollary \ref{thm:sufficient_p_covering_distributions}}

\begin{definition}
    \label{def:d_QAOA}
    A \(\mathcal{D}\)-QAOA circuit for a graph \(G = (V, E)\) is defined as an ma-QAOA circuit on G with the following modification:
    \begin{enumerate}
        \item If \(G\) is disconnected or if \(G\) is a path/cycle with \(\lvert V \rvert\geq 4,\) add an ancilla qubit \(A\) and add \(X_{A}\) to the mixer unitary. Choose one vertex \(v\) with maximal degree for each connected subgraph and add \(Z_vZ_{A}\) to the phase separator unitary.
        \item If \(G\) is disconnected and \(\lvert V \rvert \leq 4,\) do (1), then add an additional ancilla \(B\) and add \(Z_{A}Z_{B}\) to the phase separator unitary and add \(X_{B}\) to the mixer unitary.
    \end{enumerate}
    
\end{definition}

We show that this construction produces a circuit which implicitly represents a connected graph which for \(\lvert V \rvert \geq 4\) is not a cycle or a path.

\begin{lemma}
    \label{lm:G_prime_d_QAOA}
    Given a graph \(G = (V,E),\) let \(G^\prime = (V^\prime, E^\prime)\) be the graph representing the \(\mathcal{D}\)-QAOA circuit generated by \(G.\) Then \(G^\prime\) is a connected graph, such that \(\lvert V^\prime\rvert\geq 4\) implies \(G^\prime\) is not a path or a cycle.

\end{lemma}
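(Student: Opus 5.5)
The proof will be a direct case analysis following the two clauses of Definition~\ref{def:d_QAOA}. The graph \(G^\prime\) is read off from the circuit: its vertices are the qubits, and its edges are the \(ZZ\) terms of the phase separator. In every case we check two things. First, connectivity, by exhibiting a path through the ancilla between any two components. Second, that \(G^\prime\) is not a path or a cycle whenever \(\lvert V^\prime\rvert\geq 4\). For the second point we use a degree criterion: a path has maximum degree at most \(2\) and a cycle has all degrees equal to \(2\). So it suffices to find a vertex of degree at least \(3\) in \(G^\prime\), or, for excluding a cycle only, a vertex of degree \(1\).

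\textbf{Case 1: \(G\) is connected and is not a path or cycle on \(\geq 4\) vertices.} No modification is made, so \(G^\prime = G\). Connectivity is immediate. If \(\lvert V^\prime\rvert = \lvert V\rvert \geq 4\), then \(G\) is not a path or cycle by the case assumption. If \(\lvert V\rvert\leq 3\), the claim is vacuous.

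\textbf{Case 2: \(G\) is a path or cycle with \(\lvert V\rvert\geq 4\).} Clause (1) adds one ancilla \(A\), adjacent to a single maximal-degree vertex \(v\). Then \(G^\prime\) is connected, since \(G\) is connected and \(A\) is attached to it. In both a cycle and a path on \(\geq 4\) vertices, the maximal degree is \(2\), attained by any cycle vertex or by an interior path vertex. Hence \(v\) has degree \(3\) in \(G^\prime\), which rules out both a path and a cycle.

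\textbf{Case 3: \(G\) is disconnected with \(\lvert V\rvert\geq 5\).} Only clause (1) applies. Let the components be \(C_1,\dots,C_m\) with \(m\geq 2\). \(A\) is joined to one vertex of each component, so any two vertices are connected through \(A\), and \(G^\prime\) is connected. Since \(\lvert V\rvert\geq 5\) and \(m\geq 2\), either \(m\geq 3\), so \(\deg_{G^\prime}(A)\geq 3\), or \(m=2\) and some component has at least \(3\) vertices. In the latter case that component is connected on \(\geq 3\) vertices, so its maximal degree is at least \(2\). The chosen vertex therefore has degree at least \(3\) in \(G^\prime\). Either way \(G^\prime\) is neither a path nor a cycle.

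\textbf{Case 4: \(G\) is disconnected with \(\lvert V\rvert\leq 4\).} Both clauses apply, adding \(A\) joined to one vertex per component and \(B\) pendant to \(A\). Connectivity follows as in Case 3, with \(B\) hanging off \(A\). Here \(\deg_{G^\prime}(A) = m+1\geq 3\), so \(G^\prime\) is again not a path or cycle. Note also that \(\lvert V^\prime\rvert\geq 4\) automatically in this case.

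The only delicate step is Case 3 with \(m=2\), where \(A\) itself has degree only \(2\). The plan is to place the degree-\(3\) vertex inside the larger component, which is exactly why the construction attaches \(A\) to a maximal-degree vertex. A secondary point is that clauses (1) and (2) overlap when \(G\) is disconnected with \(\lvert V\rvert\leq 4\). I will read that situation as Case 4 and note that the argument of Case 3 is not needed there.
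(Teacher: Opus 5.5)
Your proof is correct and takes the same route as the paper's: a case analysis over the two clauses of the $\mathcal{D}$-QAOA definition that finds a vertex of degree at least $3$ in $G^\prime$. That vertex is the ancilla $A$ when $B$ is present or $G$ has at least three components, and otherwise the chosen maximal-degree vertex. Your version is slightly more explicit: it treats the unmodified connected case on its own, and in the two-component case it replaces the paper's contrapositive (``else $\lvert V\rvert\leq 4$'') with the direct observation that one of two components must have at least three vertices.
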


\begin{proof}
    Trivially, \(G^\prime\) will be connected as each connected subgraph of a disconnected input will be connected to \(a.\)

    If \(G^\prime\) is a path or a cycle with \(\lvert V \rvert \geq 4,\) then \(a\) has to have degree at most 2. If \(G^\prime\) was produced by (2), then there were at least 2 connected subgraphs in \(G\), so \(a\) will have degree at least 3 due to it's connection to \(b.\)

    If \(G^\prime\) was produced by (1) and \(G\) had at least 3 connected subgraphs, then \(a\) has degree at least 3. If \(G\) had 2 or 1 connected subgraphs, then at least 1 of them has a vertex of degree at least \(2,\) else \(\lvert V \rvert \leq 4\) and \(G^\prime\) would have been produced by (2).

    In all cases where \(G^\prime \geq 4,\) \(G^\prime\) contains a vertex of at least \(3\) and so cannot be a path or a cycle.
\end{proof}

\label{appendix:proof_sufficient_p}
Given bitstring $x \in \{\pm1\}^{n}$, a $\mathbb{Z}_{2}$-symmetric bitstring distribution $p$ where $\forall x, p(x) = p(- x)$, yield the probability for $m \coloneq 2^{n-1}$ pairs $\{x_j, - x_j\}, j \in [m]$. We can define the symmetric basis vector
\begin{align}
    \ket{e_{j}} \coloneq \frac{\ket{x_j} + \ket{- x_j}}{\sqrt{2}}, \quad j = 1, \dots, m.
\end{align}
which forms an orthonormal basis for the $+1$ eigenspace $\mathcal{H}_{+} \coloneq \{\ket{\psi}:X^{\otimes n}\ket{\psi} = \ket{\psi}\}$. Given any state $\ket{\psi} \in \mathcal{H}_{+}$, we can express it using the following form 
\begin{align}
    \sum^{m}_{j = 1} \alpha_{j} \ket{e_{j}}, \quad \sum^{m}_{j=1} \abs{\alpha}^{2}_{j} = 1
\end{align}
Measuring this state in the standard computational basis yields the representative of a pair with following probability
\begin{align}
    \mathrm{Pr}(x_j) = \mathrm{Pr}(- x_j) = \frac{\abs{\alpha}^{2}_{j}}{2} 
\end{align}
resulting in a $\mathbb{Z}_{2}$-symmetric distribution. Conversely, for any $\mathbb{Z}_{2}$-symmetric bitstring distribution $p$, we can define a non-negative unit vector
\begin{align}
    \alpha'_{j} \coloneq \sqrt{p(x_j) + p(- x_j)}= \sqrt{2p(x_j)}
\end{align}
corresponding the state $\ket{\psi_{p}} \coloneq \sum^{m}_{j = 1} \alpha'_{j} \ket{e_{j}} \in \mathcal{H}_{+}$. Therefore, for any $\mathbb{Z}_{2}$-symmetric bitstring distribution $p$, there exists a non-negative unit vector $\alpha'$ lies on the positive orthant of the real unit sphere $S^{m-1}_{+} \coloneq \{\alpha' \in S^{m-1}: \forall j, \alpha'_{j} \geq 0\}$. Therefore, to prove that QAOA-ansatz can capture the $\mathbb{Z}_{2}$-symmetric bitstring distribution, we only need to prove that every state $\ket{\psi_{p}}$ lies in the reachable orbit of the initial state $\ket{+}^{\otimes n} \in \mathcal{H}_{+}$

\begin{lemma}[Corollary~\ref{thm:sufficient_p_covering_distributions} restate]
For every \(G=(V,E)\) and every $\epsilon \geq 0$, there exists a depth $k^\prime$ such that for every \(k \geq k^\prime\) and every $\mathbb Z_2$-symmetric cut distribution $p$ on $\{\pm1\}^{\abs{V}}$, there exists a $k$-layer \(\mathcal{D}\)-QAOA state $\tilde{\rho}$ such that for any \(x \in \{\pm1\}^{\abs{V}}\),
\begin{align}
    \tilde p(x) &\coloneq \Tr{\ketbra{x}{x}\tilde{\rho}}, & \norm{\tilde p-p}_{\mathrm{TV}} &\leq \epsilon.
\end{align}
\end{lemma}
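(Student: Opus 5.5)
The plan is to reduce the statement to a controllability result for the \(\mathcal{D}\)-QAOA circuit restricted to \(\mathcal{H}_+\), and then convert state approximation into distribution approximation.

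\textbf{Step 1: controllability on \(\mathcal{H}_+\).} By Lemma~\ref{lm:G_prime_d_QAOA}, the graph \(G^\prime\) underlying the \(\mathcal{D}\)-QAOA circuit is connected. Moreover, when \(\lvert V^\prime\rvert\ge 4\) it is neither a path nor a cycle. I would invoke the Dynamical Lie Algebra classification of \cite{kazi2025analyzing} for ma-QAOA on such graphs. It gives that the DLA generated by \(\{iX_v\}_{v\in V^\prime}\cup\{iZ_uZ_v\}_{uv\in E^\prime}\), restricted to the \(+1\) eigenspace of \(X^{\otimes \lvert V^\prime\rvert}\), is all of \(\mathfrak{su}(\mathcal{H}_+)\) (or at least \(\mathfrak{so}\) of its real form, which is still transitive on the unit sphere). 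Every generator commutes with \(X^{\otimes n}\), so the circuit preserves \(\mathcal{H}_+\), and the initial state \(\ket{+}^{\otimes n}\) lies in \(\mathcal{H}_+\). The small cases \(\lvert V^\prime\rvert\le 3\) I would check by hand; they are the only reason item (2) of the definition exists. Ancillas are then traced out, which is why the statement is phrased with a mixed state \(\tilde\rho\). The target is the state \(\ket{\psi_p}\otimes\ket{\text{anc}}\) in the enlarged symmetric space. Any \(\mathbb{Z}_2\)-symmetric \(p\) on \(V\) lifts to a \(\mathbb{Z}_2\)-symmetric distribution on \(V^\prime\): fix the ancilla relative to the representative and symmetrize. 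Its marginal on \(V\) is \(p\), because the flip symmetry acts jointly.

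\textbf{Step 2: exact reachability at a uniform depth.} Since the group \(SU(\mathcal{H}_+)\) is compact and connected and the generators are alternately applied with free angles, a standard Lie-theoretic argument applies. The product map from finitely many layers to the group becomes surjective after some finite number of layers \(k^\prime\) (the reachable sets \(R_k\) are increasing compact sets whose union is the group, and by a Baire/dimension argument some \(R_{k}\) has nonempty interior; then \(R_{2k}\cdot\) covers everything). Monotonicity in \(k\) follows by setting extra angles to zero. Hence for \(k\ge k^\prime\) every state in the orbit of \(\ket{+}^{\otimes n}\), i.e.\ the whole unit sphere of \(\mathcal{H}_+\), is exactly reachable. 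This works with the same \(k^\prime\) for all \(p\), and even with \(\epsilon=0\). I expect this uniformity step to be the main obstacle. It is not directly stated in \cite{kazi2025analyzing}, so I would first try to cite a known result on finite-depth universality of alternating generators generating a compact Lie algebra (e.g.\ Lloyd / Jurdjevic–Sussmann), adapting it to the alternating layer structure.

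\textbf{Step 3: from states to distributions.} Take \(\ket{\psi_p}=\sum_j \sqrt{2p(x_j)}\ket{e_j}\) as constructed above, or its lift to \(V^\prime\). Choose parameters reaching it, and let \(\tilde\rho\) be the partial trace over the ancillas. Measuring in the computational basis gives \(\tilde p=p\). If only approximate reachability is available, measurement is a CPTP map, and contractivity of trace distance gives \(\norm{\tilde p-p}_{\mathrm{TV}}\le \tfrac12\norm{\tilde\rho-\rho_p}_1\le \norm{\ket{\tilde\psi}-\ket{\psi_p}}\le\epsilon\). This concludes the proof for every \(\epsilon\ge0\).
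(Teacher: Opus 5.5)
Your route is the paper's: restrict to $\mathcal{H}_+$, invoke the free-DLA classification of \cite{kazi2025analyzing}, obtain a uniform finite depth, and read off $\tilde p=p$ exactly. Your compactness/Baire argument does the job of the paper's order-of-generation citation, and Steps 2 and 3 are fine.

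The gap is the hedge in Step 1, ``or at least $\mathfrak{so}$ of its real form, which is still transitive on the unit sphere.'' Take a connected bipartite $G'$ with parts $A,B$ of even size that is not a path or cycle. The classification then gives $\mathfrak{so}(m)$ on $\mathcal{H}_+$, not $\mathfrak{su}(m)$. $SO(m)$ is transitive only on the real sphere of the real structure it preserves. That structure is not computational-basis conjugation, since the gates $e^{-i\beta X_v}$ are not real matrices, and $\ket{+}^{\otimes n}$ is not on its real sphere.

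Concretely, let $J=WK$ with $W=\prod_{v\in A}Y_v\prod_{v\in B}Z_v$ and $K$ complex conjugation.
\begin{itemize}
\item Since $WX_vW^\dagger=-X_v$ and $WZ_uZ_vW^\dagger=-Z_uZ_v$ for every edge, $J$ commutes with every circuit unitary. So $Q(\psi)\coloneqq\langle\psi,J\psi\rangle$ is conserved.
\item $Q(\ket{+}^{\otimes n})=0$, because $J\ket{+}^{\otimes n}\propto\ket{-}^{\otimes n}$.
\item In $0/1$ notation, with $1_S$ the indicator string of $S$, $Q(\psi)=i^{|A|}\sum_y(-1)^{|y|}\,\overline{\psi_{y\oplus 1_A}}\,\overline{\psi_y}$.
\item Take $p$ uniform on $\{0^n,1^n,1_A,1_B\}$, i.e.\ the empty cut and the bipartition cut, each with probability $1/2$. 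When $|A|,|B|$ are even, every $\psi\in\mathcal{H}_+$ with $|\psi_y|^2=p(y)$ has $|Q(\psi)|=1$.
\item For a reachable $\phi$ with outcome distribution $p_\phi$, matching phases yields such a $\psi$ with $\norm{\phi-\psi}^2\le 2\,\mathrm{TV}(p_\phi,p)$. Also $|Q(\phi)-Q(\psi)|\le 2\norm{\phi-\psi}$.
\end{itemize}
Hence every reachable state, at every depth, has $\mathrm{TV}(p_\phi,p)\ge 1/8$.

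So this is not a repairable omission in your write-up. For $G=K_{2,4}$, $\mathcal{D}$-QAOA is plain ma-QAOA with no ancilla, and the statement itself fails for $\epsilon<1/8$. The paper's Case 1 makes the same move, using transitivity of $SO(m)$ on a real sphere as if $\ket{+}^{\otimes n}$ were real for the structure $SO(m)$ preserves, so following it does not close the gap.

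The remaining cases are sound:
\begin{itemize}
\item In the $\mathfrak{sp}$ case, $J^2=-1$ forces $Q\equiv 0$, and $Sp(m/2)$ is transitive on the complex sphere.
\item The $\mathfrak{su}$ cases are fine.
\end{itemize}
A correct version must change the ansatz for even--even bipartite graphs. For instance, attach an ancilla as in item (1), so that $|V'|$ is odd and the restricted algebra becomes $\mathfrak{su}(m)$. Your lift-and-trace argument then goes through.
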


\begin{proof}
Let $F \coloneq X^{\otimes n}$ be the global bit-flip operator, and let $\mathcal{H}_{\pm}$ be its $\pm 1$ eigenspaces where $\dim_{\mathbb{C}}(\mathcal{H}_{+}) = \dim_{\mathbb{C}}(\mathcal{H}_{-}) = 2^{n-1}$. Since ma-QAOA is initialized in the state $\ket{+}^{\otimes n} \in \mathcal{H}_{+}$ and all of its free generators commute with $F$, the evolution remains entirely inside $\mathcal{H}_+$. 
Therefore, the free Lie algebra (DLA) is restricted to the sector $\mathfrak{g}_{\mathrm{free}}(G)\big|_{\mathcal H_+}$. Here, we follow \cite{kazi2025analyzing} analysis on both the free DLA for different graph structures (Theorem 1 in \cite{kazi2025analyzing}) and its implications (Section IV in \cite{kazi2025analyzing}). We now denote $m \coloneq 2^{n-1}$ and consider these graphs case by cases according to the free-DLA classification.

\medskip
\noindent\textbf{Case 1: Non-cycle even-even bipartite graphs.}
The free dynamical Lie algebra of ma-QAOA satisfies
\begin{align}
    \mathfrak{g}_{\mathrm{free}}(G)
    \cong
    \mathfrak{so}(m) \oplus \mathfrak{so}(m),
\end{align}
In this case, the restricted Lie algebra on $\mathcal{H}_{+}$ is $\mathfrak{g}_{\mathrm{free}}(G)\big|_{\mathcal H_+} = \mathfrak{so}(m)$. Therefore, the reachable unitary group on $\mathcal{H}_{+}$ is $SO(m)$ (exponential map from Lie algebra to Lie group). Since the special orthogonal group $SO(m)$ acts transitively on the real unit sphere $S^{m-1}$ \cite{gallier2020differential}. Therefore, for every $\mathbb{Z}_{2}$-symmetric bitstring distribution $p$, representing by $\ket{\psi_{p}}$, there exists a unitary $U_{p} \in SO(m)$ such that 
\begin{align*}
    U_{p}\ket{+}^{\otimes n} = \ket{\psi_{p}}
\end{align*}

\medskip
\noindent\textbf{Case 2: Non-cycle odd-odd bipartite graphs.}
The free dynamical Lie algebra of ma-QAOA satisfies
\begin{align}
    \mathfrak{g}_{\mathrm{free}}(G)
    \cong
    \mathfrak{sp}(m) \oplus \mathfrak{sp}(m),
\end{align}
In this case, the restricted Lie algebra on $\mathcal{H}_{+}$ is $\mathfrak{g}_{\mathrm{free}}(G)\big|_{\mathcal H_+} = \mathfrak{sp}(m/2)$ (pick one symplectic block). Equivalently, the reachable group on $\mathcal{H}_{+}$ is the compact symplectic group $Sp(m/2) \subset U(m)$, acting on the complex space $\mathcal{H}_{+} \cong \mathbb{C}^m$ \cite{gallier2020differential}.
The compact symplectic group acts transitively on the complex unit sphere of $\mathbb{C}^m$. Hence, for every target state $\ket{\psi_p} \in \mathcal{H}_{+}$, there exists a reachable unitary $U_p \in Sp(m/2)$ such that
\begin{align*}
    U_{p}\ket{+}^{\otimes n} = \ket{\psi_{p}}
\end{align*}

\medskip
\noindent\textbf{Case 3, 4: Even-odd bipartite graphs and Archetypal graphs}
In both of these cases, the restricted Lie algebra on $\mathcal{H}_{+}$ is $\mathfrak{g}_{\mathrm{free}}(G)\big|_{\mathcal H_+} =\mathfrak{su}(m)$. Therefore, the connected reachable group on $\mathcal H_+$ is $SU(m)$, which acts transitively on the complex unit sphere in $\mathcal{H}_{+}$ \cite{gallier2020differential}. Thus every target state $\ket{\psi_p} \in \mathcal{H}_{+}$ is reachable from $\ket{+}^{\otimes n}$.

\medskip
\noindent\textbf{Case 5: Path/Cycle graphs and Disconnected graphs}
For path and cycle graphs with graph size $n$, the restricted Lie algebra on $\mathcal{H}_{+}$ is $\mathfrak{so}(2n)$ \cite{kazi2025analyzing}. When the graph size $n \leq 4$, the dimension of the Lie algebra of the path and cycle graphs is equal to or bigger than the ``Non-cycle even-even bipartite graphs''. This implies that ma-QAOA based on path and cycle graphs with graph size $n \leq 4$ can also capture all $\mathbb{Z}_{2}$-symmetric bitstring distribution following the arguments in Case 1.\\
When the graph is Path/Cycle with $n \geq 5$ or is disconnected, we modify the ma-QAOA ansatz based on the definition of $\mathcal{D}$-QAOA in Definition \ref{def:d_QAOA}. As proved from Lemma \ref{lm:G_prime_d_QAOA}, $G^{\prime}$ representing $\mathcal{D}$-QAOA ansatz will not be Path/Cycle graphs, which we can use arguments from Case 1 to 4 to prove ma-QAOA based on $G^{\prime}$ can capture all $\mathbb{Z}_{2}$-symmetric bitstring distribution

In all cases, every $\mathbb{Z}_{2}$-symmetric bitstring distribution $p$, representing by $\ket{\psi_{p}} \in \mathcal{H}_{+}$ lies in the reachable orbit of the initial state $\ket{+}^{\otimes n}$. Moreover, by the order of generation argument in control theory \cite{lowenthal1971uniform} invoked in the proof of Lemma~1 of \cite{kazi2025analyzing}, there exists a finite integer $k$ such that every reachable unitary on the Lie Group can be implemented by a $k$-layer free ma-QAOA circuit. In particular, this same depth $k$ works uniformly for all target $\mathbb{Z}_{2}$-symmetric bitstring distribution $p$.

Moreover, based on 
By the previous paragraph, for any $\mathbb Z_2$-symmetric cut distribution $p$ on $\{\pm1\}^n$, there exists a $k$-layer ma-QAOA state $\ket{\psi_p}$ such that its measurement distribution satisfies
\begin{align}
    \tilde{p}(x) =
    \Tr{\ketbra{x}{x}\tilde{\rho}}
    =
    \abs{\braket{x}{\psi_p}}^{2}
    =
    p(x)
\end{align}
for all $x\in\{\pm1\}^n$. Therefore,
\begin{align}
    \|\tilde p-p\|_{\mathrm{TV}}=0.
\end{align}
In fact, the above argument yields the stronger conclusion $\epsilon=0$.
\end{proof}

\section{Properties of QAOA ansatz for Fair Cut Cover}
\label{appendix:ma_QAOA_prop}
Now, let us consider the density matrices coming from multi-angle QAOA \cite{herrman2022multi}, where for the cost Hamiltonian of the considered graph $G(V, E, w)$, each edge has its own parameter
\begin{align*}
    U^{\mathrm{multi}}_C(\gamma_p) &= e^{-i\sum_{(uv) \in E}\gamma_{(uv),p}w_{(uv)}\sigma^{Z}_{u}\sigma^{Z}_{v}}, \\
    U^{\mathrm{multi}}_M(\beta_p) &= e^{-i\sum_{u \in V}\beta_{u, p}\sigma^{x}_{u}}
\end{align*}
The multi-angle QAOA yields the density matrix $\rho$ for the original graph $G$, and denotes $\rho_{H} = \Tr_{V\setminus V_H}{\rho}$ as its partial trace on graph $H$. As we are considering multi-angle QAOA with $k$ layers for original graph $G$, we have the set of all parameters  $\Theta^{G}_{\mathrm{multi}} = \{(\gamma, \beta)^{\abs{E} \times \abs{V} \times k}\}$, which corresponds to the set of all possible states $\mathcal{R}_{p}(G) = \{\rho(\gamma, \beta): (\gamma, \beta) \in \Theta^{G}_{\mathrm{multi}}\}$ with the corresponding partial trace set of states $\mathcal{R}^{G \rightarrow H}_{k}(H) = \Tr_{V\setminus V_H}{[\mathcal{R}_{k}(G)]}$. Let define $F_{G}(\rho) \coloneq \min_{e = (u,v ) \in E}(1 - \Tr[\sigma^{z}_{u}\sigma^{z}_{v}\rho_H])/2$ as the fair-cut cover value of $\rho$ on graph $G$.
\begin{lemma}
    \label{lm:rho_mono}
     Consider any subgraph $H(V_H, E_H)$ of original graph $G(V, E)$ and partial trace $\rho_{H} = \Tr_{V \setminus V_H}{[\rho]} \in \mathcal{R}^{G \rightarrow H}_{k}(H)$. We have the following inequality,
    \begin{align*}
        F_G(\rho) \leq F_{H}(\rho_{H})
    \end{align*}
\end{lemma}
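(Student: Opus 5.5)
The plan is to reduce both sides to the same edge-wise quantities and then compare minima over nested index sets. Here $F_G(\rho)$ is the minimum over edges of $G$ and $F_H(\rho_H)$ is the minimum over edges of $H$. I read the $\rho_H$ in the definition of $F_G$ as $\rho$ itself, since for $H=G$ the partial trace is trivial.

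\textbf{Step 1: edge terms are local and survive the partial trace.} For any edge $e=(u,v)\in E_H$, both endpoints lie in $V_H$. The observable $\sigma^z_u\sigma^z_v$ on the full register therefore factors as $(\sigma^z_u\sigma^z_v)_{V_H}\otimes \mathbb{1}_{V\setminus V_H}$. By the defining property of the partial trace,
\begin{align*}
\Tr[(\sigma^z_u\sigma^z_v\otimes \mathbb{1})\rho]=\Tr[\sigma^z_u\sigma^z_v\,\rho_H].
\end{align*}
Equivalently, marginalizing the measured distribution of $\rho$ onto $V_H$ gives exactly the measurement distribution of $\rho_H$. So the edge-cut probability $(1-\langle Z_uZ_v\rangle)/2$ of every $e\in E_H$ is the same whether it is computed from $\rho$ on $G$ or from $\rho_H$ on $H$.

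\textbf{Step 2: minimum over a larger index set.} Since $E_H\subseteq E$, we have
\begin{align*}
F_G(\rho)=\min_{e\in E}\frac{1-\Tr[\sigma^z_u\sigma^z_v\rho]}{2}\le \min_{e\in E_H}\frac{1-\Tr[\sigma^z_u\sigma^z_v\rho]}{2}=F_H(\rho_H),
\end{align*}
where the final equality is Step 1. If $E_H$ is empty, $F_H$ is a minimum over the empty set and equals $+\infty$ by convention, so the inequality holds trivially. This mirrors the classical argument for monotonicity of $\bar\eta$ and of the SDP, where restricting to a subgraph only removes constraints.

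\textbf{Main obstacle.} There is essentially no analytic obstacle. The only care needed is bookkeeping. First, ensure that $\rho_H$ is interpreted on the qubits indexed by $V_H$, including any ordering or permutation of tensor factors. Second, if $\rho$ comes from a $\mathcal{D}$-QAOA circuit with ancillas, those ancilla qubits should be included in $V\setminus V_H$ and traced out as well. Neither point affects the argument, since ancillas carry no edges of $H$. I would also note that the lemma holds for \emph{any} state $\rho$, not only for those in $\mathcal{R}_k(G)$. The nontrivial content of the later monotonicity corollary lies instead in comparing $\mathcal{R}^{G\to H}_k(H)$ with $\mathcal{R}_k(H)$, which requires Corollary~\ref{thm:sufficient_p_covering_distributions}.
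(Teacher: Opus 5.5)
Your proposal is correct and follows the paper's proof essentially step for step. The paper also uses that $\sigma^z_u\sigma^z_v$ for $(u,v)\in E_H$ acts only on $V_H$, so its expectation is unchanged by tracing out $V\setminus V_H$, and then compares a minimum over $E$ with a minimum over the subset $E_H$. Your extra remarks are sound bookkeeping that the paper leaves implicit: reading $\rho_H$ as $\rho$ in the definition of $F_G$, tracing out any ancillas, the empty-$E_H$ convention, and the observation that the inequality holds for any state.
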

\begin{proof}
    When considering any edge $(uv) \in E_H$, we have $\Tr[\sigma^{z}_{u}\sigma^{z}_{v}\rho_H] = \Tr[\sigma^{z}_{u}\sigma^{z}_{v}\rho]$ (extended by identity) as observables $\sigma^{z}_{u}\sigma^{z}_{v}$ only act non-trivally on vertices $V_H$. Therefore, we have
    \begin{align}
        F_{H}(\rho_{H}) &= \min_{(uv) \in E_H} \frac{1 - \Tr[\sigma^{z}_{u}\sigma^{z}_{v}\rho_H]}{2} \\
        &= \min_{(uv) \in E_H}\frac{1-\Tr[\sigma^{z}_{u}\sigma^{z}_{v}\rho]}{2} 
        \\ &= F_{H}(\rho)
    \end{align}
    Since $E_H \subseteq E$, $\min_{(uv) \in E} (1 - \Tr[\sigma^{z}_{u}\sigma^{z}_{v}\rho])/2 \leq \min_{(uv) \in E_H}(1-\Tr[\sigma^{z}_{u}\sigma^{z}_{v}\rho])/2$ which proves the proposition.
\end{proof}

\begin{proposition}[Corollary \ref{cor:sufficient_p_covering_distributions} restate]
    For any graph \(G\), there exists a \(k^\prime\) such that for every \(k \geq k^\prime\) and every subgraph \(H \subseteq G\) we have
    \begin{align}
        \mathrm{Q}_{k}(G) \leq \mathrm{Q}_{k}(H).
    \end{align}
\end{proposition}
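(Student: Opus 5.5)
The plan is to reduce $\mathrm{Q}_k(G)\le \mathrm{Q}_k(H)$ to two ingredients: Lemma~\ref{lm:rho_mono}, which controls what happens under partial trace, and Corollary~\ref{thm:sufficient_p_covering_distributions}, which says that for large depth \DQAOA{} is universal for $\mathbb{Z}_2$-symmetric cut distributions. The difficulty is that a state produced by the \DQAOA{} circuit of $G$, once traced down to $V_H$, need not be a state produced by the \DQAOA{} circuit of $H$. The two circuits have different generators, and for small $k$ the expressible sets really do differ, as the Discussion notes. We therefore never compare circuits directly. We compare both through the space of distributions: first show that the set of cut distributions on $V_H$ obtainable from the $H$-circuit contains everything obtainable from the $G$-circuit, and then apply Lemma~\ref{lm:rho_mono}.

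\textbf{Choosing the depth.} Since $G$ is finite, it has finitely many subgraphs $H$. For each $H$, and also for $G$ itself, Corollary~\ref{thm:sufficient_p_covering_distributions} with $\epsilon=0$ supplies a depth $k'_H$. The proof of that corollary shows the argument in fact gives exact reachability, including after ancilla qubits are traced out. We set $k'=\max_H k'_H$. Then for every $k\ge k'$ and every $\mathbb{Z}_2$-symmetric distribution $q$ on $\{\pm1\}^{|V_H|}$, some $k$-layer \DQAOA{} state for $H$ has $q$ as its measurement distribution.

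\textbf{Main chain.} Fix $k\ge k'$. Let $\rho$ be a $k$-layer \DQAOA{} state for $G$ that attains $\mathrm{Q}_k(G)$; if the maximum is not attained, take a state within $\delta$ of the supremum and let $\delta\to0$. By Lemma~\ref{lm:rho_mono}, $F_G(\rho)\le F_H(\rho_H)$ with $\rho_H=\Tr_{V\setminus V_H}\rho$.

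Next, observe that $F_H(\rho_H)$ depends only on the diagonal of $\rho_H$, that is, on the marginal cut distribution $q$ of $\rho$ restricted to $V_H$. This distribution $q$ is $\mathbb{Z}_2$-symmetric:
\begin{itemize}
\item $\rho$ lies in $\mathcal{H}_+$ of $X^{\otimes |V|}$, so its full computational-basis distribution is invariant under global flip;
\item marginalizing a flip-invariant distribution to $V_H$ gives a distribution invariant under the global flip on $V_H$.
\end{itemize}
If $G$'s circuit carries ancillas, they are traced out first, and the same reasoning applies.

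By the choice of $k'$, there is a $k$-layer \DQAOA{} state $\sigma$ for $H$ whose measurement distribution is exactly $q$. Hence $F_H(\sigma)=F_H(\rho_H)$, and therefore
\[
\mathrm{Q}_k(G)=F_G(\rho)\le F_H(\sigma)\le \mathrm{Q}_k(H).
\]

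\textbf{Remaining details.} Two edge cases need checking. First, if $H$ has isolated vertices or is disconnected, its \DQAOA{} includes ancillas by Definition~\ref{def:d_QAOA}; universality still holds via Lemma~\ref{lm:G_prime_d_QAOA}, with $q$ realized on the original qubits of $H$. Second, if $H$ has no edges, $\mathrm{Q}_k(H)$ is vacuous or $+\infty$, so the inequality is trivial.

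\textbf{Main obstacle.} The delicate step is ensuring the depth threshold is uniform, and that the corollary really gives exact representation rather than $\epsilon$-approximation for every subgraph. If only $\epsilon$-approximation were available, I would instead use that $F_H$ is $1$-Lipschitz in total variation, which gives $\mathrm{Q}_k(G)\le \mathrm{Q}_k(H)+\epsilon$. I would then need exactness, or compactness of the reachable set together with continuity, to send $\epsilon\to0$.
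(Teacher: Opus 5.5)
Your proposal is correct and follows essentially the same route as the paper: it applies Lemma~\ref{lm:rho_mono} to the partial trace, then uses universality of \DQAOA{} on $H$ (Corollary~\ref{thm:sufficient_p_covering_distributions}) to realize the reduced state's cut distribution by an $H$-circuit state, which yields the chain of suprema. Your version is slightly tighter than the written proof in two places: taking $k'$ as the maximum over the finitely many subgraphs makes the depth uniform in $H$, as the statement's quantifier order requires (the paper's $k'=\max\{k_G,k_H\}$ depends on $H$), and you explicitly check that the marginal on $V_H$ is $\mathbb{Z}_2$-symmetric, which the paper uses only tacitly.
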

\begin{proof}
    For any graph $G$ and subgraph $H \subseteq G$, based on Corollary \ref{thm:sufficient_p_covering_distributions}, there exist $k_G$ layers and $k_H$ layers such that $\mathcal{D}$-QAOA can capture all $\mathbb{Z}_2$-symmetric cut distributions on the graph $G$ and $H$ correspondingly. Let $k^{\prime} = \max\{k_{G}, k_{H}\}$, for any $k \geq k^{\prime}$, then $\forall \rho_{H} \in \mathcal{R}^{G \rightarrow H}_{k}(H)$, there exists $\hat{\rho} \in \mathcal{R}_{k}(H)$ such that
    \begin{align}
        F_{H}(\rho_{H}) = F_{H}(\hat{\rho})
    \end{align}
    Consequently, this gives the monotonicity for $k$-layer $\mathcal{D}$-QAOA
    \begin{align}
         \sup_{\rho \in \mathcal{R}_{k}(G)} F_{G}(\rho) &\leq \sup_{\rho_{H} \in \mathcal{R}^{G \rightarrow H}_{k}} F_{H}(\rho_{H}) \\ &= \sup_{\hat{\rho} \in \mathcal{R}_{k}(H)} F_{H}(\hat{\rho})
    \end{align}
    Which yields $Q_{k}(G) \leq Q_{k}(H)$
\end{proof}

Multi-angle QAOA ansatz solver is pretty interesting, as a different choice of $k$ layers leads to different amounts of expressibility of the circuit. This difference in expressibility can lead to differences in performance and characteristics. For example, when considering random $k$-layer ma-QAOA on $G$, given any subgraph $H \subseteq G$, monotonicity is not assured, or $Q_{k}(G) \leq Q_{k}(H)$ is not guaranteed, because not all reduced density matrix $\rho_{H} \coloneq \Tr_{V\setminus V_H}{\rho}$, where $\rho$ is density matrix yielded by $k$-layer ma-QAOA on original graph $G$, can be obtained from $k$-layer ma-QAOA on original subgraph $H$. Hence, monotonicity can only be obtained when both ansatzs have sufficient number of layers.

    
\begin{proposition}
\label{prop:std_QAOA_triangle_free_regular}
Let $G=(V,E)$ be a triangle-free $\Delta$-regular graph and let $H=(V_H,E_H)$ be any induced subgraph of $G$ ($V_H \subseteq V$ and $E_H \subseteq E$). Then
\[
\mathrm{QAOA}^{\mathrm{std}}_{1}(G) \leq \mathrm{QAOA}^{\mathrm{std}}_{1}(H).
\]
\end{proposition}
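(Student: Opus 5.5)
The plan is to use the closed-form expectation value of single-layer standard QAOA on triangle-free graphs. It lets us compare $G$ and $H$ edge by edge at the \emph{same} angles $(\gamma,\beta)$. For $k=1$ and an edge $(u,v)$ with no common neighbours, the known formula (\cite{wang2018quantum}, the same source used for Eq.~(\ref{eq:p=1_K_n})) gives
\begin{align*}
\frac{1-\langle Z_uZ_v\rangle_{\gamma,\beta}}{2} = \frac12 + \frac14\sin(4\beta)\sin(\gamma)\big(\cos^{d_u-1}(\gamma)+\cos^{d_v-1}(\gamma)\big),
\end{align*}
up to the sign convention of the phase separator. Here $d_u,d_v$ are the degrees in the graph on which the circuit is built. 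The triangle term vanishes, so the value depends only on the endpoint degrees.

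First, triangle-freeness passes to $H$: any subgraph of a triangle-free graph is triangle-free. Every edge of $H$ therefore obeys the same formula, with degrees $d^H_u \le \Delta$ and $d^H_u\ge 1$ for edge endpoints. Isolated vertices of $H$ do not affect the objective. Since $G$ is $\Delta$-regular, all its edges share the value
\begin{align*}
\frac12+\frac12\sin(4\beta)\sin(\gamma)\cos^{\Delta-1}(\gamma),
\end{align*}
and $\mathrm{QAOA}^{\mathrm{std}}_1(G)$ equals the maximum of this expression over $(\gamma,\beta)$. This is consistent with the edge-transitive discussion, but here only regularity is used.

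Second, I would normalize the optimal angles for $G$. The map $\beta\mapsto-\beta$ flips the sign of $\sin 4\beta$. The maps $\gamma\mapsto \pi-\gamma$ and $\gamma\mapsto-\gamma$ change the factor $\sin\gamma\cos^{\Delta-1}\gamma$ only by a sign. Using these, the maximizer can be chosen with $\gamma^*\in(0,\pi/2]$ and $\sin(4\beta^*)\ge 0$, so that the correction term is nonnegative. When $\Delta\ge 2$ the maximum is strictly above $1/2$, which forces $\gamma^*\in(0,\pi/2)$ and $\sin 4\beta^*>0$. The case $\Delta=1$, a matching, is trivial: the value is $1$ for both $G$ and $H$.

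Third, evaluate $H$ at the same $(\gamma^*,\beta^*)$. For $c=\cos\gamma^*\in(0,1)$, the map $d\mapsto c^{d-1}$ is nonincreasing. Hence $d^H_u\le\Delta$ gives $\cos^{d^H_u-1}\gamma^*\ge\cos^{\Delta-1}\gamma^*$, and likewise for $v$. Because the prefactor $\sin(4\beta^*)\sin(\gamma^*)$ is positive, every edge of $H$ has cut probability at least the common edge value of $G$. Taking the minimum over $E_H$ and then maximizing over angles yields $\mathrm{QAOA}^{\mathrm{std}}_1(H)\ge\mathrm{QAOA}^{\mathrm{std}}_1(G)$.

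The main obstacle is the sign and normalization step. One must check that the paper's convention for $\langle Z_uZ_v\rangle$ (note the opposite sign in Eq.~(\ref{eq:p=1_K_n})) does not spoil the positivity of the prefactor. One must also confirm that the $G$-optimum can be placed where $\cos\gamma>0$, since otherwise raising $\cos\gamma$ to a lower power could decrease the term. I would handle this by writing out the symmetry group of the single-layer landscape explicitly before invoking monotonicity in $d$.
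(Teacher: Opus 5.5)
Your proposal is correct and follows essentially the paper's argument. Both use the closed-form single-layer triangle-free expectation. Both evaluate $H$ at a $G$-optimal angle placed in $\gamma\in[0,\pi/2]$ with $\sin 4\beta\ge 0$, and both conclude via the monotonicity of $\cos^{d}\gamma$ in $d$. Your explicit handling of signs via the reflections $\beta\mapsto-\beta$, $\gamma\mapsto-\gamma$, $\gamma\mapsto\pi-\gamma$ is a more careful version of the paper's step of folding $(\lvert\sin\gamma\rvert,\lvert\cos\gamma\rvert)$ into $[0,\pi/2]$, and it makes the result independent of the sign convention for the phase separator.
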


\begin{proof}
Since $G$ is triangle-free and $\Delta$-regular, every edge has the same triplet $(\Delta-1,\Delta-1,0)$, hence for $k=1$ standard QAOA, for all edge $(u, v) \in E$, 
\begin{align}
\frac{1-\langle Z_{u} Z_{v}\rangle}{2}
=\frac{1}{2}+\frac{1}{2}\sin(4\beta)\sin(\gamma)\cos^{\Delta-1}(\gamma),
\end{align}
Therefore, the minimum over edges is redundant and $\mathrm{QAOA}^{\mathrm{std}}_{1}(G)$ is defined by 
\begin{equation}
    \begin{aligned}
    &\max_{\beta,\gamma}\left(\frac12+\frac12\sin(4\beta)\sin(\gamma)\cos^{\Delta-1}(\gamma)\right) \\
    & \qquad =\frac12+\frac12\max_{\gamma\in[0,2\pi]} \abs{\sin(\gamma)\cos^{\Delta-1}(\gamma)}
    \end{aligned}
\end{equation}
Using  $\abs{\sin(\gamma)\cos^{\Delta-1}(\gamma)} =\abs{\sin(\gamma)}\abs{\cos(\gamma)}^{\Delta-1}$ and folding $(\abs{\sin(\gamma)},\abs{\cos(\gamma)})$ to a point $(\sin(\theta),\cos(\theta))$ with $\theta \in [0, \pi/2]$, we obtain
\begin{align}
&\max_{\gamma\in[0,2\pi]} \abs{\sin(\gamma)\cos^{\Delta-1}(\gamma)}
\nonumber 
\\ &  \qquad=
\max_{\theta\in[0,\pi/2]} \sin(\theta)\cos^{\Delta-1}(\theta)
\end{align}

Inside the domain $\theta \in[0,\pi/2]$, $\cos(\theta) \in[0,1]$. For any induced subgraph $H\subseteq G$, we have $\deg_H(v)\le \Delta$, hence, $d_{v}^{H} \coloneq \deg_{H}(v)-1\le \Delta-1$. Since $\cos(\theta)^{d}$ is non-increasing in $d$ for $\cos(\theta) \in [0,1]$, for every edge $(u, v)\in E_H$,
\begin{align}
    \cos(\theta)^{d_{u}^{H}}+\cos(\theta)^{d_{v}^{H}} \geq 2\cos(\theta)^{\Delta-1}
\end{align}
Multiplying by $\sin(\theta)\ge 0$ yields
\begin{align}
\sin(\theta)\left(\cos(\theta)^{d_{u}^{H}}+\cos(\theta)^{d_{v}^{H}}\right) \geq 2\sin(\theta)\cos(\theta)^{\Delta-1}
\end{align}
As any subgraph of a triangle-free graph is triangle-free, taking maxima over $\theta\in[0,\pi/2]$ and adding the shift from the QAOA formula implies
$\mathrm{QAOA}^{\mathrm{std}}_{1}(H) \ge \mathrm{QAOA}^{\mathrm{std}}_{1}(G)$.
\end{proof}
\begin{remark}
    There is no concrete argument for standard QAOA (execpt for the case in Proposition \ref{prop:std_QAOA_triangle_free_regular}) because the difference of dependence of the parameter set results in the set of states for subgraph $H$. It is hard to check the relationship between $\mathcal{R}^{\mathrm{std}}_{p} (H)$ and $\mathcal{R}^{\mathrm{std}, G \rightarrow H}_{p} (H)$. Therefore, the reduced density matrix from original graph $G$,  $\rho^{\mathrm{std}}_{H} \notin \mathcal{R}^{\mathrm{std}}_{p} (H)$ and the comparison can not be made.
\end{remark}

\section{LogSumExp smooth objective function in QAOA}
\label{appendix:LSE}
When considering the Fair Cut Cover objective on the given graph $G(V, E)$, $\bar{\eta}(G) = \max_{p}\min_{e \in E} \mathrm{Pr}_{e}(p)$ is a non-smooth continuous function. For convenience, we first reformulate the objective function as in the dual form using the simplex variable $q \in \Delta(\abs{E})$ (normalized edge weight),
\begin{align}
    \bar{\eta}(G) = \max_{p}\min_{q \in \Delta(\abs{E})} \sum_{e \in E} \mathrm{Pr}_{e}(p)q_{e}
\end{align}
To smooth our objective function, which prevent collapsing the normalized edge weight into one edge, we add the entropy regularization to the objective. We define the following term
\begin{align}
    \phi_{\tau}(p) = \min_{q \in \Delta(\abs{E})}\left[\sum_{e \in E} \mathrm{Pr}_{e}(p)q_{e} + \tau\sum_{e \in E} q_{e}\log(q_{e})\right]
\end{align}
The regularization term ensure stability under sampling, yielding a unique and smooth inner solution. Interestingly, it is known that the optimal value of the entropy-regularized inner problem is exactly the LogSumExp (LSE) min approximation function \cite{boyd2004convex}.
\begin{align}
    \phi_{\tau}(p) = -\tau\log{\left(\sum_{e \in E} \exp{-\frac{\mathrm{Pr}_{e}(p)}{\tau}}\right)}
\end{align}
This justifies the use of LSE to smooth our objective, especially in the QAOA variational scheme
\begin{align}
        &\max_{\theta}\min_{(u,v) \in E} \frac{1-\bra{\psi(\theta)}Z_{u}Z_{v}\ket{\psi(\theta)}}{2} \approx \nonumber 
        \\
        & \min_{\theta} \tau\log{\Bigg(\sum_{(u, v) \in E} \exp{\frac{\bra{\psi(\theta)}Z_{u}Z_{v}\ket{\psi(\theta)} - 1}{2\tau}}\Bigg)} 
\end{align}

\begin{remark}[Gradient of the approximation function]
\label{rm:grad_LSE}
Using the chain rule, the the gradient of the smooth minimum approximation $f(\tau)$ can be expressed as Soft-Max function \cite{gao2017properties}:
\begin{align}
    \frac{\partial f_{\tau}}{\partial \theta_{k}} &= -\frac{1}2\sum_{(u, v) \in E} p_{(u, v)}(\theta) \frac{\partial \langle Z_{u}Z_{v}\rangle_{\theta}}{\partial \theta_{k}} \\
    p_{(u, v)}(\theta) &= \frac{\exp{\langle Z_{u}Z_{v}\rangle_{\theta}/(2\tau)}}{\sum_{(u', v') \in E} \exp{\langle Z_{u}Z_{v}\rangle_{\theta}/(2\tau)}}
\end{align}
This smoothed minimum-approximation gradient can also be computed on a quantum device using methods such as the parameter-shift rule \cite{wierichs2022general}.
\end{remark}

\section{Variance of the gradient of our objective}
\label{appendix:var_grad_obj}
Given our current objective of the parameterized quantum circuit as follows
\begin{align*}
     \max_{\theta} f(\theta) = \max_{\theta}\min_{e\in E} \frac{1 - \bra{\psi(\theta)}Z_{u}Z_{v}\ket{\psi(\theta)}}{2}
\end{align*}
Let denotes $x_{e}(\theta) \coloneq  \bra{\psi(\theta)}Z_{u}Z_{v}\ket{\psi(\theta)}$
\paragraph{LogSumExp smooth objective:}
Considering the smooth objective given $\tau > 0$, where the objective becomes 
\begin{align*}
    f_{\tau}(\theta) = 
    -\tau\log{\Bigg(\sum_{e \in E} \exp{-\frac{1-x_{e}(\theta)}{2\tau}}\Bigg)} 
\end{align*}
Following Remark \ref{rm:grad_LSE}, we have the partial derivative of the objective with respect to the $\vartheta$-th parameter $\theta_{\vartheta}$, in $\theta$
\begin{align*}
    \partial_{\vartheta} f_{\tau}(\theta) =  -\frac{1}{2} \sum_{e \in E} p_{e}(\theta) \partial_{\vartheta} x_{e}(\theta)
\end{align*}
where $p_{e}(\theta) = \frac{\exp{x_{e}(\theta)/(2\tau)}}{\sum_{e' \in E} \exp{x_{e'}(\theta)/(2\tau)}}$. From this, we first try to give the upper bound for the variance of $f_{\tau}(\theta)$
\begin{align}
    \mathrm{Var}\left[\partial_{\vartheta} f_{\tau}(\theta)\right] = \mathbb{E}\left[\left(\partial_{\vartheta} f_{\tau}(\theta)\right)^2\right] - \mathbb{E}\left[\partial_{\vartheta} f_{\tau}(\theta)\right]^2
\end{align}
As $p_{e}$ is a probability vector, using Jensen's inequality, we have the following
\begin{equation}
    \begin{aligned}
        \left(\sum_{e \in E} p_{e}(\theta) \partial_{\vartheta} x_{e}(\theta)\right)^{2} &\leq \sum_{e \in E} p_{e}(\theta) \left(\partial_{\vartheta} x_{e}(\theta)\right)^2 \\
        &\leq \sum_{e \in E} \left(\partial_{\vartheta} x_{e}(\theta)\right)^2
    \end{aligned}
\end{equation}
Or, $4\left(\partial_{\vartheta} f_{\tau}(\theta)\right)^{2} \leq \sum_{e \in E} \left(\partial_{\vartheta} x_{e}(\theta)\right)^2$. Taking the expectation value of both sides gives the following.
\begin{align}
    4\mathbb{E}\left[\left(\partial_{\vartheta} f_{\tau}(\theta)\right)^2\right] \leq \sum_{e \in E} \mathbb{E}\left[ \left(\partial_{\vartheta} x_{e}(\theta)\right)^2\right]
\end{align}
Based on \cite{larocca2022diagnosing, kazi2025analyzing}, let $d = 2^{n}$ where $n$ is the number of nodes, and assume that ma-QAOA has enough layers such that the distribution of unitaries is an $\varepsilon$-approximate unitary 2-design, we know that 
\begin{align}
    &\mathbb{E}\left[\sum_{e \in E} \partial_{\vartheta} x_{e}(\theta)\right] = 0 \\
    &\mathrm{Var}\left[\sum_{e \in E} \partial_{\vartheta} x_{e}(\theta)\right] = \frac{4d^{2}\abs{E}}{(d^{2}-4)(d+2)}
\end{align}
which gives 
\begin{align}
  \mathbb{E}\left[\left(\partial_{\vartheta} f_{\tau}(\theta)\right)^2\right] &\leq \frac{1}{4}\sum_{e \in E} \mathbb{E}\left[ \left(\partial_{\vartheta} x_{e}(\theta)\right)^2\right] \\
  &= \frac{d^{2}\abs{E}}{(d^{2}-4)(d+2)}
\end{align}
Now, we turn our attention to the mean of the gradient of $f_{\tau}(\theta)$. From the proof of Theorem 2 in \cite{larocca2022diagnosing}, to traceless observable $Z_{u}Z_{v}$ (also commuting with $X^{\otimes n}$) on $\mathcal{H}_{+}$ symmetry block gives  $\mathbb{E}\left[\partial_{\vartheta} x_{e}(\theta)\right] = 0 $. Therefore, the mean of the gradient can be expressed as 
\begin{equation}
  \begin{aligned}
    \mathbb{E}\left[\partial_{\vartheta} f_{\tau}(\theta)\right] &=  \frac{1}{2} \sum_{e\in E} \mathrm{Cov}\left(p_{e}(\theta), \partial_{\vartheta} x_{e}(\theta)\right)
\end{aligned}  
\end{equation}
This covariance term is tricky, and there is not much we can say given a unitary 2-design. Except when considering the full Haar measure, this covariance yields $0$. Therefore, we can at least state that ``we have a biased gradient in the landscape'', and the variance is bounded by 
\begin{align}
    \mathrm{Var}\left[\partial_{\vartheta} f_{\tau}(\theta)\right] \leq  \frac{d^{2}\abs{E}}{(d^{2}-4)(d+2)} 
\end{align}
Or $\mathrm{Var}\left[\partial_{\vartheta} f_{\tau}(\theta)\right] = \mathcal{O}(n^{2}/2^{n})$ similar to the analysis of linear objective in \cite{kazi2025analyzing}.

\paragraph{Min objective} 
Given that our objective is the minimum over all edges, which is 
\begin{align}
    f(\theta) = \min_{e\in E} \frac{1 - \bra{\psi(\theta)}Z_{u}Z_{v}\ket{\psi(\theta)}}{2}
\end{align} 
Function $f(\theta) = \min_{e \in E} (1 - x_{e}(\theta))/2$ is a non-smooth function where these non-smooth points are also non-differentiable (i.e, there exists  $theta^{\prime}$ such that $f(\theta^{\prime}) = (1 - x_{e_{i}}(\theta^{\prime}))/2 = (1 - x_{e_{j}}(\theta^{\prime}))/2$). Since $x_{e}(\theta)$ is a smooth function, $f(\theta)$ is the maximum of finitely many smooth functions up to an affine transformation, which makes it locally Lipschitz (almost differentiable anywhere).
Specifically, when point $\theta^{\prime\prime}$ gives a unique edge, its gradient becomes $-\frac{1}{2}\nabla x_{e}(\theta^{\prime\prime})$ \cite{clarke1990optimization}. Based on this, we can bound the variance of those subgradients where these points give a unique edge. Considering $\vartheta$-th parameter $\theta_{\vartheta}$ in $\theta$,
\begin{align*}
    \mathrm{Var}\left[\partial^{C}_{\vartheta}f(\theta)\right] &= \frac{1}{4}\left(\mathbb{E}\left[\left(\partial_{\vartheta} x_{e}(\theta)\right)^{2}\right] - \mathbb{E}\left[\partial_{\vartheta} x_{e}(\theta)\right]^{2}\right) \\
    &= \frac{d^{2}}{(d^{2}-4)(d+2)}
\end{align*}
\end{document}